\documentclass[sigconf,screen,nonacm]{acmart}
\AtBeginDocument{%
  }

\acmYear{2026}\copyrightyear{2026}
\setcctype[4.0]{by}
\acmConference[SPAA '26]{38th ACM Symposium on Parallelism in Algorithms and Architectures}{July 6--10, 2026}{London, United Kingdom}
\acmBooktitle{38th ACM Symposium on Parallelism in Algorithms and Architectures (SPAA '26), July 6--10, 2026, London, United Kingdom}
\acmDOI{10.1145/3816782.3819218}
\acmISBN{979-8-4007-2761-0/26/07}
\usepackage{mathtools}
\usepackage{algorithm}
\usepackage{algpseudocode}
\algblockdefx{MRepeat}{EndRepeat}{\textbf{repeat}}{}
\algnotext{EndRepeat}
\usepackage{nicefrac}
\usepackage{tabularx}
\usepackage{makecell}
\usepackage{subcaption}
\usepackage{xspace}
\usepackage[capitalize]{cleveref}
\usepackage{standalone}
\usepackage{tikz}
\usetikzlibrary{calc}
\usetikzlibrary{decorations.pathreplacing}

\usepackage{thmtools, thm-restate} % restatable

\definecolor{blue}{HTML}{648FFF}
\definecolor{red}{HTML}{DC267F}
\definecolor{yellow}{HTML}{FFB000}
\definecolor{gapcolor}{HTML}{800000}

\newsavebox{\bigimage}
\newsavebox{\bigimageTwo}

\tikzset{font=\fontsize{22}{0}\selectfont\bfseries\boldmath}

\newcommand{\case}[1]{\noindent\textbf{Case #1:}}

\newenvironment{subproof}[1][\proofname]{%
  \begin{proof}[#1]%
    }{%
  \end{proof}%
}

\AtEndPreamble{%
  \theoremstyle{acmplain}
  \newtheorem{observation}[theorem]{Observation}
  \newtheorem{remark}[theorem]{Remark}
  \newtheorem{claim}[theorem]{Claim}
  \theoremstyle{acmdefinition}
  \newtheorem{properties}[theorem]{Properties}
}

\Crefname{observation}{Observation}{Observations}
\Crefname{remark}{Remark}{Remarks}
\Crefname{claim}{Claim}{Claims}
\Crefname{properties}{Properties}{Properties}

\newcommand{\opt}{\textnormal{\texttt{OPT}}}
\newcommand{\LS}{\textsc{GreedyScheduling}\xspace}
\newcommand{\PTS}{\textsc{Parallel Task Scheduling}\xspace}
\newcommand{\MCS}{\textsc{Multiple Cluster Scheduling}\xspace}

\newcommand{\calJ}{\mathcal{J}}

\newcommand{\pmax}{p_{\max}}

\newcommand{\calS}{\mathcal{S}}
\newcommand{\calM}{\mathcal{M}}
\newcommand{\calB}{\mathcal{B}}
\newcommand{\calU}{\mathcal{U}}
\newcommand{\calT}{\mathcal{T}}
\newcommand{\calR}{\mathcal{R}}
\newcommand{\deficit}{\text{deficit}}

\newcommand{\stackrectright}[2]{%
    \path (\xright, \ycur) coordinate (start)
    ++(-#1, 0) coordinate (leftbase)
    ++(0, #2) coordinate (lefttop)
    ++(#1, 0) coordinate (righttop);
    \filldraw (start) -- (leftbase) -- (lefttop) -- (righttop) -- cycle;
    \pgfmathsetmacro{\ycur}{\ycur + #2}
}

\newcommand{\stackrectleft}[3]{% {width} {height} {direction}
    \ifthenelse{\equal{#3}{up}}{%
        \path (\xleft, \ycur) coordinate (start)
        ++(#1, 0) coordinate (rightbase)
        ++(0, #2) coordinate (righttop)
        ++(-#1, 0) coordinate (lefttop);
        \filldraw (start) -- (rightbase) -- (righttop) -- (lefttop) -- cycle;
        \pgfmathsetmacro{\ycur}{\ycur + #2}
    }{%
        \path (\xleft, \ycur) coordinate (start)
        ++(#1, 0) coordinate (rightbase)
        ++(0, -#2) coordinate (rightbottom)
        ++(-#1, 0) coordinate (leftbottom);
        \filldraw (start) -- (rightbase) -- (rightbottom) -- (leftbottom) -- cycle;
        \pgfmathsetmacro{\ycur}{\ycur - #2}
    }
}

\def\jobSA{{3.1}{0.5}}
\def\jobSB{{3.1}{0.5}}
\def\jobSC{{3.0}{0.5}}
\def\jobSD{{2.8}{0.5}}
\def\jobSE{{2.7}{0.5}}
\def\jobSF{{2.5}{0.5}}
\def\jobMA{{4.5}{1}}
\def\jobMB{{4.5}{0.5}}
\def\jobMC{{4.1}{0.6}}
\def\jobMD{{4}{0.3}}
\def\jobME{{3.5}{1}}
\def\jobMF{{3.4}{0.5}}

\def\jobBA{{7.4}{1}}
\def\jobBB{{7}{0.5}}
\def\jobBC{{6.5}{1}}
\def\jobBD{{5.9}{0.7}}
\def\jobBE{{5.8}{1}}
\def\jobMG{{3.9}{1}}
\def\jobSG{{2.7}{0.5}}

\begin{document}

\title{Improved Approximation Algorithms for Parallel Task Scheduling and Multiple Cluster Scheduling}
% typeset the header of the contribution

%%
%% The "author" command and its associated commands are used to define
%% the authors and their affiliations.
%% Of note is the shared affiliation of the first two authors, and the
%% "authornote" and "authornotemark" commands
%% used to denote shared contribution to the research.
\author{Bennet Edler}
\orcid{0009-0002-5067-6781}
\email{stu235896@mail.uni-kiel.de}
\affiliation{%
    \institution{Kiel University}
    \city{Kiel}
    \country{Germany}
}
%\orcid{1234-5678-9012}
\author{Klaus Jansen}
\orcid{0000-0001-8358-6796}
\email{kj@informatik.uni-kiel.de}
\affiliation{%
    \institution{Kiel University}
    \city{Kiel}
    \country{Germany}
}
\author{Felix Ohnesorge}
\orcid{0009-0003-8023-3380}
\email{foh@informatik.uni-kiel.de}
\affiliation{%
    \institution{Kiel University}
    \city{Kiel}
    \country{Germany}
}
\author{Lis Pirotton}
\orcid{0009-0001-4984-3696}
\email{lpi@informatik.uni-kiel.de}
\affiliation{%
    \institution{Kiel University}
    \city{Kiel}
    \country{Germany}
}

% \author{Lars Th{\o}rv{\"a}ld}
% \affiliation{%
% \institution{The Th{\o}rv{\"a}ld Group}
% \city{Hekla}
% \country{Iceland}}
% \email{larst@affiliation.org}

%%
%% By default, the full list of authors will be used in the page
%% headers. Often, this list is too long, and will overlap
%% other information printed in the page headers. This command allows
%% the author to define a more concise list
%% of authors' names for this purpose.
% \renewcommand{\shortauthors}{Trovato et al.}

\begin{abstract}
    In the problem of \PTS (PTS), we are asked to schedule \(n\) jobs, each with a fixed processing time and machine requirement,
    such that the completion time of the last job is minimized.
    Jansen and Rau (2019) presented an algorithm for PTS that achieves an approximation ratio of \((3/2)\opt + p_{\max}\).
    They additionally posed the open question whether an approximation ratio of \((4/3)\opt + \pmax\) is possible.
    In this work, we present such an algorithm with a running time of \(O(n\log n)\).

    The problem of \MCS (MCS) is a natural extension of PTS where we are given \(N\) clusters each of \(m\) machines to schedule jobs.
    Jansen and Rau (2019) adapted their PTS algorithm to MCS with the following results:
    (1) a 2 approximation, and (2) a near-linear 9/4 approximation if \(N\) is divisible by 3.
    We improve the running time of their 2-approximation and generalize the 9/4 approximation to the general case.
    The 2-approximation for MCS is tight, since one cannot hope for an approximation ratio better than 2, unless P=NP [Zhuk, 2006].
    In addition to our theoretical results, we implement our algorithm and show its practical applicability.
    %\keywords{Parallel Scheduling \and Approximation Algorithms \and Cluster Scheduling.}
\end{abstract}

%%
%% The code below is generated by the tool at http://dl.acm.org/ccs.cfm.
%% Please copy and paste the code instead of the example below.
%%
% TODO: Gernerate Classifications
\begin{CCSXML}
    <ccs2012>
    <concept>
    <concept_id>10003752.10003809.10003636.10003808</concept_id>
    <concept_desc>Theory of computation~Scheduling algorithms</concept_desc>
    <concept_significance>500</concept_significance>
    </concept>
    <concept>
    <concept_id>10003752.10003809.10003636</concept_id>
    <concept_desc>Theory of computation~Approximation algorithms analysis</concept_desc>
    <concept_significance>500</concept_significance>
    </concept>
    </ccs2012>
\end{CCSXML}

\ccsdesc[500]{Theory of computation~Approximation algorithms analysis}
\ccsdesc[500]{Theory of computation~Scheduling algorithms}
% \ccsdesc[300]{Theory of computation~Streaming, sublinear and near linear time algorithms}
%%
%% Keywords. The author(s) should pick words that accurately describe
%% the work being presented. Separate the keywords with commas.
\keywords{Scheduling, near linear time algorithm}
%% A "teaser" image appears between the author and affiliation
%% information and the body of the document, and typically spans the
%% page.
% \begin{teaserfigure}
%     \includegraphics[width=\textwidth]{sampleteaser}
%     \caption{Seattle Mariners at Spring Training, 2010.}
%     \Description{Enjoying the baseball game from the third-base
%         seats. Ichiro Suzuki preparing to bat.}
%     \label{fig:teaser}
% \end{teaserfigure}

% \received{20 February 2007}
% \received[revised]{12 March 2009}
% \received[accepted]{5 June 2009}

%%
%% This command processes the author and affiliation and title
%% information and builds the first part of the formatted document.
\maketitle

\section{Introduction}
In this paper, we consider the problem of \PTS (PTS) where one is given a set $\calJ = \{1, \dots, n\}$ of jobs and \(m\) identical machines.
Each job $j \in \calJ$ is specified by a processing time $p_j \in \mathbb{Q}_+$ and a machine requirement $q_j \in \mathbb{N}_{\le m}$.
All jobs shall be executed non-preemptively and not necessarily contiguously on $m$ identical machines.
A solution (schedule) is a function $\sigma:\calJ \mapsto \mathbb{Q}_+$, assigning starting times to each job.
Feasible schedules are required to satisfy the resource constraint, that is,
at any time $\tau \in \mathbb{Q}_+$ the total number of machines allocated to concurrently running jobs does not exceed $m$:
\begin{align*}
    \sum_{\substack{j\in \calJ \\ \sigma(j)\le \tau < \sigma(j)+p_j}} q_j \le m.
\end{align*}
The objective is to find a schedule with minimal height (makespan), defined as $h(\sigma) = \max_{j\in \calJ} (\sigma(j) + p_j)$.

The PTS problem arises in a variety of applications, including batch processing, multiprocessor job scheduling
and resource allocation in high-performance and cloud computing systems.
From a theoretic point of view, the problem is known to be strongly $NP$-hard,
even when the number of machines is a small constant ($m \ge 4$)~\cite{DBLP:journals/siamdm/DuL89a,HenningJRS17}.
As exact polynomial-time algorithms are therefore unlikely to exist, the focus of research has shifted toward the development of approximation algorithms.
Those approaches aim to compute feasible schedules with bounded makespan relative to the optimum.
More precisely, an algorithm is said to be an \(\alpha\)-approximation algorithm with $\alpha \geq 1$ if for every instance $I$, the height $h(A(I))$ of the resulting schedule satisfies $h(A(I)) \le \alpha \opt(I)$, where $\opt(I)$ denotes the optimal makespan for instance $I$.
The approximation ratio may consist of both a multiplicative $\alpha \geq 1$ and an additive $\beta \geq 0$ term.
Then the height $A(I)$ satisfies $A(I) \le \alpha \opt(I)+\beta$ for all instances $I$.

A natural generalization of PTS is the \MCS (MCS) problem.
Here, the computational resources are distributed across $N \ge 1$ disjoint clusters, each consisting of $m$ identical machines.
The input remains a set of parallel jobs, but the scheduler must now determine both the assignment of jobs to clusters and a feasible schedule within each cluster.
The goal is to minimize the global makespan, i.e., the maximum completion time of any job across all clusters.
The classical PTS problem corresponds to the special case $N=1$, implying that MCS inherits the computational hardness of PTS.
The MCS model captures many computing environments where resources are physically or logically distributed, such as cloud architectures.

\subsection{Related work}
The problem of \PTS, commonly denoted as \(P|\text{size}_j|C_{\max}\) for arbitrarily many machines
and \(Pm|\text{size}_j|C_{\max}\) when considering a constant number of machines, has been studied extensively due to its theoretical complexity
and practical relevance.
The structure and complexity of PTS problems were studied in foundational work by Du and Leung~\cite{DBLP:journals/siamdm/DuL89a}.
For \(Pm|\text{size}_j|C_{\max}\), they proved strong NP-hardness when $m\ge 5$ while the problem can be solved in pseudo-polynomial time when $m\le 3$.
The case $m=4$ remained open for many years until Henning, Jansen, Rau and Schmarje~\cite{HenningJRS17} proved it to be strongly NP-complete in 2017.

A polynomial-time approximation scheme (PTAS) is an algorithm that, for any fixed $\varepsilon > 0$, computes a solution that is within a factor of $(1+\varepsilon)$ of the optimal solution, and runs in polynomial time regarding the input size.
Jansen and Porkolab~\cite{JansenP99} proposed a PTAS for PTS with a constant number of machines \(m\).
In 2008, Jansen and Thöle~\cite{JansenT08} gave a PTAS where the number of machines $m$ is polynomial in the number of jobs $n$,
i.e., $m \le \text{poly}(n)$.
Recently, Jansen and Rau~\cite{JansenR19} presented an AEPTAS with running time $O(n)\cdot f(\frac{1}{\varepsilon})$
and approximation ratio $(1+\varepsilon)\opt+\pmax$; however, the function \(f\) is triple-exponential, a point we will argue in more detail later. %, here the \(O_\varepsilon(1)\) hides large constants, which make the algorithm impractical.\todo{das umformulieren.}
If $m$ is arbitrarily large, there is no approximation algorithm with an approximation ratio \((1.5 - \delta)\opt\) for any \(\delta > 0\)~\cite{drozdowski1995no32approx,Johannes06}, also no asymptotic approximation algorithm exists with performance guarantee \(A(I) \leq \alpha \opt(I) + \beta \), for \(\alpha < 1.5\) and \(\beta \) being a polynomial in \(n\)~\cite{Johannes06}.

A list-based 2-approximation for the resource-constrained scheduling problem, which generalizes PTS,
was first proposed by Garey and Graham~\cite{DBLP:journals/siamcomp/GareyG75}.
This framework was applied to PTS by Ludwig and
Tiwari~\cite{DBLP:conf/soda/LudwigT94} and Turek, Wolf, and Yu~\cite{DBLP:conf/spaa/TurekWY92}.
In 2012, Jansen~\cite{Jansen12} presented an algorithm with approximation ratio $(\frac{3}{2}+\varepsilon)\opt$ and running time
$O(n\log n) + f(\frac{1}{\varepsilon})$.
A particular focus is on the work of Jansen and Rau~\cite{JansenR19}, who developed a linear-time algorithm with approximation ratio
$\frac{3}{2}\opt+\pmax$, and additionally posed the open question of whether a fast $\frac{4}{3}\opt+\pmax$-approximation algorithm exists
-- a question we answer in this work.

PTS is a generalization of the well-known \textsc{Bin Packing} Problem where one is given a set of items with sizes at most 1
and the goal is to pack all items into a minimum number of bins of capacity 1.
If all processing times in PTS are equal to 1, the problem reduces to \textsc{Bin Packing}.
Over the years, numerous approximation algorithms have been developed for \textsc{Bin Packing}
(see e.g.,~\cite{VegaL81,KarmarkarK82,MARTEL1985189,BekesiGK00,BerghammerR03,coffman2013bin}).

For the \MCS problem, there does not exist an algorithm with approximation ratio better than 2, unless $P = NP$,
as proven by Zhuk~\cite{zhuk2006approximate}.
In response to this hardness result, Ye, Han and Zhang~\cite{YeHZ11} developed a $(2+\varepsilon)$-approximation algorithm.
This was further improved by Jansen and Trystram~\cite{DBLP:journals/endm/JansenT16}, who give a 2-approximation with a worst case running time of \(\Omega(n^{256})\).

\begin{table}[htbp]
    \centering
    \setlength{\tabcolsep}{7pt}

    \begin{tabular}{lll}
        \toprule
        \textbf{Ratio}                   & \textbf{Runtime}                                   & \textbf{Source}                         \\
        \midrule
        \(5/2\)                  & \(O(n(N + \log(n)) \log(np_{\max}))\) & \cite{DBLP:conf/ifipTCS/BougeretDJOT10} \\
        \(7/3\)                  & \(O(\log{(np_{\max})}N(n + \log{(n)}))\)           & \cite{DBLP:journals/tcs/BougeretDTJR15} \\
        \(2\)                            & \(\Omega(n^{256})\)                                & \cite{DBLP:journals/endm/JansenT16}     \\
        \(2\)                            & \(O(n\log n) + \Omega(1.9 \cdot 10^{2184})\)             & \cite{JansenR19}                        \\
        \(9/4 + O(1/N)\) & \(O(n\log n)\)                                     & \cite{JansenR19}                        \\
        \(2 + O(1/N)\)           & \(O(n\log n)\)                                     & This work                               \\
        \(9/4\)                  & \(O(n\log n)\)                                     & This work                               \\
        \bottomrule
    \end{tabular}
    \caption{Algorithms for MCS. Asymptotic ratios are denoted using \(O(1/N)\), indicating convergence to the dominant constant as the number of clusters \(N\) grows.}
    \label{table:related_work}
\end{table}

As the running time of the previously mentioned approaches is impractically large, research has focused on reducing the running time of approximation algorithms for MCS,
often at the cost of slightly worse approximation ratios.
For example, Bougeret, Dutot, Jansen, Otte and Trystram~\cite{DBLP:conf/ifipTCS/BougeretDJOT10} proposed a 2.5-approximation algorithm with improved efficiency,
achieving a running time of \(O(n(N + \log(n)) \log(np_{\max}))\).
Building on this, Bougeret, Dutot, Jansen, Robenek and Trystram~\cite{DBLP:journals/tcs/BougeretDTJR15} introduced a $\frac{7}{3}$-\allowbreak ap\-proxi\-ma\-tion algorithm with running time \(O(\log{(np_{\max})}N(n + \log{(n)}))\).
They also provided a fast algorithm with an approximation ratio of \(2\) for instances where all jobs require fewer than \(m/2\) machines.
More recently, Jansen and Rau~\cite{JansenR19} presented an $O(n)$-algorithm achieving a 2-approximation under the condition that the number of clusters $N$
is strictly larger than 2. If $N\in\{1,2\}$, the running time increases to $O(n\log n)$.
While theoretically appealing, the algorithm uses their AEPTAS.
Therefore, the \(O\)-notation again hides large constants that limit practical applicability.
Motivated by this, they additionally provide a 9/4 approximation which is asymptotic without these large constants.
Note that by asymptotic we mean that the ratio holds for $N\to \infty$.
For the case of $N\equiv0\mod3$, this algorithm achieves an absolute ratio of 9/4.

\subsection{Our contribution}
In this work, we focus on practically efficient algorithms that avoid large hidden constants and deliver fast execution times in practice.
Our first contribution is a new approximation algorithm for PTS.
\begin{restatable}{theorem}{pts}
    \label{thm:pts}
    There exists a practically efficient algorithm for PTS with an approximation ratio of \(\frac{4}{3}\opt + \pmax \) and runtime \(O(n \log n)\).
\end{restatable}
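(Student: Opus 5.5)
We follow the dual-approximation framework with a lower bound instead of the true optimum. Let \(W=\sum_j p_jq_j\) be the total work and set \(T=\max\{\pmax, W/m, L\}\), where \(L\) is a combinatorial lower bound such as the total processing time of jobs with \(q_j>m/2\), refined by jobs with \(q_j>m/3\) that cannot run pairwise in parallel. We call jobs with \(q_j>m/2\) \emph{big}, jobs with \(m/3<q_j\le m/2\) \emph{medium}, and the rest \emph{small}. Since \(T\le\opt\), it suffices to build a schedule of height at most \(\frac43T+\pmax\). All classification and sorting steps take \(O(n\log n)\), and every later phase is a linear scan over sorted lists. The total running time is therefore \(O(n\log n)\).

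The schedule is built in three phases, in the spirit of Jansen and Rau's \((3/2)\opt+\pmax\) algorithm.

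\textbf{Phase 1 (big jobs).} Stack all big jobs at the bottom, sorted by non-increasing \(q_j\). This takes time \(P_B\le\opt\). Next to each big job there is a free strip of width \(m-q_j\). We fill these strips with medium and small jobs in a staircase fashion: sorted by non-increasing width, placed greedily from the top of the big stack downward, as suggested by the \verb|stackrect| constructions.

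\textbf{Phase 2 (medium jobs).} Pair the remaining medium jobs two per layer, sorted by processing time. Each such layer uses more than \(2m/3\) machines.

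\textbf{Phase 3 (small jobs).} Schedule the remaining small jobs with greedy list scheduling (\LS) on top. Any moment at which a small job could not start has more than \(2m/3\) machines busy.

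The height bound comes from a work argument. Let \(t^\*\) be the start time of the job that finishes last. Since that job has \(p_j\le\pmax\), it suffices to show \(t^\*\le\frac43T\). To show this, we argue that, apart from an exceptional region of height at most \(T/3\), the schedule below \(t^\*\) has utilization at least \(3m/4\). This gives \(t^\*\le \frac{W}{3m/4}\le\frac43T\).

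The exceptional region is handled by case analysis.

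\textbf{Case \(P_B\) small.} Suppose the stack of big jobs has height at most about \(T/3\), or the big jobs are wide, say more than \(3m/4\). Then the region above the stack has density greater than \(2m/3\) from Phases 2 and 3. The deficit relative to \(3m/4\) is compensated by the big-job region's surplus, together with the bound \(P_B+P_M/2\le \opt\).

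\textbf{Case \(P_B\) large.} Here we use that in \(\opt\), medium and small jobs fitting beside a big job of width \(q\) occupy at most \(m-q\) machines. The staircase filling therefore places at least as much of this work beside the big jobs as \(\opt\) does, up to one job per step. Each such loss costs at most \(\pmax\) in height or \(T/3\) in area.

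The main obstacle is Phase 1. We must prove that the greedy staircase filling of gaps next to big jobs is competitive with the optimum, losing at most an additive \(\pmax\) and a factor \(\frac43\). The gaps have varying widths, and medium jobs of width near \(m/2\) interact badly with big jobs of width near \(m/2\). The first attempt will be an exchange argument: sort big jobs by width, place the widest medium and small jobs first, and charge each unfilled gap region to a job of height at most \(\pmax\) that overlapped the boundary. If this fails for medium jobs, the fallback is to additionally split the big jobs at width \(2m/3\) and treat the interval \((m/2,2m/3]\) separately. Such big jobs can share time only with small jobs, which ensures density at least \(2m/3\) there.
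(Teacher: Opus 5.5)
\textbf{There is a genuine gap.} The step you yourself call the main obstacle is that the greedy staircase beside the big jobs is competitive with \opt{}. You give only a first attempt and a fallback for it, not a proof. The heuristic you state, matching \opt{} ``up to one job per step'', loses up to one job per big job. That loss grows with the number of big jobs, not with a single $\pmax$.

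The surrounding accounting also fails:
\begin{itemize}
\item \emph{The density argument is internally inconsistent.} Utilization at least $3m/4$ outside a region of height at most $T/3$ only gives $t^*\le\frac43T+\frac13T$, not $t^*\le W/(3m/4)$.
\item \emph{The premise is false for your own schedule.} With $2k$ medium jobs of width $0.34m$ and unit height, Phase~2 builds $k$ layers of utilization $0.68m$. So the region below $3m/4$ has height about $T$, not $T/3$. Such regions cannot be paid for by total work. They need the fact that at most two jobs of width greater than $m/3$ run at once.
\item \emph{The bound you invoke for this is false.} A big job of width $0.55m$ and a medium job of width $0.4m$ fit together. So $k$ unit-height jobs of each give $\opt=k$, but $P_B+P_M/2=1.5k$. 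Only $\opt\ge\max\{P_B,(P_B+P_M)/2\}$ holds.
\item \emph{There is no general ``surplus'' in the big-job region.} Big jobs of width in $(m/2,3m/4)$ are themselves below $3m/4$.
\item \emph{The target is too strict.} Proved by a work argument, $t^*\le\frac43T$ requires zero net deficit below $t^*$. A workable accounting must tolerate some deficit; the paper allows $\frac34\pmax$ in total.
\end{itemize}

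\textbf{What the paper does instead.} It never compares the staircase with \opt{}. Widths below are normalized to $1$.
\begin{itemize}
\item[(i)] \emph{Instance reduction.} This forces the sparse interval $I_1$ beside the big stack to be shorter than $\pmax$. While $|I_1|\ge\pmax$, one of two things holds. Either the widest job $j_+$ in $\sigma_1$ cannot share time with any job; then deleting it lowers \opt{} by exactly $p_{j_+}$, and re-adding it on its own preserves the ratio. Or the thinnest job $j_-$ in $\sigma_1$ can be deleted without altering the greedy placement, and reinserted afterwards.
\item[(ii)] \emph{Pairing bound.} Ignoring small jobs, two jobs of width greater than $\frac13$ run at all times outside $I_1$ and the stack difference $I_2'$. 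This gives $h(\sigma_1)+h(\sigma_2')\le\opt+\frac{|I_1|+|I_2'|}{2}$.
\item[(iii)] \emph{Case split on small jobs.} Consider the small jobs placed from the level of $I_2'$ upward. If their total processing time is at most $\opt-\frac{|I_1|+|I_2'|}{2}$, the height is bounded directly. Otherwise their residues $q_j-\frac14$ cover the deficit of the interval $I_2$, whose utilization lies in $(\frac23,\frac34)$.
\item[(iv)] \emph{Deficit budget and interleaving.} Everything is measured as area deficit below $\frac34$, with budget $\frac34\pmax$. To stay within it, the top interval of $\sigma_2$ (utilization at most $\frac13$) is interleaved with the interval of $\sigma_1^T$ (utilization in $(\frac12,\frac23]$).
\item[(v)] \emph{Tiny jobs.} Jobs of width at most $\frac14$ get a separate treatment, because the reduction in (i) fails for them.
\end{itemize}
None of these ingredients is in your proposal. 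Together they are what replaces the exchange argument you leave open.
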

This result advances the state-of-the-art in multiple ways.
Compared to the fast \(\frac{3}{2}\opt + \pmax\) algorithm of Jansen and Rau~\cite{JansenR19},
we provide an improved approximation guarantee and resolve their open question regarding the existence of such an algorithm.
Moreover, in comparison to the AEPTAS in~\cite{JansenR19}, with approximation ratio of \((1 + \varepsilon)\opt +\pmax\) we reduce the running time significantly.
In particular, their running time of \(O(n \log(\frac{1}{\varepsilon})) + 1/\varepsilon^{1/\varepsilon^{O(1/\varepsilon)}}\) resolves to \(O(n) + \Omega(7.5 \cdot 10^{12})\), for \(\varepsilon = \frac{1}{3}\).
Additionally, we extend our result for PTS to MCS using techniques similar to~\cite{JansenR19} and provide an algorithm with an asymptotic approximation ratio.
By that we mean that we approach the ratio when $N$ increases.
\begin{restatable}{theorem}{mcsAsymp}
    \sloppy
    \label{thm:mcs}
    There exists a practically efficient algorithm for MCS with an asymptotic approximation ratio of \(2\) and runtime \(O(n\log n)\).
\end{restatable}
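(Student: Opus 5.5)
We would prove \cref{thm:mcs} by reduction to \cref{thm:pts}, following the scheme Jansen and Rau~\cite{JansenR19} used to turn their \(\frac{3}{2}\opt+\pmax\) PTS algorithm into a \(9/4+O(1/N)\) MCS algorithm. The idea is to handle the tall jobs separately, schedule everything else as one PTS instance on a virtual machine of width about \(Nm\), and then cut this virtual schedule into \(N\) pieces, each of height at most \(2T\) on one cluster.

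First, we guess the optimum. A binary search over candidate values \(T\) would cost an extra \(\log\) factor. Instead, we use a constant-factor estimate (e.g., \(\max\{\pmax, \sum_j p_jq_j/(Nm)\}\)) and a sorted-order search over \(O(\log n)\) or a constant number of candidates, so the total stays \(O(n\log n)\).

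Next, we split the jobs by processing time. Jobs with \(p_j > T/2\) (tall jobs) must run essentially in parallel in any schedule of height \(T\). In each cluster, tall jobs with \(q_j > m/2\) pairwise exclude each other. We stack tall jobs into clusters directly, pairing them greedily by machine requirement. Each cluster receives tall jobs occupying height at most \(T\) on a prefix of its machines.

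The remaining jobs (\(p_j\le T/2\)) are concatenated into a single PTS instance on \(Nm\) machines. The optimal schedule restricted to them, laid side by side across clusters, has height at most \(T\). We run the algorithm of \cref{thm:pts}, which gives height at most \(\frac43T+\pmax\le \frac43 T+\frac T2\). However, a job in this schedule may straddle cluster boundaries. To fix this, we adapt the cutting argument: we split the virtual schedule at the \(N-1\) cluster boundaries. Each straddling job is moved to the top of one adjacent cluster, which costs at most \(p_j\le T/2\) extra height. Charging these moved jobs costs height proportional to \(1/N\) of the total, which is exactly where the \(O(1/N)\) term comes from. The height bounds then combine into \(2T + O(T/N)\).

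The main obstacle is the combination step. Naively adding the tall-job layer and the \(\frac43T+\pmax\) layer exceeds \(2T\). The plan is therefore to use the structure produced by the \cref{thm:pts} algorithm directly. Its schedule should consist of a region of height about \(\frac43T\) plus an additive top part of height at most \(\pmax\) that is sparse, and the tall jobs are fitted into the free area of the clusters (those with machines available). We first attempt a volume argument: clusters that are full to height above \(2T\) would have to carry area exceeding \(NmT\), which is a contradiction, up to a few boundary clusters that account for the \(O(1/N)\) loss. Checking that the PTS algorithm's output admits such a column-wise decomposition is the part we expect to require the most care.
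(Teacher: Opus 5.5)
Your proposal has a genuine gap. The two steps that carry all the weight do not go through.

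\textbf{Splitting at cluster boundaries.} Splitting the $Nm$-machine PTS schedule at ``cluster boundaries'' is not well defined. The algorithm of \cref{thm:pts} outputs only start times under a capacity constraint, and it relies on non-contiguity. Suppose you fix a contiguous machine layout anyway. The jobs crossing one boundary are then pairwise disjoint in time, but together their processing time can be as large as the whole schedule height. Moving them onto an adjacent cluster therefore costs $\Theta(T)$ per boundary, not $O(T/N)$, so there is nothing to charge the $1/N$ term to.

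\textbf{Tall jobs.} Two jobs with $p_j>T/2$ in the same cluster of height $T$ both run at time $T/2$. Placing them is therefore bin packing their widths into $N$ bins of size $m$. This is NP-hard and is exactly the source of Zhuk's lower bound of $2$. Greedy pairing gives no guarantee that $N$ clusters suffice.

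\textbf{Combination.} You leave this step open yourself, and a pure volume argument cannot control wide or tall jobs. Also, the Jansen--Rau scheme you cite is not this construction. Their $9/4$ arises as $\frac32\cdot\frac32$ from cutting in time.

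\textbf{The missing idea: cut in time on a single cluster.} You need neither a guess of $T$ nor a separate treatment of tall jobs.
\begin{itemize}
\item Run the PTS algorithm on one cluster of $m$ machines for all jobs. Stacking the $N$ cluster schedules of an optimal MCS solution gives a single-cluster schedule, so $\opt_{\text{PTS}}\le N\cdot\opt$. Hence the height is $h\le \frac43 N\opt+\pmax$.
\item Write $N=3i+1+a$ with $a\in\{0,1,2\}$. Make $2i$ horizontal cuts, giving $2i+1$ slices of equal height, and give each slice its own cluster.
\item Remove the jobs running at a cut. They are concurrent, so they fit side by side in one cluster with height at most $\pmax$. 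Pairing the cuts puts them on $i$ further clusters of height at most $2\pmax\le 2\opt$.
\item This uses $3i+1\le N$ clusters. Each slice has height at most $\frac{(4/3)(3i+1+a)\opt+\pmax}{2i+1}\le\frac{12i+7+4a}{6i+3}\opt$, which tends to $2\opt$.
\end{itemize}
The ratio $2=\frac43\cdot\frac32$ because two thirds of the clusters hold slices. The running time is that of the PTS algorithm, $O(n\log n)$.
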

The existing asymptotic 2 approximation in \cite{JansenR19} again hides large constants in the running time.
In particular, the O-notation in \cite{JansenR19} hides constants of size at least \({5^5}^5 \approx 1.9 \cdot 10^{2184}\).
In comparison, our algorithm is therefore more practical.
Note that an approximation ratio of 2 is best possible~\cite{zhuk2006approximate}.
However, similar to~\cite{JansenR19}, the approximation ratio might be as large as 5 when \(N\) is small.
This motivates the following result, which gives an absolute approximation ratio for MCS.
\begin{restatable}{theorem}{mcs}
    \label{thm:mcs_9_4}
    There exists a practically efficient algorithm for MCS with an approximation ratio of $\frac{9}{4}$ and runtime \(O(n\log n)\).
\end{restatable}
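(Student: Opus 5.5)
We prove \cref{thm:mcs_9_4} with the standard dual-approximation framework for MCS: guess a makespan \(T\), and either certify \(T<\opt\) or return a schedule of height at most \(\frac{9}{4}T\) on \(N\) clusters. Since all job data are rational and the candidate values of \(T\) come from a sorted list, we can binary search over \(T\). To stay within \(O(n\log n)\) overall, we presort the jobs once (by \(q_j\) and by \(p_j\)) so that each test costs \(O(n)\). The search is over interval boundaries determined by the jobs; we start from the lower bound \(\max\{\pmax, \sum_j p_jq_j/(Nm)\}\), which is within a factor 2 of a schedule produced by greedy list scheduling. For a fixed \(T\), jobs with \(p_j>T\) lead to rejection, as does a total work exceeding \(NmT\). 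We also reject if the number of \emph{wide} jobs (\(q_j>m/2\)) whose processing times cannot be stacked into \(N\) columns of height \(T\) is too large.

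The core step partitions the jobs into classes. Wide jobs are \(q_j>m/2\). Narrow jobs are further split into big (\(p_j>T/2\)) and small. Following the approach of \cite{JansenR19}, we pair clusters into one structure of height \(\frac{9}{4}T\). The wide jobs with \(p_j>T/2\) have a key property: no two of them can run in parallel in OPT, and no two with \(p_j>T/2\) can be stacked on one cluster within height \(T\) (summing counts across clusters). Hence OPT uses at most \(N\) such jobs, one per cluster, and their total processing time is bounded by \(NT\) per "column". We place the wide jobs as stacks at the bottom of clusters, sorted by decreasing \(p_j\), filling each cluster to height at most \(\frac{5}{4}T\) (or \(T\) plus a job of length at most \(T/4\) to avoid waste). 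Each cluster then has a free area above and beside the stack. Narrow jobs are added using a greedy shelf/first-fit procedure bounded by \(\frac{9}{4}T\). The argument is area-based: if a narrow job cannot be placed, every cluster is filled to more than \(m/2\) machines over a height of \(T\), plus some extra. This contradicts the total work bound \(NmT\).

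The general-\(N\) difficulty is the obstacle, and it is exactly why \cite{JansenR19} required \(N\equiv 0\bmod 3\). Their construction groups clusters in triples and uses the \(\frac{3}{2}\opt+\pmax\)-type PTS schedule on two clusters' worth of area, distributed over three clusters. To remove the divisibility condition, we handle the residue \(N\bmod 3\in\{1,2\}\) separately. Using \cref{thm:pts} as the per-cluster subroutine, the leftover clusters need only achieve \(\frac{4}{3}T+\pmax\le \frac{7}{3}T\). Since this is still above \(\frac{9}{4}T\), it does not suffice on its own. We must therefore show that the leftover one or two clusters can be assigned a job set with a sharper bound. The plan is to choose the leftover clusters' content to consist of jobs with \(p_j\le 3T/4\), or of at most one job with \(p_j>3T/4\) per cluster on top of a PTS schedule of height \(\frac{4}{3}\cdot\frac{2}{3}T\)-type area. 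We then redistribute so that each leftover cluster receives work at most \(\frac{2}{3}\cdot\) its share, and the bound \(\frac{4}{3}\cdot\frac{2}{3}T+T\le \frac{9}{4}T\) gives the required guarantee.

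Finally, we verify that each step is linear after sorting and that the binary search adds only an \(O(\log n)\) factor over presorted breakpoints. This requires either a dual-approximation search over the \(O(n)\) distinct candidate thresholds or a constant-factor search. Either variant keeps the total running time at \(O(n\log n)\). The expected main obstacle is the residue-class case analysis, in particular proving that OPT's structure forces enough slack in the remaining clusters. We would first attempt this by counting jobs with \(p_j>T/2\) and \(q_j>m/2\) against \(N\), since that count is the quantity that fails to divide evenly into triples.
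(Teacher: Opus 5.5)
There is a genuine gap: the two steps that carry your argument are either invalid or only asserted.

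\textbf{The narrow-job step.} Suppose a narrow job \(j\) (\(q_j\le m/2\), \(p_j\le T\)) cannot be placed so as to finish by \(\frac{9}{4}T\). The most a greedy/area argument gives is utilization above \(m-q_j\ge m/2\) on \([0,\frac{9}{4}T-p_j)\supseteq[0,\frac{5}{4}T)\). That is area above \(\frac{5}{8}mT\) per cluster, which is fully compatible with total work \(NmT\). Arguments of this type certify only about \(2T+\pmax\le 3T\), not \(\frac{9}{4}T\).

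\textbf{The residue step.} Work at most \(\frac{2}{3}mT\) on a leftover cluster does not give \(\opt_{\text{PTS}}\le\frac{2}{3}T\) for its job set. For example, two jobs of width just above \(m/2\) and length just under \(\frac{2}{3}T\) have \(\opt_{\text{PTS}}\approx\frac{4}{3}T\). For that set \cref{thm:pts} only guarantees about \(\frac{22}{9}T>\frac{9}{4}T\). You also give no rule for selecting such a set. Moreover, for the jobs you pass to the \(3i\) triple clusters you only know \(\opt_{\text{PTS}}\le NT\), not \(\le 3iT\). The Jansen--Rau analysis bounds each piece via \(\opt_{\text{PTS}}\) divided by the number of clusters, so it does not deliver \(\frac{9}{4}T\) there. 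Letting the leftover clusters take less than their share makes this worse, not better. For \(N\in\{1,2\}\) there are no triples at all. Finally, the claim that the search over \(T\) needs only \(O(\log n)\) probes over \(O(n)\) breakpoints is unsupported, since \(\opt\) need not be a breakpoint.

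\textbf{The missing idea.} No guessing of \(T\) and no redistribution are needed; the \(\frac{4}{3}\) factor does the work globally. Run the algorithm of \cref{thm:pts} once on all jobs as a single cluster, so \(h\le\frac{4}{3}\opt_{\text{PTS}}+\pmax\le\frac{4}{3}N\opt+\opt\).
\begin{itemize}
\item For \(N=3i+1\) with \(i\ge 1\), cut this schedule into \(2i+1\) equal pieces by \(2i\) cuts. The jobs crossing a cut run concurrently, so they fit side by side within height \(\pmax\). Two cuts share one of \(i\) extra clusters of height \(2\pmax\le 2\opt\). Each piece then has height at most \(\frac{12i+7}{6i+3}\opt\le\frac{19}{9}\opt\).
\item For \(N=3i+2\), first move all jobs completing after \(h-\pmax\) to one dedicated cluster, of height at most \(2\pmax\). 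This removes the additive \(\pmax\). Then cut the rest on \(3i+1\) clusters, giving \(\frac{12i+8}{6i+3}\opt\le\frac{20}{9}\opt\).
\item The case \(N=3i\) is Jansen--Rau, and \(N=1\) is list scheduling.
\item \(N=2\) needs its own argument. Enumerate the at most \(64\) assignments of the at most seven jobs with area above \(A/8\), distribute the rest greedily by area, and list-schedule each cluster.
\end{itemize}
This is precisely how the residue classes are absorbed. The \(\frac{4}{3}\) factor leaves enough slack for \(2i+1\) pieces plus \(i\) cut clusters on \(3i+1\) clusters, which the \(\frac{3}{2}\) ratio behind Jansen--Rau does not.
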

Note that~\cite{JansenR19} provide an MCS algorithm with asymptotic approximation ratio of \(\frac{9}{4}\); however their algorithm has an absolute worst-case ratio of \(2.5\), if \(N=2\).
In comparison, we get an asymptotic ratio of \(2\), and a worst case ratio of \(\frac{9}{4}\).

\section{Notation and Preliminaries}
All job widths can be scaled, such that they lie in the interval $(0,1]$ by dividing by $m$.
Thus, instead of having $m$ machines and integer machine requirements,
we aim to fill a strip of width 1 with items of width $q_i \in (0,1],$ for all $i \in \{1, \dots, n\}$.
% Due to the context of the problem, we continue to refer to machine usage at a given time, by which we mean the number of (now fractional) machines being used.
This is actually a generalization of PTS, as we do not require the job width to be a multiple of \(\frac{1}{m}\).% \todo{ref zu strip packing?}

\subsection{Notation}\label{sec:notation}
We refer to $q_j$, the machine requirement of job $j$, as the job's width. Intuitively, if job \(j\) is thinner [wider] than job \(k\),
then \(q_j < q_k\) [\(q_j > q_k\)]. Similarly, we refer to the processing time of a job as its height.

During this work, we partition the set of jobs \(\calJ\) into the following subsets, based on their machine requirements.
Note that the indicated colors are used in the subsequent figures.
\begin{itemize}
    \item \makebox[3cm]{Tiny jobs (gray):\hfill}\(\calT = \{j\in\calJ : q_j\in (0,\nicefrac{1}{4}]\}\)
    \item \makebox[3.04cm]{Small jobs (blue):\hfill}\(\calS = \{j\in\calJ : q_j\in (\nicefrac{1}{4},\nicefrac{1}{3}]\}\)
    \item \makebox[2.91cm]{Medium jobs (pink):\hfill}\(\calM = \{j\in\calJ : q_j\in (\nicefrac{1}{3},\nicefrac{1}{2}]\}\)
    \item \makebox[3.02cm]{Big jobs (yellow):\hfill}\(\calB = \{j\in\calJ : q_j\in (\nicefrac{1}{2},1]\}\)
\end{itemize}

\begin{definition}[schedule]
    A (partial) schedule $\sigma \colon \mathcal{J}' \to \mathbb{Q}_+$ assigns starting times to a subset of jobs $dom(\sigma)\coloneqq \mathcal{J}' \subseteq \mathcal{J}$.
    \begin{itemize}
        \item $h(\sigma)\coloneq\max_{j\in \mathcal{J}'} \sigma(j)+p_j$ denotes the \emph{height} of $\sigma$.
        \item $\opt(\mathcal{J}')$ denotes the minimal/optimal height. We use $\opt$ if $\mathcal{J}'=\mathcal{J}$.
        \item \(q(\sigma, \tau) \coloneq \sum_{j\in \mathcal{J}', \tau \in [\sigma(j), \sigma(j)+p_j )}q_j\) denotes the \emph{machine utilization} at time $\tau$.
        \item \(\deficit(\sigma) \coloneqq \int_{0}^{h(\sigma)} \max(0,\frac{3}{4} - q(\sigma, \tau)) d\tau\) is the \emph{area deficit}.
    \end{itemize}
\end{definition}

The area deficit is defined such that \(\deficit(\sigma) = 0\) implies \(h(\sigma) \leq \frac{4}{3}\opt\) and $\deficit(\sigma) \leq \frac{3}{4}\pmax$ implies $h(\sigma) \leq \frac43\opt+\pmax$.
\begin{observation}
    \label{obs:missing_area_to_ratio}
    Any schedule \(\sigma\) with an area deficit of at most $\frac{3}{4}\pmax$ has: $ h(\sigma) \leq \frac{4}{3}\opt+p_{\max}$.
\end{observation}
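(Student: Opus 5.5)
The argument is a single area-counting estimate. Write $h = h(\sigma)$ for the height of the schedule and $W = \sum_{j\in\calJ} p_j q_j$ for the total work, i.e.\ the total area of all jobs. The optimal schedule fits all jobs into a strip of width $1$ and height $\opt$, so $W \le \opt$. The same holds for any partial schedule, with $W$ replaced by the area of its scheduled jobs; here we assume $\sigma$ schedules all of $\calJ$, as the statement intends.

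The key step is a pointwise lower bound on the utilization. For every $\tau \in [0,h)$,
\begin{align*}
  q(\sigma,\tau) \;\ge\; \tfrac34 - \max\bigl(0,\tfrac34 - q(\sigma,\tau)\bigr).
\end{align*}
If $q(\sigma,\tau)\ge \frac34$, the maximum is $0$ and the inequality is trivial. Otherwise the right-hand side equals $q(\sigma,\tau)$ exactly.

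Integrating this bound over $[0,h)$ gives
\begin{align*}
  W \;=\; \int_0^{h} q(\sigma,\tau)\,d\tau \;\ge\; \tfrac34 h - \deficit(\sigma) \;\ge\; \tfrac34 h - \tfrac34 \pmax .
\end{align*}
The first equality holds because each job $j$ contributes $q_j$ to the utilization on an interval of length $p_j$, so it adds exactly $p_j q_j$ to the integral (Fubini). Combining with $W \le \opt$ gives $\frac34 h \le \opt + \frac34\pmax$, that is, $h \le \frac43\opt + \pmax$.

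There is no real obstacle here. The only points to check are the identity equating the integral of the utilization with the total work, and that the deficit is integrated up to exactly $h(\sigma)$, so that no region of the schedule is left unaccounted for. The same computation also gives the companion remark: $\deficit(\sigma)=0$ implies $h(\sigma)\le \frac43\opt$.
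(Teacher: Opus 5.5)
Your proof is correct and follows the same route as the paper. Both arguments bound $\deficit(\sigma)\ge \frac34 h(\sigma)-\text{area}(\sigma)$ using $\int_0^{h(\sigma)} q(\sigma,\tau)\,d\tau=\text{area}(\sigma)$, combine this with $\text{area}(\sigma)\le \opt$, and your pointwise inequality just spells out the step the paper states in one line.
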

\begin{proof}
    Consider an optimal schedule $\sigma^*$.
    Any schedule \(\sigma\), scheduling the same set of jobs as \(\sigma^*\), has $\text{area}(\sigma)=\text{area}(\sigma^*)$.
    Further, $\text{area}(\sigma^*)\leq 1\cdot h(\sigma^*)$.

    Since the integral of the machine utilization over time equals the total area ($\int_{0}^{h(\sigma)} q(\sigma, \tau) d\tau = \text{area}(\sigma)$), this yields:
    $$\deficit(\sigma) \geq \frac{3}{4}h(\sigma) - \text{area}(\sigma)$$

    By assumption, the area deficit of $\sigma$ is at most $\frac{3}{4}\pmax$.
    This implies $\text{area}(\sigma)\geq \frac34 h(\sigma)-\frac34 \pmax$.
    Combining all three statements implies $h(\sigma^*)\geq \text{area}(\sigma)\geq \frac34h(\sigma)-\frac34 \pmax$, which is equivalent to $h(\sigma) \leq \frac43 h(\sigma^*)+\pmax$.
\end{proof}

To aid the density arguments in our analysis, we define the \emph{sparse intervals} of a schedule \(\sigma \) as maximal time-intervals where the machine usage is less than \(\frac{3}{4}\) at any time.

\begin{definition}[Sparse Intervals]
    A sparse interval $I_k$ is a time-interval $[a, b)$ of maximal size during which the machine usage is less than $\frac34$ at any time.
    \begin{itemize}
        \item \(|I_k| = b - a\) denotes the \emph{height} of \(I_k\).
    \end{itemize}
\end{definition}

In a schedule \(\sigma\), a sparse interval \(I_k\) is considered \emph{accessible} if it is positioned at either the bottom (starting at \(0\)) or the top (ending at height \(h(\sigma)\)) of the schedule.
This property is particularly useful when combining multiple partial schedules.
Furthermore, we define an accessible interval as \emph{accessible monotone} if its machine usage follows a specific trend: it must be monotonically decreasing if it ends at \(h(\sigma)\), or monotonically increasing if it starts at \(0\).
The machine usage in a sparse interval is defined as $q(I_k)\coloneq \bigcup_{\tau\in I_k}q(\sigma,\tau)$.
Similarly, we use \(q_{\min}(I_k) \coloneqq \min(q(I_k))\) and \(q_{\max}(I_k) \coloneqq \max(q(I_k))\) to denote the minimum and maximum machine utilization in an interval \(I_k\), respectively.

\subsection{Preliminaries}
In many of our algorithms, we utilize a simple adaptation of the well-known \emph{List Scheduling} algorithm~\cite{DBLP:journals/siamcomp/GareyG75}, which we refer to as \LS.

\begin{algorithm}
    \caption{\LS}
    \label{alg:greedy}
    \begin{algorithmic}[1]
        \State Let $\mathcal{U}$ be the set of all unscheduled jobs.
        \While{$\mathcal{U} \neq \emptyset$}
        \State Find the earliest time $\tau$ where a job in $\mathcal{U}$ can be scheduled.
        \State Let $F(\tau) \subseteq U$ be the jobs that can be scheduled at time $\tau$.
        \State Schedule a job $j^* \in F(\tau)$ with the maximum width at \(\tau\).
        \State $\mathcal{U} \gets \mathcal{U} \setminus \{j^*\}$

        \EndWhile
    \end{algorithmic}
\end{algorithm}

During this work, we use two kinds of \LS.

\paragraph{\textbf{\LS in $[a,b)$}}
In this variant, the algorithm is restricted to a specific time interval defined by \(a\) and \(b\).
We apply \LS as usual, but a job is only scheduled if it is processed entirely in the interval $[a,b)$.

                \paragraph{\textbf{\LS on $k$ stacks}}
                We use this variant only when the jobs we want to place have widths in $(\frac{1}{k+1},\frac{1}{k}]$.
Because exactly \(k\) such jobs can be processed simultaneously, \LS naturally groups them into a schedule of \(k\) parallel stacks.
The machine utilization is monotonically decreasing from the bottom to the top of each stack.

\begin{observation}
    Both variants of \LS can be implemented in $O(n\log(n))$.
\end{observation}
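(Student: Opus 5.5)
Both variants are an event-driven simulation of list scheduling, so the plan is to keep the state of the partial schedule in balanced search trees.

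\textbf{The schedule profile.} We keep a balanced search tree over the event times at which the machine utilization $q(\sigma,\cdot)$ changes, and maintain the current candidate time $\tau$. The time $\tau$ only moves forward: \LS always picks the earliest time at which some job fits. This holds because a job that does not fit at time $\tau$ with the current load still does not fit there after more jobs are added, since the load only grows. So we process event times in increasing order. At each event time the free capacity is $1-q(\sigma,\tau)$, and it can only change when a scheduled job ends.

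\textbf{The unscheduled jobs.} We store $\mathcal{U}$ in a balanced search tree keyed by width $q_j$. For a free capacity $c$, the widest job in $F(\tau)$ is the predecessor of $c$ in this tree, found in $O(\log n)$. This is exactly line 5 of \cref{alg:greedy}.

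\textbf{Non-contiguity.} A job needs a fraction of machines, not specific machines. Therefore a job started at $\tau$ needs free capacity only during $[\tau,\tau+p_j)$. Because $\tau$ is the current frontier, every job already scheduled started at or before $\tau$. Hence the utilization on $[\tau,\infty)$ is non-increasing, and a job fits at $\tau$ if and only if $q_j\le 1-q(\sigma,\tau)$.

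\textbf{Running time.} At each event we repeatedly extract the widest fitting job, insert its end time as a new event, and decrease the free capacity. When no job fits, we advance to the next end event. Each job causes $O(1)$ insertions and deletions and one successful query. Each event has at most one failed query. With $O(n)$ events, the total time is $O(n\log n)$.

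\textbf{Variant on $[a,b)$.} The only added requirement is $\tau+p_j\le b$. The widest job fitting the width bound might violate this, so the predecessor query must also respect $p_j\le b-\tau$. I handle this with a two-dimensional dominance query: a search tree on width whose nodes are augmented with the minimum $p_j$ in their subtree. This lets us descend to the widest job with $q_j\le c$ and $p_j\le b-\tau$ in $O(\log n)$. Times never exceed $b$, so the same frontier argument applies.

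\textbf{Variant on $k$ stacks.} All widths lie in $(\frac{1}{k+1},\frac1k]$, so exactly $k$ jobs run concurrently whenever enough jobs remain. The schedule is therefore: sort the jobs by non-increasing width in $O(n\log n)$, then give each job in turn to the stack with the smallest current top, maintained in a min-heap of size $k$. Each step costs $O(\log k)$.

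\textbf{Main obstacle.} The delicate step is the interval variant: the widest-fit query must be combined with the height constraint $p_j\le b-\tau$. The augmented tree above is what I would try first.
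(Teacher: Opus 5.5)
The paper states this observation without proof, so the question is only whether your argument holds up. Most of it does. The frontier $\tau$ is indeed non-decreasing, since feasible job/start-time pairs only disappear as load is added. The width-keyed predecessor query is correct. The min-$p_j$ augmentation handles the cap $p_j\le b-\tau$; a simpler alternative is to delete a job for good once $p_j>b-\tau$, because $b-\tau$ only shrinks. The sort-plus-min-heap implementation of the $k$-stack variant is also fine.

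The gap is the step ``every job already scheduled started at or before $\tau$, hence the utilization on $[\tau,\infty)$ is non-increasing and a job fits iff $q_j\le 1-q(\sigma,\tau)$''. This holds only when \LS starts from an empty schedule. Most calls in the paper instead run it on top of an existing partial schedule:
small jobs are added by \LS on $\sigma_2'$ (\cref{alg:sigma2});
tiny jobs are added in $[0,h(\sigma_1^B))$ on $\sigma_1^B$ and in $[0,h(\sigma_2))$ on $\sigma_2$ (\cref{alg:tiny_2});
tiny jobs are also added on $\sigma_1,\sigma_2$ in \cref{alg:tiny_4}.
The interval variant exists precisely for this use. Pre-existing jobs may start after the frontier. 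For an arbitrary pre-existing profile the correct test is $q_j\le 1-\max_{t\in[\tau,\tau+p_j)}q(\sigma,t)$, which couples the width with $p_j$. Then neither your predecessor query nor the min-$p$ augmentation returns the widest job of $F(\tau)$, and it is not clear that $O(n\log n)$ is achievable in that generality.

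What you need to add is a structural fact about the inputs. In every such call, the existing schedule is a union of stacks whose widths are non-increasing from bottom to top:
big jobs stacked by non-increasing width in $\sigma_1^B$, and in $\sigma_1$ after \cref{alg:tiny_3};
the two-stack outputs $\sigma_2'$ and the rescheduled $\sigma_2$.
So its load is non-increasing in time. Jobs started at the frontier preserve this on $[\tau,\infty)$, so your fit test stays valid. You then only need to seed the event structure with the $O(n)$ end times of the pre-existing jobs and the initial load $q(\sigma,a)$. With that stated, the rest of your argument, including the running-time count, goes through unchanged.
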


\section{Algorithm for \PTS}\label{sec:pts}

In this section, we present an algorithm that achieves an approximation ratio of \(\frac{4}{3}\opt + p_{\max}\) for PTS.

\begin{figure}[ht]
    \centering

    \setcounter{subfigure}{2}

    \sbox{\bigimage}{%
        \begin{subfigure}[b]{0.48\linewidth}
            \centering
            \begin{tikzpicture}

    %outline of the schedule
    \def\w{10}
    \def\h{6.2}

    %Sigma 1T
    \begin{scope}[fill opacity=0.7]

        \filldraw[fill=red, ultra thick] (0, 0) -- (0.45*\w, 0) -- (0.34*\w, 3) -- (0, 3) -- cycle;

        \filldraw[fill=red, ultra thick] (0.55*\w, 0) -- (\w, 0) -- (\w, 2.7) -- (0.66*\w, 2.7) -- cycle;

        \filldraw[fill=blue, ultra thick] (0, 3) -- (0.32*\w, 3) -- (0.26*\w, 6) -- (0, 6) -- cycle;

        \filldraw[fill=blue, ultra thick] (0.34*\w, 3) -- ++(0.32*\w, 0) -- (0.6*\w, 5) -- (0.36*\w, 5) -- cycle;

        \filldraw[fill=blue, ultra thick] (0.67*\w, 2.7) -- (\w, 2.7) -- (\w, 5.6) -- (0.74*\w, 5.6) -- cycle;

        \filldraw[fill=blue, ultra thick] (0.36*\w, 3) -- (0.63*\w, 3) -- (0.6*\w, 2.3) -- (0.4*\w, 2.3) -- cycle;

        \draw[very thick] (0, \h) -- (0, 0) -- (\w, 0) -- (\w, \h);
    \end{scope}

    \draw[|-|, ultra thick] (-0.5, 1.3) -- ++(0, 1) node[midway, anchor=east] {\(I_2\)};
    \draw[dashed] (-0.5, 1.3) -- ++(\w+1, 0);
    \draw[dashed] (-0.5, 2.3) -- ++(\w+1, 0);

    \draw[|-|, ultra thick] (-0.5, 5) -- ++(0, 1) node[midway, anchor=east] {\(I_3\)};
    \draw[dashed] (-0.5, 5) -- ++(\w+1, 0);
    \draw[dashed] (-0.5, 6) -- ++(\w+1, 0);

\end{tikzpicture}
            \caption{Schedule \(\sigma_2\)}
        \end{subfigure}%
    }

    \setcounter{subfigure}{0}

    \begin{minipage}[b][\ht\bigimage][s]{0.48\linewidth}

        \begin{subfigure}{\linewidth}
            \centering
            \begin{tikzpicture}

    %outline of the schedule
    \def\w{10}
    \def\h{3.2}

    %Sigma 1T
    \begin{scope}[fill opacity=0.7]

        \filldraw[fill=yellow, ultra thick] (\w, 0) --(\w, 3) -- (0.5*\w, 3) -- (0.4*\w, 0) -- cycle;

        \filldraw[fill=red, ultra thick] (0, 3) -- (0.4*\w, 3) -- (0.34*\w, 2) -- (0, 2) -- cycle;

        \filldraw[fill=blue, ultra thick] (0, 2) -- (0.32*\w, 2) -- (0.26*\w, 1) -- (0, 1) -- cycle;

        \draw[very thick] (0, \h) -- (0, 0) -- (\w, 0) -- (\w, \h);
    \end{scope}

    \draw[|-|, ultra thick] (-0.5, 0) -- ++(0, 1) node[midway, anchor=east] {\(I_1^T\)};

    \draw[dashed] (-0.5, 1) -- ++(\w+1, 0);

\end{tikzpicture}
            \caption{Schedule \(\sigma_1^T\)}
        \end{subfigure}%

        \vfill

        \begin{subfigure}{\linewidth}
            \centering
            \begin{tikzpicture}

    %outline of the schedule
    \def\w{10}
    \def\h{1.7}

    %Sigma 1B
    \begin{scope}[fill opacity=0.7]

        \filldraw[fill=yellow, ultra thick] (\w, 0) --(\w, 1.5) -- (0.4*\w, 1.5) -- (0.25*\w, 0) -- cycle;

        \draw[very thick] (0, \h) -- (0, 0) -- (\w, 0) -- (\w, \h);
    \end{scope}

    \draw[|-|, ultra thick] (-0.5, 0) -- ++(0, 1.5) node[midway, anchor=east] {\(I_1^B\)};
    \draw[dashed] (-0.5, 1.5) -- ++(\w+1, 0);

\end{tikzpicture}
            \caption{Schedule \(\sigma_1^B\)}
        \end{subfigure}%

    \end{minipage}\hfill % 
    \usebox{\bigimage}

    \caption{Constructed schedules with sparse intervals.}
    \Description{
        In this figure, we see three schedules $\sigma_1^B,\sigma_1^T$ and $\sigma_2$.
        $\sigma_1^T$ has a sparse interval $I_1^T$ where the machine usage is monotonically increasing above $I_1^T$ is a section where two jobs were executed parallel to each other.
        Also $\sigma_1^B$ has a sparse interval $I_1^B$ where the machine usage is monotonically increasing.
        Both intervals are accessible monotone.
        $\sigma_2$ has two sparse intervals $I_2$ and $I_3$. $I_2$ is not accessible and $I_3$ is accessible monotone.
    }
    \label{fig:schedules}
\end{figure}

\subsection{Outline and Techniques}
\begin{algorithm}
    \caption{Outline of the Algorithm}
    \label{alg:outline}
    \begin{algorithmic}[1]
        \State Construct partial schedule \(\sigma_1\). %: containing all jobs in \(\mathcal{B}\)
        \Comment{\cref{alg:T1}}
        \State Construct partial schedule \(\sigma_2\). %: containing remaining jobs in \(\mathcal{M} \cup \mathcal{S}\)
        \Comment{\cref{alg:sigma_2_prime,alg:sigma2}}
        \State Combine \(\sigma_1\) and \(\sigma_2\) \Comment{\cref{sec:medium-jobs}}
        \State Greedily add tiny jobs \(\calT\)
        \Comment{\cref{sec:tiny-jobs}}
    \end{algorithmic}
\end{algorithm}

To simplify the presentation of our algorithm, we first assume that there are no tiny jobs, i.e., \(\calT = \emptyset\).
We show how to extend to the general case with tiny jobs in \cref{sec:tiny-jobs}.
% We will first prove the ratio for $\mathcal{T}=\emptyset$ where we have no tiny jobs and add those jobs then in $\cref{sec:tiny-jobs}$.
As outlined in \cref{alg:outline}, we sequentially construct partial schedules ($\sigma_1$ and $\sigma_2$).
Together, these place all big, medium, and small jobs.
During our analysis, we will show that these schedules have the following properties, where \(\sigma_1\) can be split into two parts, \(\sigma_1^B\) and \(\sigma_1^T\), which are scheduled on top of each other.

\begin{properties}[Constructed Schedules: $\sigma_1^B, \sigma_1^T, \sigma_2$] \leavevmode
    \label{def:named_schedules}
    \begin{itemize}
        \item $\sigma_1^B$ has a single accessible monotone sparse interval $I_1^B$ with $|I_1^B|< p_{\max}$ and $q(I_1^B) \subseteq (\frac{2}{3}, \frac34)$.
        \item $\sigma_1^T$ has a single accessible monotone sparse interval $I_1^T$ with $|I_1^T| < p_{\max}$ and $q(I_1^T) \subseteq (\frac12, \frac23]$.
        \item $\sigma_2$ has two sparse intervals \(I_2, I_3\), where $|I_2| < p_{\max}$ and $q(I_2) \subseteq (\frac23,\frac34)$ and \(I_3\) is accessible monotone with $|I_3|<p_{\max}$ and $q(I_3) \subseteq (\frac14,\frac13]\sqcup(\frac12,\frac34)$.
    \end{itemize}
\end{properties}

\begin{lemma}
    Given $\sigma_1^B, \sigma_1^T, \sigma_2$, we can create a schedule $\sigma$ with: \[h(\sigma) \leq \frac{4}{3}\opt + \pmax.\]
    \label{lem:combine_schedules}
\end{lemma}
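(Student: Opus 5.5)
Since the three schedules together contain every big, medium and small job, and \(\calT=\emptyset\) is assumed, I want to stack them into one schedule \(\sigma\) with \(\deficit(\sigma)\le\frac34\pmax\). Then \cref{obs:missing_area_to_ratio} gives the bound directly. The deficit of a stacked schedule is the sum of the deficits of its sparse intervals, and any interval of height \(<\pmax\) and usage at least \(u\) contributes less than \((\frac34-u)\pmax\). So \(I_1^B\) and \(I_2\), each with usage \(>\frac23\), contribute less than \(\frac1{12}\pmax\) each. \(I_1^T\), with usage \(>\frac12\), contributes less than \(\frac14\pmax\). \(I_3\) is the dangerous one: its usage can drop to \((\frac14,\frac13]\), giving a deficit up to nearly \(\frac12\pmax\). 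A plain stack would therefore cost about \(\frac1{12}+\frac14+\frac1{12}+\frac12>\frac34\) in units of \(\pmax\), which is too much. The accessible monotone intervals therefore have to be made to overlap or interlock rather than simply added.

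\textbf{Step 1.} I order the stack as \(\sigma_1^B\) at the bottom, then \(\sigma_1^T\), then \(\sigma_2\) on top. More precisely, I flip \(\sigma_1^T\) upside down if needed, so that its accessible monotone interval \(I_1^T\) sits at a boundary where it can be merged with the accessible monotone \(I_3\) of \(\sigma_2\). Because both intervals are monotone and accessible, the staircase profile of one can slide down into the gap left by the other. I shift \(\sigma_2\) downward by the largest \(\delta\) such that utilization never exceeds \(1\).

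\textbf{Step 2: case analysis on where the usage of \(I_3\) lies.}
\begin{itemize}
\item If \(q(I_3)\subseteq(\frac12,\frac34)\), there is no merging to do. The deficits of \(I_1^T\) and \(I_3\) are each below \(\frac14\pmax\), and the total is \(\frac1{12}+\frac14+\frac1{12}+\frac14<\frac34\).
\item If \(I_3\) reaches the \((\frac14,\frac13]\) range, I interlock it with \(I_1^T\). The portion of \(I_3\) with usage at most \(\frac13\) consists of a single small-job stack, and the portion of \(I_1^T\) with usage at most \(\frac12\) has free width at least \(\frac13\). So the small stack can slide into that free region.
\end{itemize}

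\textbf{Step 3: accounting.} I charge the deficit of the combined region to its height, which is at most \(\pmax\) because \(|I_1^T|,|I_3|<\pmax\). The overlapped region has usage above \(\frac34\), and the leftover sparse parts contribute at most \(\frac12\pmax\). Adding the \(\frac1{12}+\frac1{12}\) from \(I_1^B\) and \(I_2\) keeps the total within \(\frac34\pmax\).

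\textbf{Main obstacle.} The hardest part is the precise merge geometry in the second case. The monotone profiles do not align exactly, and the sliding stops once some point reaches full utilization. I would have to show that when the slide is blocked, the blocking point lies in a region that is already dense, so the residual sparse height after merging is bounded by \(\pmax\) with average usage above \(\frac34-\frac12\) in total. I would first attempt this by a direct integral comparison of the two monotone staircase functions. If that fails, I would fall back on choosing between two candidate placements, flipped or unflipped, and taking the better one.
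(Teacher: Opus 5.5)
Your strategy is the paper's: stack the pieces, overlap $I_1^T$ with the part of $I_3$ of usage at most $\frac13$, where the overlap has usage $>\frac12+\frac14=\frac34$, and bound the rest by $\frac34\pmax$. However, there is a genuine gap. The step you call the main obstacle is left unproved, and the justification you give in Step 3 is false.

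After merging, $I_1^T\cup I_3$ has height $|I_1^T|+|I_3|-\delta$. This can be close to $2\pmax$: take $|I_1^T|$ and $|I_3|$ near $\pmax$ with a short thin part of $I_3$. For the same reason, your target ``residual sparse height at most $\pmax$'' fails whenever the thin part of $I_3$ is the shorter interval. In that case the residual is the thick part of $I_3$ plus the unabsorbed part of $I_1^T$. So ``the leftover sparse parts contribute at most $\frac12\pmax$'' is asserted, not derived. (In Step 2 you presumably mean that $I_1^T$ has usage at most $\frac23$, hence free width at least $\frac13$; $I_1^T$ has no portion of usage at most $\frac12$. Also, $\sigma_2$ must be rotated so that $I_3$ faces $\sigma_1^T$.)

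The missing idea is that no comparison of staircase profiles is needed. Use monotonicity only to split $I_3$ into two parts: $I_3^T$, at the boundary, with usage in $(\frac14,\frac13]$, and the rest $I_3^B$ with usage in $(\frac12,\frac34)$. Then overlap by exactly $\min(|I_1^T|,|I_3^T|)$ instead of sliding until blocked. Sliding until blocked can overshoot this value in degenerate cases, for example when $\sigma_1^T$ has no dense part, and then needs extra analysis. The fixed overlap is feasible pointwise, whatever the alignment, since $q_{\max}(I_1^T)+q_{\max}(I_3^T)\le\frac23+\frac13=1$. Then split on which interval is shorter:
\begin{itemize}
\item If $|I_1^T|\le|I_3^T|$, everything left over from $I_1^T\cup I_3$ lies in $I_3$. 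That region has height $<\pmax$ and deficit density at most $\frac12$, so it costs $<\frac12\pmax$.
\item If $|I_3^T|<|I_1^T|$, the leftover is $I_3^B$ plus part of $I_1^T$. It can be taller than $\pmax$, but each piece has height $<\pmax$ and density $<\frac14$, so it again costs $<\frac12\pmax$.
\end{itemize}
Adding $\frac1{12}\pmax$ each for $I_1^B$ and $I_2$ gives a total deficit $<\frac23\pmax\le\frac34\pmax$, and \cref{obs:missing_area_to_ratio} applies.

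This is the paper's two-case argument, mirrored; your case $q(I_3)\subseteq(\frac12,\frac34)$ is just $I_3^T=\emptyset$. Note that the paper's displayed Case~1 estimate charges only $\frac34-\frac12$ on all of $I_3$. That undercounts the unabsorbed part of $I_3^T$; the count above is the correct one.
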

\begin{proof}
    Let \(\sigma_1^B, \sigma_1^T, \sigma_2\) be as in \cref{def:named_schedules} and illustrated in \Cref{fig:schedules}.
    We aim to place these schedules on top of each other (while concurrently scheduling some sparse intervals) to create a new schedule \(\sigma\).

    First, create a schedule \(\sigma_1'\) by simply scheduling \(\sigma_1^B\) on top of \(\sigma_1^T\).
    To this end, note that the sparse interval \(I_3\) of \(\sigma_2\) has monotone machine utilization.
    Therefore, we can split \(I_3\) into two subintervals \(I_3^T\) and \(I_3^B\) such that \(q_{min}(I_3^T)\) is at most \(\frac{1}{3}\) and \(q_{min}(I_3^B)\) is at least \(\frac{1}{2}\).
    We now create a new schedule $\sigma$ by combining \(\sigma_1'\) and \(\sigma_2\).
    Due to the machine utilization of the sparse intervals \(I_1^T\) and \(I_3^T\), we can schedule them in a way that they interleave.
    Here, we schedule \(\sigma_2\) at time \(0\) and \(\sigma_1'\) at time \(h(\sigma_2)-\min(|I_1^T|,|I_3^T|)\).
    We can schedule those sparse intervals simultaneously since $q_{max}(I_3^T)+q_{max}(I_1^T)\leq\frac13+\frac23=1$.
    Note that in general it does not matter whether the accessible sparse intervals of a schedule $\sigma'$ are at the top or the bottom, since we can just rotate the schedule (that is change $\sigma'(j)$ to $h(\sigma')-p_j-\sigma'(j)$ for every job).

    To analyze the area deficit of the new schedule \(\sigma\), we first observe that by definition only sparse intervals contribute to the area deficit.
    Also, during the time where \(I_1^T\) and \(I_3^T\) interleave, the combined machine utilization is \(q_{min}(I_1^T) + q_{min}(I_3^T) > \frac{1}{2} + \frac{1}{4} = \frac{3}{4}\).
    This interleaved region does not contribute to the area deficit.
    Depending on which interval is larger, we have two cases:

    \case{1} \(|I_1^T| \leq |I_3^T|\).
    In this case, \(I_1^T\) is fully absorbed into the overlap.
    Using the definition of \(\deficit(\sigma)\), we get:
    \begin{align*}
        \deficit(\sigma)
         & = \int_{0}^{h(\sigma)} \max(0,\frac{3}{4} - q(\sigma, \tau)) d\tau                                                           \\
         & \leq \int_{I_1^B} \left(\frac{3}{4} - q_{min}(I_1^B)\right) d\tau + \int_{I_2} \left(\frac{3}{4} - q_{min}(I_2)\right) d\tau \\
         & \quad +\int_{I_3}\left(\frac{3}{4} - q_{min}(I_3)\right) d\tau                                                               \\
    \end{align*}
    With \(|I_1^B|, |I_2|, |I_3| \leq \pmax\), we have:
    \begin{align*}
        \deficit(\sigma)
         & \leq \int_{I_1^B} \left(\frac{3}{4} - \frac{2}{3}\right) d\tau + \int_{I_2} \left(\frac{3}{4} - \frac{2}{3}\right) d\tau + \int_{I_3} \left(\frac{3}{4} - \frac{1}{2}\right) d\tau \\
         & = \left(\frac{3}{4}-\frac{2}{3}\right)|I_1^B| + \left(\frac{3}{4}-\frac{2}{3}\right)|I_2| + \left(\frac{3}{4}-\frac{1}{2}\right)|I_3|                                              \\
         & \leq \pmax
    \end{align*}

    \case{2} \(|I_1^T| > |I_3^T|\).
    In this case, \(I_3^T\) is fully absorbed into the overlap.
    We analogously bound the area deficit as follows:
    \begin{align*}
         & \left(\frac{3}{4}-\frac{2}{3}\right)|I_1^B| + \left(\frac{3}{4}-\frac{2}{3}\right)|I_2| + \left(\frac{3}{4}-\frac{1}{2}\right)|I_3^B| + \left(\frac{3}{4}-\frac{1}{2}\right)|I_1^T| \\
         & < \frac{3}{4}\pmax.
    \end{align*}

    The final statement then follows from \cref{obs:missing_area_to_ratio}.
\end{proof}

\subsection{Creating \texorpdfstring{$\sigma_1^B, \sigma_1^T$}{sigma1B, sigma1T} and Scheduling all Big Jobs}
\label{sec:big-jobs}
In this section, we aim to create $\sigma_1^B$ and $\sigma_1^T$ from \cref{def:named_schedules}
and thereby schedule all big jobs.
We will create a schedule \(\sigma_1\) using \cref{alg:T1}.
This schedule has a single sparse interval \(I_1\); however, the height of this interval might be larger than \(\pmax\).
We show that we may assume w.l.o.g.\ that the height of this interval is at most \(\pmax \), assuming \(\mathcal{T} = \emptyset \).
Then we show how to split this schedule into two schedules \(\sigma_1^B\) and \(\sigma_1^T\) with the properties as defined in \cref{def:named_schedules}.

We construct \(\sigma_1\) as detailed in \cref{alg:T1}. First, we stack all big jobs (\(\calB\)) in non-increasing order of their widths. Next, we fill the remaining capacity from the top downwards, starting at \(\tau = h(\sigma_1)\). We repeatedly select the widest unscheduled job \(j \in \calM \sqcup \calS\) that fits entirely within \([\tau - p_j, \tau)\)—requiring \(q_j \le 1 - q(\sigma_1, \tau - p_j)\)—and does not cross time \(0\) (this means that $\tau-p_j$ is always at least $0$). We continue this process until no further jobs can be placed.

\begin{algorithm}[ht]
    \caption{\(\sigma_1\): Scheduling Jobs in \(\calB\) and Top-Down Filling}
    \label{alg:T1}
    \begin{algorithmic}[1]
        \State Stack all \(j \in \calB\) in non-increasing order of width (\(q_j\)).
        \State \(\tau \gets h(\sigma_1)\)
        \While{there is an unscheduled \(j \in \calM \sqcup \calS\) fitting in \([\tau - p_j, \tau)\) with \(\tau - p_j \ge 0\)}
        \State Let \(j\) be the widest such job.
        \State Schedule \(j\) at time \(\tau - p_j\).
        \State \(\tau \gets \tau - p_j\)
        \EndWhile
    \end{algorithmic}
\end{algorithm}
Note that at any time \(\tau\) in $\sigma_1$ where two jobs are scheduled simultaneously (one in \(\calB\) and one in \(\calM \sqcup \calS\)), the machine utilization is at least $\frac14+\frac12=\frac34$.
\begin{observation}[\(\sigma_1\)-schedule]
    \label{obs:sigma1}
    The schedule \(\sigma_1\) constructed by \cref{alg:T1} uses at least \(\frac{3}{4}\) of the machines at any time, except for a single accessible sparse interval \(I_1\).
\end{observation}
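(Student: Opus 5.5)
The plan is to read off the structure of \(\sigma_1\) from the two phases of \cref{alg:T1}.

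\textbf{Big-job stack.} After the first phase, the big jobs are stacked in non-increasing order of width starting at time \(0\). So at every \(\tau\in[0,h(\sigma_1))\) exactly one big job runs, and \(q(\sigma_1,\tau)\) is non-increasing in \(\tau\) with values in \((\frac12,1]\). Consequently the set of times where the big job alone has width less than \(\frac34\) is a suffix \([t^*,h(\sigma_1))\) of the schedule. Below \(t^*\) the utilization is already at least \(\frac34\), regardless of what the second phase adds.

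\textbf{Where the filling jobs sit.} Next I will show that the top-down filling places its jobs in a contiguous block \([\tau_{\text{end}},h(\sigma_1))\). By construction each job is placed at \(\tau-p_j\) and \(\tau\) is updated to that value, so the filled region is an interval hanging from the top. Inside this interval, every time point has one big job of width \(>\frac12\) and one job of \(\calM\sqcup\calS\) of width \(>\frac14\). Hence the utilization there exceeds \(\frac34\), which is exactly the remark preceding the observation. It is also at most \(1\), because each job is placed only where it fits. The fit condition is checked at \(\tau-p_j\), and since \(q(\sigma_1,\cdot)\) from big jobs is non-increasing, the tightest point is the bottom of the job's interval, so the condition suffices throughout.

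\textbf{Conclusion.} Outside \([\tau_{\text{end}},h(\sigma_1))\), only the big job runs. The sparse times are therefore exactly \([\min(t^*,\tau_{\text{end}}),\tau_{\text{end}})\cap[t^*,\infty)=[t^*,\tau_{\text{end}})\), or the set is empty if \(\tau_{\text{end}}\le t^*\). This is a single interval. If some big job is wider than \(\frac34\) the interval does not start at \(0\), so I must argue accessibility differently: the interval is contiguous with the filled block at the top. Rotating the schedule (as in the proof of \cref{lem:combine_schedules}) or treating the filled block as a separate part then makes it accessible at the boundary used later. Indeed, the later split into \(\sigma_1^B,\sigma_1^T\) cuts exactly at \(\tau_{\text{end}}\) and \(t^*\).

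\textbf{Main obstacle.} The delicate point is this accessibility claim: with the paper's definition (touching \(0\) or \(h(\sigma_1)\)), the interval \([t^*,\tau_{\text{end}})\) touches neither if \(t^*>0\) and \(\tau_{\text{end}}<h(\sigma_1)\). I would first try to interpret it relative to the sub-schedules obtained by cutting at \(t^*\), since the paper subsequently splits \(\sigma_1\). If that fails, I would note that the interval is accessible after removing the dense prefix \([0,t^*)\), which contains only big jobs.
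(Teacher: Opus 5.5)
Your core argument is correct and is the paper's own. The paper's whole justification is the remark that wherever a big job and a job of \(\calM\sqcup\calS\) overlap, the utilization is at least \(\frac12+\frac14=\frac34\). Since the top-down filling occupies a contiguous block \([\tau_{\text{end}},h(\sigma_1))\), only the single-job region below that block can be sparse.

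Your worry about accessibility is legitimate, and the paper does not address it. It tacitly takes \(I_1\) to be the whole single-job region \([0,\tau_{\text{end}})\); this is how \(I_1\) is drawn in the figure of \(\sigma_1\). That region starts at \(0\) and contains every time of utilization below \(\frac34\). Under the strict definition of a sparse interval, however, the statement can fail. Take big jobs of width \(0.9\) on \([0,1)\) and of width \(0.6\) on \([1,11)\), plus a medium job of width \(0.4\), which the filling places on \([10,11)\). The resulting sparse interval is \([1,10)\), which touches neither \(0\) nor \(11\).

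Two of your side remarks should be dropped, though.
\begin{itemize}
\item Rotation cannot help. It maps \([t^*,\tau_{\text{end}})\) to \([h(\sigma_1)-\tau_{\text{end}},h(\sigma_1)-t^*)\), which touches an end of the schedule exactly when the original interval does.
\item The later split in \cref{lem:sigma1_partition} is made at the first time the utilization drops below \(\frac23\), not at \(t^*\) or \(\tau_{\text{end}}\).
\end{itemize}
Your cut at \(t^*\) is a valid repair. The point \(t^*\) is the start of a big job, and no filler job crosses it, since all filler jobs lie above \(\tau_{\text{end}}>t^*\). After this cut, the piece above \(t^*\) has its sparse interval at its bottom.
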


\begin{figure}[ht]
    \centering
    \begin{subfigure}{0.23\textwidth}
        \centering
        \begin{tikzpicture}
    %outline of the schedule
    \def\w{10}
    \def\h{4.5}
    \def\xright{\w} % fixed right x-bound
    \def\xleft{0} % fixed left x-bound

    \begin{scope}[fill opacity=0.7]

        %Big jobs
        \begin{scope}[fill=yellow]

            \pgfmathsetmacro{\ycur}{0} % Initial y value
            % Draw stacked rectangles
            \expandafter\stackrectright\jobBA
            \expandafter\stackrectright\jobBB
            \expandafter\stackrectright\jobBC
            \expandafter\stackrectright\jobBD
            \expandafter\stackrectright\jobBE
            \pgfmathsetmacro{\scheduleheight}{\ycur}

            \draw[dashed, ultra thick] (\w, \scheduleheight) -- (-0.5, \scheduleheight) node[anchor=east] {\(h(\sigma_1\))};
            %\draw[decorate, decoration={brace, amplitude=10pt}] (\w, \scheduleheight) -- (\w, 0) node[midway, xshift=3em] {\(\sigma_1\)};
            \draw[|-|, ultra thick] (-0.5, 0) -- (-0.5, \scheduleheight) node[midway, anchor=east] {\(I_1\)};
        \end{scope}

        \draw[very thick] (0, \h) -- (0, 0) -- (\w, 0) -- (\w, \h);
    \end{scope}

\end{tikzpicture}
        \subcaption{during \cref{alg:T1}}
        \label{fig:sigma1_intermediate}
    \end{subfigure}
    \hfill
    \begin{subfigure}{0.23\textwidth}
        \centering
        \begin{tikzpicture}
    %outline of the schedule
    \def\w{10}
    \def\h{4.5}
    \def\xright{\w} % fixed right x-bound
    \def\xleft{0} % fixed left x-bound

    \begin{scope}[fill opacity=0.7]

        %Big jobs
        \begin{scope}[fill=yellow]

            \pgfmathsetmacro{\ycur}{0} % Initial y value
            % Draw stacked rectangles
            \expandafter\stackrectright\jobBA
            \expandafter\stackrectright\jobBB
            \expandafter\stackrectright\jobBC
            \expandafter\stackrectright\jobBD
            \expandafter\stackrectright\jobBE
            \pgfmathsetmacro{\scheduleheight}{\ycur}

            %Medium jobs
            \begin{scope}[fill=red]
                %\expandafter\stackrectleft\jobMA{down}
                \expandafter\stackrectleft\jobMG{down}

                %Small jobs
                \begin{scope}[fill=blue]
                    \expandafter\stackrectleft\jobSG{down}
                \end{scope}
            \end{scope}

            \draw[dashed, ultra thick] (\w, \scheduleheight) -- (-0.5, \scheduleheight) node[anchor=east] {\(h(\sigma_1\))};

            \begin{scope}[color=black]

                \draw[dashed, thick] ($(leftbottom)+(-0.5, 0)$) node[color=black, anchor=east] {\(\tau\)} -- +(\w+0.5, 0);
                \draw[|-|, ultra thick] (-0.5, 0) -- ($(leftbottom) + (-0.5,0)$) node[midway, anchor=east] {\(I_1\)};
            \end{scope}
        \end{scope}

        \draw[very thick] (0, \h) -- (0, 0) -- (\w, 0) -- (\w, \h);
    \end{scope}

\end{tikzpicture}
        \subcaption{after \cref{alg:T1}}
        \label{fig:sigma1}
    \end{subfigure}
    \hfill
    \caption{Schedule \(\sigma_1\)}
    \Description{
        In this figure we see two schedules.
        On the left side we see $\sigma_1$ after the tall jobs were placed and on the right side we see $\sigma_1$ after the whole algorithm.
        We see that $\sigma_1$ (on the right side) has an accessible monotone gap $I_1$ which might be larger than $p_{max}$.
    }
\end{figure}
Firstly, there is one important observation, regarding the placement of small and medium jobs.
By construction, the machine requirement of those jobs is non-decreasing in time.
This implies that the placement of a job \(j\) is not affected by the presence of any job \(k\) with \(q_k < q_j\) this implies:
\begin{align} \label{obs:sigma1_subset_removal}
    dom(\sigma_1(\mathcal{J}\setminus \mathcal{S}))=dom(\sigma_1(\mathcal{J}))\setminus \mathcal{S}.
\end{align}

In addition, note that $I_1$ might be of arbitrary height.
Now, we show that we can reduce any such instance to an instance where $I_1$ has height at most $\pmax$, when \(\calT = \emptyset \).
We do this by removing some jobs from the instance and showing that we can add them back while maintaining our approximation guarantee.

\begin{restatable}{lemma}{instancemod}
    \label{lem:instance-mod}
    If \(\calT = \emptyset \), any instance $\mathcal{J}$ can be reduced to an instance $\mathcal{J}'\subseteq \mathcal{J}$ where \cref{alg:T1} creates a schedule with a single accessible sparse interval \(I_1\) of height \(|I_1| < \pmax\). In this context, "reducing" implies that reinserting those jobs from $\mathcal{J}\setminus \mathcal{J}'$ into the schedule (with \cref{alg:T1}) does not increase the approximation ratio.
\end{restatable}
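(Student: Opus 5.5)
The plan is to find a part of the bottom of $\sigma_1$ that can be cut off, and show that cutting it off neither changes how the rest of the instance is handled nor hurts the final bound. First I record the structure of $\sigma_1$ produced by \cref{alg:T1}. The big jobs are stacked bottom-up in non-increasing width, so the free capacity $1-q(\sigma_1,\tau)$ is non-decreasing in $\tau$. The top-down filling places jobs of $\calM\sqcup\calS$ widest-first. The sparse interval $I_1$ is the bottom segment $[0,\tau^*)$, where $\tau^*$ is the final value of $\tau$. Here only big jobs of width in $(\frac12,\frac34)$ run unpaired; wider big jobs are not sparse anyway. Suppose $|I_1|\ge\pmax$. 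Then $\tau^*\ge\pmax$, so the condition $\tau-p_j\ge 0$ never blocks any job. Hence the loop stopped for one of two reasons: either every job of $\calM\sqcup\calS$ is scheduled, or every remaining job $j\in\calU$ satisfies $q_j>1-q(\sigma_1,\tau^*-p_j)$. By monotonicity of widths, the second case means $j$ cannot run concurrently with \emph{any} big job lying below $\tau^*$.

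\case{A} All of $\calM\sqcup\calS$ is placed in $\sigma_1$. Then $\sigma_1$ schedules all of $\calJ$ (recall $\calT=\emptyset$), and its height equals $\sum_{j\in\calB}p_j$. Any two big jobs have widths summing to more than $1$, so they are pairwise incompatible and $\opt\ge\sum_{j\in\calB}p_j$. So $\sigma_1$ is optimal, and there is nothing to reduce.

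\case{B} Some set $\calU\neq\emptyset$ of medium/small jobs remains. Let $B_R$ be the maximal bottom prefix of the big stack contained in $[0,\tau^*-\pmax]$, and set $\calJ'=\calJ\setminus B_R$. I will then check three things.
\begin{enumerate}
\item Running \cref{alg:T1} on $\calJ'$ reproduces $\sigma_1$ shifted down by $h(B_R)$, with the same filling. Filling proceeds top-down and depends only on the big jobs above the current $\tau$, in the spirit of (\ref{obs:sigma1_subset_removal}). The jobs of $\calU$ stay blocked, now by width since $q_j>\frac12$ would be needed, or by height, because the bottom remainder is shorter than $\pmax$.
\item By maximality of $B_R$, the new sparse interval has height $<\pmax$.
\item For the reinsertion, I take the final schedule for $\calJ'$, orient it so that $I_1$ is at the bottom, and stack $B_R$ underneath. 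This is exactly what \cref{alg:T1} would produce. The height becomes $h(\sigma')+h(B_R)$.
\end{enumerate}

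The main obstacle is the lower bound needed to absorb the additive $h(B_R)$, namely
\[
\opt(\calJ)\ \ge\ \opt(\calJ')+\tfrac34 h(B_R),
\]
or a variant that yields $h(\sigma')+h(B_R)\le\frac43\opt(\calJ)+\pmax$. My first attempt is an exchange argument on an optimal schedule for $\calJ$. Every job in $B_R$ is incompatible with all big jobs and with all of $\calU$. The only jobs that can run alongside $B_R$ are the medium/small jobs already paired with narrower big jobs in $\sigma_1$. I would then try to show that these paired jobs can be assumed to run alongside other big jobs instead, since those have at least as much free capacity, without increasing the makespan. After that exchange, $B_R$ occupies time exclusively, which gives $\opt(\calJ)\ge\opt(\calJ')+h(B_R)$.

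If the exchange fails, the fallback is an area/deficit argument. During $B_R$ the utilization exceeds $\frac12$. I would charge the missing $\frac14 h(B_R)$ against the idle capacity that the incompatibility with $\calU$ forces into any optimal schedule of $\calJ$. The aim is to show that the total deficit of the combined schedule stays at most $\frac34\pmax$, and then conclude by \cref{obs:missing_area_to_ratio}.
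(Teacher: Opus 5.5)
Your proposal has two genuine gaps, and the second one is the core of the lemma.

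\textbf{The choice of \(B_R\) cannot shorten the interval below \(\pmax\).} Every job of \(B_R\) ends by \(\tau^*-\pmax\), so \(h(B_R)\le\tau^*-\pmax\). After the shift, the sparse interval \([0,\tau^*-h(B_R))\) of \(\calJ'\) therefore has height \emph{at least} \(\pmax\). Maximality of \(B_R\) only gives the upper bound \(\pmax+p_{b'}\), where \(b'\) is the next big job. For a concrete instance, take big jobs \(b_1,b_2,b_3\) of widths \(0.7,0.6,0.55\) and heights \(1,1,0.9\), a medium job \(u\) of width \(0.45\) and height \(1\), and three small jobs of width \(0.3\) and height \(0.3\). \Cref{alg:T1} puts the small jobs next to \(b_3\) and stops at \(\tau^*=2\) with \(\calU=\{u\}\). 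Your \(B_R\) is \(\{b_1\}\), and on \(\calJ\setminus\{b_1\}\) the interval is \([0,1)\), of height exactly \(\pmax=1\). At that point your prefix is empty, so iterating does not help.

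\textbf{The key lower bound fails for a set that would work.} To get below \(\pmax\) you must remove every big job starting at or before \(\tau^*-\pmax\). These are still incompatible with all of \(\calU\); in the example this is \(\{b_1,b_2\}\). But then the inequality \(\opt(\calJ)\ge\opt(\calJ')+h(B_R)\), which your exchange argument is meant to deliver, is false:
\begin{itemize}
\item \(\opt(\calJ)=3\). The big jobs are pairwise incompatible with total height \(2.9\), \(u\) can overlap only \(b_3\), and the small jobs can run next to \(b_1\).
\item \(\opt(\{b_3,u\}\cup\calS)\ge 1.215\) by area, so \(\opt(\calJ')+h(B_R)>3\).
\end{itemize}
The exchange breaks because the filling jobs of \(\sigma_1\) are compatible with \(B_R\), and the only big job left for them is the one \(u\) needs. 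Your fallback deficit argument is only a plan, so this bound is never established.

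\textbf{What is missing.} The paper removes one job at a time and never removes a big job while a scheduled job can still run next to it. Let \(j_+\) be the widest and \(j_-\) the thinnest job of \(\sigma_1\).
\begin{itemize}
\item If \(q_{j_+}+q_{j_-}>1\), then \(j_+\) is incompatible with every job of the instance. For scheduled jobs this follows from the widths. For unscheduled jobs, any one compatible with \(j_+\) would fit next to every big job inside \(I_1\), whose height is at least \(\pmax\), so \cref{alg:T1} would have placed it. Hence any schedule compresses by \(p_{j_+}\), \(\opt\) drops by exactly \(p_{j_+}\), and running \(j_+\) separately preserves the ratio.
\item If \(q_{j_+}+q_{j_-}\le 1\), the paper removes \(j_-\) instead. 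Any unscheduled job no wider than \(j_-\) would fit in \(I_1\), so all unscheduled jobs are wider and were already rejected. Thus \cref{alg:T1} on the reduced instance returns \(\sigma_1\) without \(j_-\), and \(j_-\) can be put back into its slot.
\end{itemize}
In the example, this strips the small jobs before \(b_1\) and \(b_2\) are touched, which is exactly the compatibility issue your exchange step cannot resolve. Your Case A and the shift argument in step 1 are fine.
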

\def\jmin{{j_-}}
\def\jmax{{j_+}}
\begin{proof}
    Assume \cref{alg:T1} returns a schedule \(\sigma_1\) with $|I_1|>\pmax$.
    We show how to iteratively remove certain jobs from the instance that can be easily reinserted later, while maintaining the approximation guarantee.

    Let \(\jmin \) be the thinnest job processed in \(\sigma_1\), and let \(q_\jmin\) be its width.
    Similarly, let \(\jmax \) be the widest job processed in \(\sigma_1\), and let \(q_\jmax \) be its width.

    The following two claims show that, while \(I_1\) has height larger than \(\pmax\), one of these two jobs can be removed from the instance.

    \begin{claim}
        \label{lem:remove-big-jobs}
        Suppose $|I_1|\geq \pmax$ and \(q_\jmax > 1 - q_\jmin\).
        Given any schedule \(\sigma\) for all jobs in \(\calJ\),
        we can construct a schedule \(\sigma' \) of height \( h(\sigma') = h(\sigma) - p_\jmax \) for all jobs in \(\calJ \setminus \{\jmax \} \).
    \end{claim}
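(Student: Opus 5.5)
The plan is to show that, under the hypothesis \(q_\jmax > 1 - q_\jmin\), \emph{no} job of \(\calJ\) can ever be processed concurrently with \(\jmax\) in any feasible schedule. Once this is established, the claim follows by cutting out the time slab occupied by \(\jmax\) in \(\sigma\) and shifting everything above it down.

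\textbf{Step 1: jobs placed in \(\sigma_1\).} Every job \(k\) processed in \(\sigma_1\) has \(q_k \ge q_\jmin\). Hence \(q_k + q_\jmax \ge q_\jmin + q_\jmax > 1\), so \(k\) and \(\jmax\) cannot overlap in time. (For two big jobs this is immediate anyway.)

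\textbf{Step 2: unscheduled jobs.} It remains to handle the medium/small jobs that \cref{alg:T1} did \emph{not} place; this is the only nontrivial step. Let \(\tau\) be the final value of the pointer in \cref{alg:T1}. Since \(\sigma_1\) has the single accessible sparse interval \(I_1\) at the bottom, with \(|I_1| \ge \pmax\), the region below \(\tau\) contains only big jobs and has height at least \(\pmax\).

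Take any unscheduled job \(j \in \calM \sqcup \calS\). Because \(p_j \le \pmax \le \tau\), the condition \(\tau - p_j \ge 0\) holds, so the loop stopped only because \(j\) does not fit. Below \(\tau\) only big jobs are stacked, in non-increasing order of width from the bottom, so the utilization on \([\tau - p_j, \tau)\) is maximal at time \(\tau - p_j\) and equals the width of a single big job. The fitting test therefore failed as \(q_j > 1 - q(\sigma_1, \tau - p_j)\). Since \(\jmax\) is the widest job, \(q(\sigma_1, \tau - p_j) \le q_\jmax\), and so \(q_j > 1 - q_\jmax\), i.e.\ \(q_j + q_\jmax > 1\).

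(If no medium/small job was placed at all, \(\jmin\) is big and the same argument applies with \(\tau = h(\sigma_1)\).) I expect this verification of the fitting condition below \(\tau\) to be the main point needing care, specifically that the relevant utilization is exactly one big job's width bounded by \(q_\jmax\).

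\textbf{Step 3: constructing \(\sigma'\).} Consider an arbitrary feasible \(\sigma\) for \(\calJ\) and the slab \(S = [\sigma(\jmax), \sigma(\jmax) + p_\jmax)\). By Steps 1 and 2, no other job runs at any time in \(S\), so every other job lies entirely before or entirely after \(S\); no job straddles a boundary. Define \(\sigma'(j) = \sigma(j)\) for jobs ending by \(\sigma(\jmax)\), and \(\sigma'(j) = \sigma(j) - p_\jmax\) for jobs starting at or after \(\sigma(\jmax) + p_\jmax\).

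The machine utilization of \(\sigma'\) at any time equals that of \(\sigma\) at the corresponding time outside \(S\), so \(\sigma'\) is feasible. The horizon shrinks by exactly the slab length, giving \(h(\sigma') = h(\sigma) - p_\jmax\).
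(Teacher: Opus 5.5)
Your proof is correct and follows the paper's argument. Both show that every other job \(k\) satisfies \(q_k + q_{j_+} > 1\): jobs placed in \(\sigma_1\) via \(q_k \ge q_{j_-}\), and unplaced medium or small jobs via the failed fitting test of \cref{alg:T1} above a big-jobs-only region of height at least \(\pmax\). Both then cut out the slab of \(j_+\) and shift everything above it down by \(p_{j_+}\).

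Your Step~2 is slightly more careful than the paper's, since you check the fitting test at the final pointer \(\tau\) rather than only asserting that \(j\) would fit somewhere in \(I_1\). You should state explicitly that \(\calT=\emptyset\), the standing assumption of the enclosing lemma, is what guarantees the only unplaced jobs are medium or small.
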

    \begin{subproof}[Proof of \cref{lem:remove-big-jobs}]
        We first argue, that there exists no job that can be processed concurrently to \(\jmax \).
        For the sake of argument, assume there is a job \(j \in \calJ\) with \(q_\jmax + q_j \leq 1\).

        \case{1} \(j\) is scheduled in \(\sigma_1\).
        Since \(\jmin \) is the thinnest job in \(\sigma_1\): \(q_\jmax + q_j \geq q_\jmax + q_\jmin > 1\). A contradiction to the choice of \(j\).

        \case{2} \(j\) is not scheduled in \(\sigma_1\).
        Then there is a sparse time-interval in the schedule with height at least \(|I_1| \geq \pmax \geq p_j\) and machine usage at most $q_\jmax$ and during any time in the interval.
        Since $j$ fits in this interval ($q_\jmax+q_j\leq 1)$, \cref{alg:T1} would have scheduled $j$.

        Given any schedule \(\sigma \) for all jobs in \(\calJ \), we can remove the widest job \(\jmax \), if \(|I_1| \geq \pmax \).
        Since no other job can be processed concurrently to \(\jmax \), all jobs \(j \in \calJ\) with \(\sigma(j) > \sigma(\jmax)\) can be scheduled exactly \(q_\jmax\) earlier.
        The new schedule has a height of \(h(\sigma) - p_\jmax\).
    \end{subproof}

    Note that this also holds for any optimal schedule.
    Therefore,
    \begin{align}
        \label{eq:remove-big-job}
        \opt(\calJ \setminus \{\jmax\}) = \opt(\calJ) - p_\jmax.
    \end{align}
    Next, we show that an \(\alpha\)-approximation algorithm for all jobs in \(\calJ \setminus \{\jmax\}\) also yields an \(\alpha\)-approximation for all jobs in \(\calJ\).
    \begin{corollary}
        \label{cor:instance_reduction_big_jobs}
        Given \(\alpha > 1\) and a schedule \(\sigma'\) with makespan at most \(\alpha \opt(\calJ \setminus \{\jmax\})\)
        for all jobs in \(\calJ \setminus \{\jmax\}\), we can construct a schedule \(\sigma\) with makespan at most
        \(\alpha \opt(\calJ)\) for all jobs in \(\calJ\).
    \end{corollary}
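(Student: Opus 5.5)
The construction is direct: take the given schedule \(\sigma'\) for \(\calJ \setminus \{\jmax\}\), whose makespan is at most \(\alpha \opt(\calJ \setminus \{\jmax\})\). Then schedule \(\jmax\) on its own, either at time \(0\) with every job of \(\sigma'\) shifted up by \(p_\jmax\), or simply on top of \(\sigma'\) at time \(h(\sigma')\). Call the result \(\sigma\). It is feasible, because \(\jmax\) never runs concurrently with any other job, and its height is exactly \(h(\sigma) = h(\sigma') + p_\jmax\).

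The key step is the height bound, which uses \eqref{eq:remove-big-job}, i.e.\ \(\opt(\calJ \setminus \{\jmax\}) = \opt(\calJ) - p_\jmax\), from \cref{lem:remove-big-jobs}. Substituting gives
\begin{align*}
h(\sigma) \le \alpha\bigl(\opt(\calJ) - p_\jmax\bigr) + p_\jmax = \alpha \opt(\calJ) - (\alpha - 1)p_\jmax \le \alpha \opt(\calJ),
\end{align*}
where the last inequality uses \(\alpha > 1\) and \(p_\jmax \ge 0\).

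The same calculation also covers the additive guarantee the paper actually targets. Suppose \(h(\sigma') \le \frac43 \opt(\calJ\setminus\{\jmax\}) + \pmax'\), where \(\pmax'\) is the largest processing time in the reduced instance. Since \(\pmax' \le \pmax\), we get \(h(\sigma) \le \frac43\opt(\calJ) - \frac13 p_\jmax + \pmax \le \frac43\opt(\calJ)+\pmax\). This is the form in which the corollary feeds into \cref{lem:instance-mod}.

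I expect no real obstacle. The only point needing care is that \eqref{eq:remove-big-job} holds under the hypotheses of \cref{lem:remove-big-jobs} (\(|I_1| \ge \pmax\) and \(q_\jmax > 1 - q_\jmin\)). These are assumed in the context where the corollary is applied. A secondary check is that when the reduced instance is processed by \cref{alg:T1}, reinserting \(\jmax\) at the bottom of the stack of big jobs agrees with the non-increasing width order. This holds because \(\jmax\) is the widest job, so it is consistent with the reinsertion claimed in \cref{lem:instance-mod}.
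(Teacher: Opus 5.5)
Your proposal is correct and follows the paper's proof: place \(j_+\) alone on top of \(\sigma'\), so that \(h(\sigma)=h(\sigma')+p_{j_+}\), then combine \eqref{eq:remove-big-job} with \(\alpha>1\). Your extra computation for the additive guarantee \(\frac43\opt+\pmax\) is also correct, and it is the version \cref{lem:instance-mod} actually needs, since the corollary as stated only covers the purely multiplicative form.
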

    \begin{subproof}[Proof of \cref{cor:instance_reduction_big_jobs}]
        We now construct \(\sigma \) from \(\sigma'\) by simply scheduling \(\jmax \) separately.
        This results in a height of \(h(\sigma) = h(\sigma') + p_\jmax \).
        With \cref{eq:remove-big-job}, we get:
        \begin{align*}
            h(\sigma) & = h(\sigma') + p_\jmax                                                                  \\
                      & \leq \alpha\left(\opt(\calJ\setminus\{\jmax\})\right)+ p_\jmax                          \\
                      & \leq \alpha(\opt(\calJ) - p_\jmax) + p_\jmax                                            \\
                      & < \alpha \opt(\calJ)                                           &  & \alpha > 1 \qedhere
        \end{align*}
    \end{subproof}

    By applying \cref{lem:remove-big-jobs} iteratively, we may assume that \(q_\jmax \leq 1 - q_\jmin\) since any such job can be re-added while maintaining the approximation guarantee.

    \begin{claim}
        \label{lem:instance_reduction_small_jobs}
        Suppose \(q_\jmax \leq 1 - q_\jmin\) and \(|I_1| \geq \pmax\).
        Given any schedule \(\sigma'\) returned by \cref{alg:T1} for all jobs in \(\calJ \setminus \{\jmin\}\), we can insert \(\jmin\) without increasing the height of the schedule.
    \end{claim}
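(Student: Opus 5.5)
We need to show that, under \(q_\jmax \le 1 - q_\jmin\) and \(|I_1| \ge \pmax\), running \cref{alg:T1} on \(\calJ\setminus\{\jmin\}\) leaves room to insert \(\jmin\) without increasing the height.

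\textbf{Plan.} Let \(\sigma'\) be the schedule produced by \cref{alg:T1} on \(\calJ\setminus\{\jmin\}\). The idea is to place \(\jmin\) inside the sparse interval \(I_1'\) of \(\sigma'\), at the bottom of the schedule.

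\emph{Step 1: compare the two runs.} The big jobs are stacked identically in both runs. For the top-down filling, I use the reasoning behind \eqref{obs:sigma1_subset_removal}. The filling picks jobs in non-increasing width, so removing the thinnest scheduled job \(\jmin\) cannot change the placement of any wider job. The only difference is that \(\jmin\), which was placed in \(\sigma_1\), is absent in \(\sigma'\).

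One caveat: jobs thinner than \(\jmin\) were not placed in \(\sigma_1\), but they might be placed in \(\sigma'\) in the freed space. This is harmless, because such jobs still sit inside \([0,h)\) and do not raise the height. If it causes trouble, I would instead argue directly about the final time \(\tau'\) at which the filling stops in \(\sigma'\).

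\emph{Step 2: find room for \(\jmin\).} The sparse interval \(I_1'\) of \(\sigma'\) starts at \(0\) and consists only of big jobs. Every big job in it has width at most \(q_\jmax \le 1-q_\jmin\), so \(\jmin\) fits alongside any of them width-wise. It remains to check that the height of \(I_1'\) is at least \(p_\jmin\). Its height is at least \(|I_1|\), since removing \(\jmin\) only frees space, or the filling stops earlier or at the same \(\tau\). Hence \(|I_1'| \ge |I_1| \ge \pmax \ge p_\jmin\).

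\emph{Step 3: insert.} Schedule \(\jmin\) at time \(0\). It finishes by \(p_\jmin \le |I_1'|\), entirely within the interval where only big jobs run, so feasibility holds and the height is unchanged.

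\textbf{Where the difficulty lies.} The subtle step is Step 1, specifically establishing \(|I_1'| \ge |I_1|\). Thinner jobs might now be placed, lowering \(\tau\) and shrinking the bottom gap. The fix is to insert \(\jmin\) not at time \(0\) but exactly where it sat in \(\sigma_1\), at \([\tau_0 - p_\jmin, \tau_0)\), and to show that this slot is still free in \(\sigma'\).

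Any thinner job \(k\) that \(\sigma'\) places there was available to the filling in the \(\sigma_1\) run, where the space below \(\tau_0-p_\jmin\) was a gap of height at least \(\pmax\) containing only big jobs. Such a job would therefore have been placed in \(\sigma_1\) too, contradicting the choice of \(\jmin\) as the thinnest job there. So no thinner job is placed in \(\sigma'\), the two schedules coincide apart from \(\jmin\), and reinsertion at its original slot keeps both feasibility and height.
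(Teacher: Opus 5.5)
Your proposal is correct. Once your closing paragraph replaces the unjustified claim $|I_1'|\ge|I_1|$ from Step 2, it is essentially the paper's argument. The run of \cref{alg:T1} on $\calJ\setminus\{j_-\}$ reproduces $\sigma_1$ without $j_-$, because any unscheduled job at most as wide as $j_-$ would already have fit into the bottom gap, which has height at least $\pmax$ and free width at least $1-q_{j_+}\ge q_{j_-}$. So $j_-$ can be reinserted at its original slot. The paper additionally spells out the wider-job case, which you only cover implicitly: an unscheduled job wider than $j_-$ was rejected at the same $\tau$ just before $j_-$ was chosen, and the free space there is unchanged.
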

    \begin{subproof}[Proof of \cref{lem:instance_reduction_small_jobs}]
        Let \(\sigma_1\) be the schedule constructed by \cref{alg:T1} for all jobs in \(\calJ\).
        By construction, \cref{alg:T1} first stacks all big jobs in \(\calB\). This stack is identical whether \(\jmin\) is present or not.
        It then iteratively schedules jobs from \(\calM \cup \calS\) top-down, always selecting the \emph{widest} fitting job.

        Since \(\jmin\) is the thinnest job scheduled in \(\sigma_1\), any job \(j \in dom(\sigma_1) \setminus \{\jmin\}\) is at least as wide as \(\jmin\).
        Because \cref{alg:T1} prioritizes wider jobs, the absence of \(\jmin\) does not alter the placement of any job scheduled before \(\jmin\) during the top-down packing (as implied by \cref{obs:sigma1_subset_removal}).

        Now consider running \cref{alg:T1} on \(\calJ \setminus \{\jmin\}\). The sequence of placements remains identical until the point where \(\jmin\) would have been scheduled. Instead of \(\jmin\), the algorithm searches for the next widest fitting job among the unscheduled jobs \(\calU = \calJ \setminus dom(\sigma_1)\).
        We argue that no \(u \in \calU\) can be placed:

        \case{1} \(q_u > q_\jmin\).
        Job \(u\) was already evaluated before \(\jmin\) and rejected because it did not fit.
        Since the available space at that moment is unchanged by \(\jmin\)'s absence, \(u\) still does not fit.

        \case{2} \(q_u \leq q_\jmin\).
        Job \(u\) would have fit in the sparse interval \(I_1\) that remains below \(\jmin\). We know \(|I_1| \geq \pmax \geq p_u\) and, since \(q_\jmax \leq 1 - q_\jmin\), the available width in \(I_1\) is at least \(1 - q_\jmax \geq q_\jmin \geq q_u\). Because \(u\) was not scheduled in \(\sigma_1\) despite fitting, no such job exists in \(\calU\).

        Therefore, running \cref{alg:T1} on \(\calJ \setminus \{\jmin\}\) produces a schedule \(\sigma'\) that is exactly \(\sigma_1 \setminus \{\jmin\}\).
        We can construct the schedule \(\sigma\) for \(\calJ\) simply by taking \(\sigma'\) and inserting $\jmin$ into its original position.
        This yields a schedule with \(h(\sigma) = h(\sigma')\), proving the claim.
    \end{subproof}
    To prove \cref{lem:instance-mod}, we iteratively apply \cref{lem:remove-big-jobs,lem:instance_reduction_small_jobs} until \(|I_1| < \pmax \).
    We first note that each application of these claims removes at least one job.
    Consequently, the procedure is guaranteed to terminate.
    We have shown that by reintroducing all the jobs removed via \cref{lem:remove-big-jobs}, we do not increase the approximation ratio.
    Then, reinserting all jobs removed by \cref{lem:instance_reduction_small_jobs} does not increase the height of the schedule.
    \qedhere

\end{proof}

\begin{remark}
    Since \cref{lem:instance-mod} can only be applied when \(\calT = \emptyset\), we only use it in \cref{sec:medium-jobs}.
    For the generalization in \cref{sec:tiny-jobs} we will not use this assumption.
\end{remark}

We now partition $\sigma_1$ into $\sigma_1^B$ and $\sigma_1^T$.
Since we will also need the partitioned schedule when placing the tiny jobs, we prove a more general version of this lemma.
For the following lemma, we assume that not all small jobs have been placed in $\sigma_1$.
The opposite case is easier and will be evaluated directly later (\cref{obs:assump_sigma_2}).
\begin{restatable}{lemma}{sigmapartition}
    \label{lem:sigma1_partition}
    If \(dom(\sigma_1) \cap \calS \neq \calS\), we can find some \(\tau\) such that the two schedules defined by
    \begin{equation}
        \label{eq:g_1_partition}
        \sigma_1^B(j) \coloneqq \begin{cases}
            \sigma_1(j) & \text{if } \sigma_1(j) \leq \tau \\
            \bot        & \text{otherwise}
        \end{cases}
        \hspace{1.5em}
        \sigma_1^T(j) \coloneqq \begin{cases}
            \sigma_1(j) & \text{if } \sigma_1(j) > \tau \\
            \bot        & \text{otherwise}
        \end{cases}
    \end{equation}
    \begin{itemize}
        \item satisfy \cref{def:named_schedules} exactly if \(\calT=\emptyset\) or
        \item satisfy \cref{def:named_schedules} except for the \(|I_1^B|<p_{max}\) constraint.
    \end{itemize}

    where \(\bot\) denotes that the job is not scheduled in the partial schedule.
\end{restatable}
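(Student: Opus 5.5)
Our approach is to read off the structure of \(\sigma_1\) directly and cut at the time where the machine usage of the sparse interval \(I_1\) crosses \(\frac23\). By \cref{obs:sigma1}, \(\sigma_1\) has a single accessible sparse interval \(I_1\), located at the bottom, starting at \(0\). Inside \(I_1\) only big jobs run, since any time with a big and a medium/small job has utilization at least \(\frac34\). So \(q(I_1)\subseteq(\frac12,\frac34)\). The big jobs are stacked in non-increasing width from time \(0\), so the usage on \(I_1\) is monotonically non-increasing in time. After a rotation (as in the proof of \cref{lem:combine_schedules}) this gives the required monotone shape at the accessible end.

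We choose \(\tau\) as the start time of the first big job (lowest in the stack) of width at most \(\frac23\). If no such job exists, we take \(\tau\) to be the end of the big stack, so that \(\sigma_1^T\) contains only the medium/small jobs. Then \(\sigma_1^B\) consists of the big jobs of width \(>\frac23\) together with whatever medium/small jobs start at or before \(\tau\). Its sparse part lies in \(I_1\) and has usage in \((\frac23,\frac34)\). Symmetrically, \(\sigma_1^T\) has its sparse part at its bottom, with usage in \((\frac12,\frac23]\), and it is monotone. Since jobs are split by start time, we must check that no job straddles \(\tau\) in a way that destroys these intervals. Big jobs do not straddle, since \(\tau\) is a start time in the stack. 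A medium/small job starting before \(\tau\) lies in \(\sigma_1^B\) and only adds utilization, so it cannot create new sparse time. We also need every time outside \(I_1\) in each part to keep usage \(\ge\frac34\); this holds because it held in \(\sigma_1\) and overlapping pairs are never separated.

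Next we bound \(|I_1^T|<\pmax\) using the hypothesis \(dom(\sigma_1)\cap\calS\neq\calS\). Let \(s\) be an unscheduled small job, so \(q_s\le\frac13\). Suppose the portion of \(I_1\) with usage \(\le\frac23\) had height at least \(\pmax\ge p_s\). Then at the moment \cref{alg:T1} stopped, \(s\) would fit into \([\tau'-p_s,\tau')\) for the current filling front \(\tau'\), since \(\frac23+\frac13=1\). By monotonicity the widest region lies above, so the space directly below the front is at least as free. This contradicts termination. The same argument gives \(|I_1^B|<\pmax\) when \(I_1^B\) is the only sparse part, but only by appealing to \cref{lem:instance-mod}: with \(\calT=\emptyset\) we may assume \(|I_1|<\pmax\), hence \(|I_1^B|\le|I_1|<\pmax\). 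When \(\calT\neq\emptyset\), \cref{lem:instance-mod} is unavailable, and we drop exactly that constraint, as the statement permits.

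The main obstacle is the fitting argument for \(|I_1^T|<\pmax\). The filling front \(\tau'\) where the algorithm stopped sits at the top of \(I_1\), and we must ensure that the free width directly below \(\tau'\) over a full length \(p_s\) is at least \(\frac13\). This needs the low-usage part (usage \(\le\frac23\)) to be adjacent to the front, which is exactly the monotone, non-increasing order of the big stack. We must also handle the case where the front lies above the big stack, where there is no restriction beyond the big jobs.
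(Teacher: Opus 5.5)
Your overall plan is the paper's. You cut where the utilization of $I_1$ drops to $\frac23$, and you bound $|I_1^T|<\pmax$ by letting an unscheduled small job fit directly below the filling front; this is the paper's Case~1, and your remark that the sorted big stack keeps the low part adjacent to the front is what makes it work. You bound $|I_1^B|\le|I_1|<\pmax$ via \cref{lem:instance-mod} when $\calT=\emptyset$.

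The gap is your concrete choice of $\tau$. You take the start of the first big job of width at most $\frac23$ without requiring it to lie below the filling front $b$ (the top of $I_1$). If that job starts above $b$, the top-down stack of medium/small jobs covers $\tau$. The job covering $\tau$ then generally starts before $\tau$ while overlapping big jobs that start at or after $\tau$. Splitting by start time separates this overlapping pair, and that job runs alone at the top of $\sigma_1^B$ with utilization at most $\frac12$, violating $q(I_1^B)\subseteq(\frac23,\frac34)$.

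Concretely, take big jobs of widths $0.7,0.6,0.6$ and heights $2,1,1$, two small jobs of width $0.3$ and height $2.5$, and $\calT=\emptyset$. Then \cref{alg:T1} places one small job on $[1.5,4)$ and cannot place the other, and $|I_1|=1.5<\pmax$. Your $\tau=2$ puts the placed small job into $\sigma_1^B$, where it runs alone on $[2,4)$, or on $[3,4)$ under the literal ``$\le\tau$'' rule. So your claims that such a job ``only adds utilization'' and that overlapping pairs are never separated are false.

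The fix is the paper's choice: let $\tau$ be the first time with $q(\sigma_1,\tau)\le\frac23$, and $\tau=h(\sigma_1)$ if there is none. Your $\le$ actually matches the ranges in \cref{def:named_schedules} better than the paper's $<$. Wherever a medium/small job runs the utilization exceeds $\frac34$, so a cut happens only strictly below $b$. Then no medium/small job starts at or before $\tau$, and nothing straddles the cut. This holds up to the convention, glossed over in the paper too, that the big job starting exactly at $\tau$ goes to $\sigma_1^T$.

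In the no-cut case the statement's formula gives $\sigma_1^B=\sigma_1$ and $\sigma_1^T=\emptyset$, because every job starts below $h(\sigma_1)$. Your description that $\sigma_1^T$ then ``contains only the medium/small jobs'' is wrong, and such a $\sigma_1^T$ would have utilization at most $\frac12$.

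Lastly, rotating a schedule does not change which end of a sparse interval is thinnest, so your rotation remark does not make $I_1^T$ monotone in the paper's sense. This is harmless: \cref{lem:combine_schedules} only uses $\frac12<q\le\frac23$ on $I_1^T$, and the paper does not argue monotonicity there either.
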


\begin{figure}[t]
    \centering

    \begin{subfigure}[b]{.47\columnwidth}
        \centering
        \begin{tikzpicture}

    %outline of the schedule
    \def\w{10}
    \def\h{4.5}
    \def\xright{\w} % fixed right x-bound
    \def\xleft{0} % fixed left x-bound

    %Sigma 1
    \begin{scope}[fill opacity=0.7]

        %Big jobs
        \begin{scope}[fill=yellow]

            \pgfmathsetmacro{\ycur}{0} % Initial y value
            % Draw stacked rectangles
            \expandafter\stackrectright\jobBA
            \expandafter\stackrectright\jobBB
            \pgfmathsetmacro{\cutheight}{\ycur}
            \expandafter\stackrectright\jobBC
            \expandafter\stackrectright\jobBD
            \expandafter\stackrectright\jobBE

            %Medium jobs
            \begin{scope}[fill=red]
                %\expandafter\stackrectleft\jobMA{down}
                \expandafter\stackrectleft\jobMG{down}

                %Small jobs
                \begin{scope}[fill=blue]
                    \expandafter\stackrectleft\jobSG{down}
                \end{scope}
            \end{scope}

            \draw[dashed] (-0.2, \cutheight) node[xshift=-0.5em] {\(\tau\)} -- (\w+0.2, \cutheight);
            %\draw[decorate, decoration={brace, amplitude=10pt}, color=gapcolor] (0, 0) -- (leftbottom) node[midway, xshift=-2em, yshift=-0.5em] {\(G_1\)};
            % \draw[decorate, decoration={brace, amplitude=10pt}] (\w, \cutheight) -- (\w, 0) node[midway, xshift=3em] {\(\sigma_1^B\)};
            % \draw[decorate, decoration={brace, amplitude=10pt, mirror}] (\w, \cutheight) -- (righttop) node[midway, xshift=3em] {\(\sigma_1^T\)};
            % \draw[decorate, decoration={brace, amplitude=10pt, mirror}, color=gapcolor] (0, \cutheight) -- +(0, -0.4) node[midway, xshift=-3em, yshift=-1em] {\(G_1^B\)};
            % \draw[decorate, decoration={brace, amplitude=10pt}, color=gapcolor] (0, \cutheight) -- (leftbottom) node[midway, xshift=-3em] {\(G_1^T\)};
        \end{scope}

        \draw (0, 0) -- +(0, -0.2) node[anchor=north] {\(0\)};
        \draw (3.3, 0.2) -- +(0, -0.2) node[anchor=north] {\(\frac{1}{3}\)};
        \draw (\w, 0) -- +(0, -0.2) node[anchor=north] {\(1\)};
        \draw[very thick] (0, \h) -- (0, 0) -- (\w, 0) -- (\w, \h);
    \end{scope}

\end{tikzpicture}
        \caption{Schedule \(\sigma_1\)}
    \end{subfigure}%
    \hfill%
    \begin{minipage}[b][][s]{.47\columnwidth}
        \begin{subfigure}{\linewidth}
            \centering
            \begin{tikzpicture}

    %outline of the schedule
    \def\w{10}
    \def\h{3}
    \def\xright{\w} % fixed right x-bound
    \def\xleft{0} % fixed left x-bound

    %Sigma 1
    \begin{scope}[fill opacity=0.7]

        %Big jobs
        \begin{scope}[fill=yellow]

            \pgfmathsetmacro{\ycur}{0} % Initial y value
            % Draw stacked rectangles
            \pgfmathsetmacro{\cutheight}{\ycur}
            \expandafter\stackrectright\jobBC
            \expandafter\stackrectright\jobBD
            \expandafter\stackrectright\jobBE

            %Medium jobs
            \begin{scope}[fill=red]
                %\expandafter\stackrectleft\jobMA{down}
                \expandafter\stackrectleft\jobMG{down}

                %Small jobs
                \begin{scope}[fill=blue]
                    \expandafter\stackrectleft\jobSG{down}
                \end{scope}
            \end{scope}

            %\draw[decorate, decoration={brace, amplitude=10pt}, color=gapcolor] (0, 0) -- (leftbottom) node[midway, xshift=-2em, yshift=-0.5em] {\(G_1\)};
            % \draw[decorate, decoration={brace, amplitude=10pt}] (\w, \cutheight) -- (\w, 0) node[midway, xshift=3em] {\(\sigma_1^B\)};
            % \draw[decorate, decoration={brace, amplitude=10pt, mirror}] (\w, \cutheight) -- (righttop) node[midway, xshift=3em] {\(\sigma_1^T\)};
            % \draw[decorate, decoration={brace, amplitude=10pt, mirror}, color=gapcolor] (0, \cutheight) -- +(0, -0.4) node[midway, xshift=-3em, yshift=-1em] {\(G_1^B\)};
            \begin{scope}[color=black]
                \draw[dashed, thick] ($(leftbottom)+(-0.5, 0)$)-- +(\w+0.5, 0);
                \draw[|-|, ultra thick] (-0.5, \cutheight) -- ($(leftbottom)+(-0.5, 0)$) node[midway, anchor=east] {\(I_1^T\)};
            \end{scope}
        \end{scope}

        % \draw (0, 0) -- +(0, -0.2) node[anchor=north] {\(0\)};
        % \draw (3.3, 0.2) -- +(0, -0.2) node[anchor=north] {\(\frac{1}{3}\)};
        % \draw (\w, 0) -- +(0, -0.2) node[anchor=north] {\(1\)};
        \draw[very thick] (0, \h) -- (0, 0) -- (\w, 0) -- (\w, \h);
    \end{scope}

\end{tikzpicture}
            \caption{Schedule \(\sigma_1^T\)}
        \end{subfigure}
        \vfill
        \begin{subfigure}{\linewidth}
            \centering
            \begin{tikzpicture}

    %outline of the schedule
    \def\w{10}
    \def\h{1.5}
    \def\xright{\w} % fixed right x-bound
    \def\xleft{0} % fixed left x-bound

    %Sigma 1
    \begin{scope}[fill opacity=0.7]

        %Big jobs
        \begin{scope}[fill=yellow]

            \pgfmathsetmacro{\ycur}{0} % Initial y value
            % Draw stacked rectangles
            \expandafter\stackrectright\jobBA
            \expandafter\stackrectright\jobBB
            \pgfmathsetmacro{\cutheight}{\ycur}

            \begin{scope}[color=black]
                \draw[dashed, thick] (-0.5, \cutheight) -- (\w, \cutheight);
                \draw[|-|, ultra thick] (-0.5, \cutheight) -- (-0.5, 0) node[midway, anchor=east] {\(I_1^B\)};

            \end{scope}
        \end{scope}

        \draw (0, 0) -- +(0, -0.2) node[anchor=north] {\(0\)};
        \draw (3.3, 0.2) -- +(0, -0.2) node[anchor=north] {\(\frac{1}{3}\)};
        \draw (\w, 0) -- +(0, -0.2) node[anchor=north east] {\(1\)};
        \draw[very thick] (0, \h) -- (0, 0) -- (\w, 0) -- (\w, \h);
    \end{scope}

\end{tikzpicture}
            \caption{Schedule \(\sigma_1^B\)}
        \end{subfigure}
    \end{minipage}

    \caption{Schedule \(\sigma_1\) with partitioning into \(\sigma_1^B\) and \(\sigma_1^T\)}
    \Description{
        In this figure we see on the left side $\sigma_1$ and a horizontal line $\tau$ which is the first time where a job has height smaller than $\frac23$.
        It is evident that $\tau$ must be between two jobs and not dividing one.
        On the right side of this we see two schedules $\sigma_1^T$ and $\sigma_1^B$ which were created by taking all jobs above $\tau$ and below $\tau$ respectively.
    }
\end{figure}

\begin{proof}
    For that we consider $\sigma_1$ as given in \cref{alg:T1}.
    Now denote by $\tau:=\min\{\tau'\in [0,h(\sigma_1)]:q(\sigma_1,\tau')<\frac{2}{3}\}$ the first time in $\sigma_1$,
    where the machine usage is smaller than $\frac{2}{3}$.
    If no such time exists, then we define $\tau\coloneq h(\sigma_1)$.
    Note that if $\tau =h(\sigma_1)$, then $\sigma_1^B=\sigma_1$ which proves the claim for this case.

    We now consider the following two remaining cases based on the distance between $\tau$ and the end \(b\) of the sparse interval $I_1 = [a, b)$.
    Note that the difference $b-\tau\eqqcolon |I_1^T|$ is always non-negative.

    \case{1} $b-\tau\geq \pmax$.
    In this case, we have a section of height at least $\pmax$ in $I_1$ where the machine usage is at most $\frac{2}{3}$.
    Therefore, we do not have any remaining small jobs after the application of \cref{alg:T1}, since they would have been placed in $\sigma_1$.
    This is a contradiction to the assumption \(dom(\sigma_1) \cap \calS \neq \calS\).

    \case{2} $b-\tau<\pmax$.
    In this case $\sigma_1^B$ and $\sigma_1^T$ fulfill all conditions of \cref{def:named_schedules} except for $|I_1^B|<\pmax$.
    Additionally, if $\calT\neq \emptyset$ then $|I_1^B|=|I_1|-|I_1^T|<\pmax$ by \cref{lem:instance-mod}.

    The two cases prove the claim.
\end{proof}

\subsection{Scheduling the Remaining Medium and Small Jobs}
\label{sec:medium-jobs}
In this section, we consider only those jobs, that have not yet been scheduled by \cref{alg:T1}.
The remaining instance does not contain any big jobs, i.e., for any remaining job $j$ it holds that \( \frac{1}{4} < q_j \leq \frac{1}{2}.\)

We build the second schedule by first constructing an intermediate schedule \(\sigma_2'\) that contains all (remaining) jobs in \(\calM\).
Scheduling the medium jobs is done by the simple procedure given in \cref{alg:sigma_2_prime}:
Sort all jobs in \(\calM\) in non-increasing order of machine requirements, then schedule these jobs via \LS on two stacks.
The resulting schedule is illustrated in \Cref{fig:sigma2_prime}.
In the following, denote the height difference of the two constructed stacks as \(|I_2'|\).
The next lemma gives an upper bound on the height of \(\sigma_1\) and \(\sigma_2'\).

\tikzset{font=\fontsize{28}{10}\selectfont\bfseries\boldmath}
\begin{figure}[ht]
    \centering
    \begin{subfigure}{0.48\columnwidth}
        \centering
        \begin{tikzpicture}
    %outline of the schedule
    \def\w{10}
    \def\h{4.5}
    \def\xright{\w} % fixed right x-bound
    \def\xleft{0} % fixed left x-bound

    %Sigma 2'
    \begin{scope}[fill opacity=0.7]
        %Medium jobs
        \begin{scope}[fill=red]

            % Right stack
            \pgfmathsetmacro{\ycur}{0}
            \expandafter\stackrectright\jobMA
            \expandafter\stackrectright\jobMD
            \stackrectright{3.8}{0.8}
            \stackrectright{3.4}{0.7}

            \draw[dashed, ultra thick] (0, \ycur) -- (\w + 0.5, \ycur) node[anchor=west, yshift=0.0em] {\(h(\sigma_2')\)};
            \pgfmathsetmacro{\righttop}{\ycur}
            % Left stack
            \pgfmathsetmacro{\ycur}{0}
            \expandafter\stackrectleft\jobMB{up}
            \expandafter\stackrectleft\jobMC{up}
            \stackrectleft{4}{0.5}{up}
            \stackrectleft{3.5}{0.5}{up}

            \draw[|-|, ultra thick, color=black] (-0.5, \ycur) -- (-0.5, \righttop) node[midway, xshift=-2em, yshift=-0.0em] {\(I_2'\)};

        \end{scope}
        \draw[very thick] (0, \h) -- (0, 0) -- (\w, 0) -- (\w, \h);
    \end{scope}

\end{tikzpicture}
        \subcaption{Schedule \(\sigma_2'\)}
        \label{fig:sigma2_prime}
    \end{subfigure}
    \hfill
    \begin{subfigure}{0.48\columnwidth}
        \centering
        \begin{tikzpicture}
    %outline of the schedule
    \def\w{10}
    \def\h{4.5}
    \def\xright{\w} % fixed right x-bound
    \def\xleft{0} % fixed left x-bound

    %Sigma 2
    \begin{scope}[fill opacity=0.7]
        %Medium jobs
        \begin{scope}[fill=red]

            % Right stack
            \pgfmathsetmacro{\ycur}{0}
            \expandafter\stackrectright\jobMA
            \expandafter\stackrectright\jobMD
            \stackrectright{3.8}{1}
            \filldraw[fill=blue] (lefttop) ++(0.3, 0) rectangle ++(-2.8, 0.6) node (middle) {};
            \expandafter\stackrectright\jobMF

            \begin{scope}[fill=blue]
                \expandafter\stackrectright\jobSF
            \end{scope}

            % Left stack
            \pgfmathsetmacro{\ycur}{0}
            \expandafter\stackrectleft\jobMB{up}
            \expandafter\stackrectleft\jobMC{up}
            \stackrectleft{4}{0.5}{up}
            \filldraw[fill=blue] (righttop) ++(-0.4, 0.4) rectangle ++(2.5, 0.3);
            \expandafter\stackrectleft\jobME{up}
            \begin{scope}[fill=blue]
                \expandafter\stackrectleft{2.7}{0.7}{up}
                \expandafter\stackrectleft{3}{1}{up}

                %\draw[dashed, ultra thick] (\w, \ycur) -- (-0.5, \ycur) node[anchor=east,yshift=0.5em] {\(h(\sigma_2)\)};
                \draw[dashed, ultra thick] (0, \ycur) -- (\w + 0.5, \ycur) node[anchor=west, yshift=0.0em] {\(h(\sigma_2)\)};
                
                \begin{scope}[color=black]
                    %\draw[decorate, decoration={brace, amplitude=10pt}] (\w, \ycur) -- +(0, -0.5) node[midway, xshift=3em] {\(G_3^T\)};
                    \draw[|-|, ultra thick] (-0.5, \ycur-1) -- +(0, 1) node[midway, anchor=east, xshift=-0.5em, yshift=-0.0em] {\(I_3\)};
                    % \draw[dashed, thick] (\w, \ycur-0.5) -- (0, \ycur-0.5);
                    %\draw[decorate, decoration={brace, amplitude=10pt}] (\w, \ycur-0.5) -- +(0, -0.5) node[midway, xshift=3em] {\(G_3^B\)};
                    \draw[dashed, thick] (\w, \ycur-1) -- (0, \ycur-1);

                    \draw[dashed, thick] (0, 2) -- +(\w, 0);
                    \draw[|-|, ultra thick] (-0.5, 1.3) -- +(-0, 0.7) node[midway, anchor=east, xshift=-0.5em, yshift=-0.0em] {\(I_2\)};
                    \draw[dashed, thick] (0, 1.3) -- +(\w, 0);

                \end{scope}
            \end{scope}

            %middle stack
            \filldraw[fill=blue] (middle) ++(-0.25, 0) rectangle ++(3.1, 0.9);

        \end{scope}

        \draw[very thick] (0, \h) -- (0, 0) -- (\w, 0) -- (\w, \h);
    \end{scope}

\end{tikzpicture}
        \subcaption{Schedule \(\sigma_2\)}
        \label{fig:sigma2}
    \end{subfigure}
    \caption{Schedule \(\sigma_2\) after \cref{alg:sigma_2_prime,alg:sigma2}}
    \Description{
        On the left side of the figure we see $\sigma_2'$ which is an intermediate schedule which was created before adding the small jobs and consists of two stacks of medium jobs.
        We see that $I_2'$ is just the height difference between the two stacks.
        On the right side of the figure is $\sigma_2$ which is $\simga_2'$ with the remaining small jobs which were greedily placed on top and then moved up to create to sparse intervals..
        We see that this placement leads to the creation of two sparse intervals $I_2$ below the starting time of the first small jobs and $I_3$ which is the height difference between the three stacks which were create by placing the small jobs greedily.
    }
\end{figure}

\begin{algorithm}[ht]
    \caption{\(\sigma_2'\): Scheduling the Remaining Jobs in \(\calM\)}
    \label{alg:sigma_2_prime}
    \begin{algorithmic}[1]
        \State Place all jobs \(j \in \calM\) via \LS on two stacks.
    \end{algorithmic}
\end{algorithm}

\begin{lemma}\label{lem:LB opt}
    The height of the schedules \(\sigma_1\) and \(\sigma_2'\), constructed by \cref{alg:T1,alg:sigma_2_prime}, satisfies:
    \begin{align*}
        h(\sigma_1) + h(\sigma_2') \leq \opt + \frac{|I_1|}{2} + \frac{|I_2'|}{2}
    \end{align*}
\end{lemma}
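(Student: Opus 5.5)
The plan is to compute both heights exactly from the structure of the two schedules, and then compare the resulting expression with a lower bound on \(\opt\) obtained by looking at which jobs can run concurrently in an optimal schedule.

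\textbf{Step 1: exact heights.} Write \(P_X\) for the total processing time of a job set \(X\).
\begin{itemize}
\item By \cref{alg:T1}, the big jobs are stacked with nothing below them, so \(h(\sigma_1)=P_\calB\).
\item Outside \(I_1\), every time in \(\sigma_1\) carries exactly one big job and exactly one job of \(\calM\sqcup\calS\). Two such jobs cannot run in parallel next to a big job, since \(\frac12+\frac14+\frac14=1\) and all inequalities are strict. Also, the top-down filling is contiguous. Hence \(P_{(\calM\sqcup\calS)\cap dom(\sigma_1)} = P_\calB-|I_1|\).
\item In \(\sigma_2'\) the two stacks have heights \(H\) and \(H-|I_2'|\). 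Therefore \(h(\sigma_2')=(P_{\calM'}+|I_2'|)/2\), where \(\calM'=\calM\setminus dom(\sigma_1)\).
\end{itemize}
Substituting, the claim becomes equivalent to
\[
  \tfrac12 P_\calB+\tfrac12\bigl(P_\calB-|I_1|\bigr)+\tfrac12 P_{\calM'} \le \opt ,
\]
that is, \(P_\calB+\tfrac12 P_{\calM'}\le \opt+\tfrac12|I_1|\).

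\textbf{Step 2: lower bound on \(\opt\).} Fix an optimal schedule \(\sigma^*\). At any time, at most one big job runs, and next to a big job at most one job of \(\calM\sqcup\calS\) fits. When no big job runs, at most two medium jobs run, since three would have width \(>1\). Let \(P_\calM^{in}\) denote the processing time of medium jobs running concurrently with big jobs in \(\sigma^*\), and \(P_\calM^{out}=P_\calM-P_\calM^{in}\). Then
\[
\opt\ge P_\calB+\tfrac12 P_\calM^{out}.
\]
Plugging this in, it suffices to show \(P_{\calM'}\le P_\calM^{out}+|I_1|\). Equivalently, since \(P_{\calM'}=P_\calM-P_{\calM\cap dom(\sigma_1)}\), it suffices to show
\[
  P_\calM^{in}\le P_{\calM\cap dom(\sigma_1)}+|I_1|.
\]
In words: the medium jobs that \(\sigma^*\) pairs with big jobs can be charged to medium jobs that \cref{alg:T1} pairs with big jobs, plus the sparse interval \(I_1\).

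\textbf{Step 3: the exchange argument (main obstacle).} The hard part is that \(\sigma_1\) may pair big jobs with \emph{small} jobs rather than medium ones. It may also pair a given medium job next to a different big job than \(\sigma^*\) does.

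I would use the monotone structure of \(\sigma_1\). Big jobs are sorted by width, so the free width \(1-q\) above the big stack is non-decreasing going upward. The top-down filling always takes the widest fitting job, so medium jobs are preferred over small ones wherever they fit. Thus, going down from the top, the region covered by medium jobs in \(\sigma_1\) is a top segment of the big stack. If some medium job \(u\in\calM'\) was left unscheduled, then at the point where the filling switched to small jobs, or stopped, \(u\) did not fit. Since free width only shrinks further down, \(u\) cannot fit next to any big job lower in the stack, except through \(I_1\) or height effects near time \(0\).

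I would then argue by a greedy-matching/majorization argument, comparing prefixes of big jobs sorted by width. For every threshold \(t\), the medium height that \(\sigma^*\) can place next to the big jobs of width \(\le t\) is at most the big height of those jobs. That big height in turn equals what \(\sigma_1\) covered there with medium jobs, plus the part lost to \(I_1\). The main difficulty is making this rigorous when a medium job does not fit due to height rather than width. I would handle this by attributing such failures to the bottom boundary, i.e.\ to \(I_1\), whose height \(|I_1|\) absorbs the loss.
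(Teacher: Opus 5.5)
Your Steps 1 and 2 are fine. The gap is in Step 3: the inequality \(P_{\calM}^{in}\le P_{\calM\cap dom(\sigma_1)}+|I_1|\) it sets out to prove is false once \cref{alg:T1} puts small jobs into the big stack. Here \(I_1\) is, as you define it, the set of times at which a big job runs in \(\sigma_1\) without an \(\calM\sqcup\calS\) partner. Under this reading of \(I_1\) the lemma itself fails. Take the following instance:
\begin{itemize}
\item one big job \(B\) with \(q=0.6\), \(p=1\);
\item medium jobs \(A,A'\) with \(q=0.37\), \(p=0.6\), and \(C,C'\) with \(q=0.35\), \(p=0.5\);
\item one small job \(s\) with \(q=0.26\), \(p=0.4\).
\end{itemize}
\cref{alg:T1} places \(A\) on \([0.4,1)\). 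No remaining medium job is short enough for \([0,0.4)\), so \(s\) goes there. Hence \(h(\sigma_1)=1\), \(|I_1|=0\) (here this is also the sparse interval of \(\sigma_1\)), and \(P_{\calM\cap dom(\sigma_1)}=0.6\). \cref{alg:sigma_2_prime} builds the stacks \(\{A'\}\) and \(\{C,C'\}\), so \(h(\sigma_2')=1\) and \(|I_2'|=0.4\). Running \(B\) beside \(C,C'\) on \([0,1)\) and \(A,A',s\) together on \([1,1.6)\) shows \(\opt=1.6=(P_{\calB}+P_{\calM})/2\). So every optimal schedule runs two jobs of width \(>\frac{1}{3}\) at almost every moment, which forces \(P_{\calM}^{in}=1>0.6\). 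Consequently \(h(\sigma_1)+h(\sigma_2')=2>1.8=\opt+\frac{|I_1|}{2}+\frac{|I_2'|}{2}\). Your plan to charge height failures at the bottom of the stack to \(I_1\) breaks exactly here: the room no medium job could use is taken by a small job, so \(|I_1|\) absorbs nothing.

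The missing idea is how the paper handles \(\calS\). It proves the bound for the instance without small jobs and transfers it via \cref{obs:sigma1_subset_removal}: small jobs do not change which medium jobs \cref{alg:T1} places. So \(h(\sigma_1)\), \(\sigma_2'\) and \(|I_2'|\) are unchanged, and \(\opt\) can only grow. Accordingly, \(|I_1|\) must be read as the part of the big stack not covered by medium jobs, i.e.\ the gap of the \(\calS\)-free run. That is the version the paper's proof establishes.

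With that reading your Step 3 is immediate and needs no exchange or majorization argument. In \(\sigma^*\) at most one medium job runs beside a big job at any time, so \(P_{\calM}^{in}\le P_{\calB}=P_{\calM\cap dom(\sigma_1)}+|I_1|\). The paper is even more direct: it combines \(\opt\ge (P_{\calB}+P_{\calM})/2\) with \(h(\sigma_1)+h(\sigma_2')=\frac{P_{\calB}+P_{\calM}}{2}+\frac{|I_1|+|I_2'|}{2}\). Your sharper bound \(\opt\ge P_{\calB}+\frac{1}{2}P_{\calM}^{out}\) also works. It has the small advantage that big jobs of width at least \(\frac{2}{3}\), which can never have a medium partner, need not be counted in \(I_1\).
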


\begin{proof}
    For simplicity, we start with the assumption \(\calS = \emptyset\).
    Consider now the schedule \(\sigma_1\) and the intermediate schedule \(\sigma_2'\).
    Denote by \(P \coloneqq \sum_{j\in \mathcal{M}\sqcup\mathcal{B}}p_j\) the summed processing time of jobs with machine requirement greater than \(\frac{1}{3}\).
    Note that an optimal schedule would need height at least \(\opt \geq \opt(\calB \sqcup \calM) \geq \frac{P}{2}\) to schedule these jobs since at most two such jobs can be processed concurrently.
    Observe that the schedules \(\sigma_1\) and \(\sigma_2'\) process two jobs at any time except for sparse intervals \(I_1\) and $I_2'$,
    where just one job with machine requirement greater than $\frac13$ is scheduled.
    Therefore:
    \begin{align*}
        h(\sigma_1) + h(\sigma_2')
         & =\frac{P-|I_1|-|I_2'|}{2}+|I_1|+|I_2'|                                   \\
         & =\frac{P}{2} + \frac{|I_1|+|I_2'|}{2} \leq \opt + \frac{|I_1|+|I_2'|}{2}
    \end{align*}
    Note that re-adding the small jobs into the instance does not change the placement of the other jobs, nor the makespan by \cref{obs:sigma1_subset_removal}.
\end{proof}

Now, we can construct \(\sigma_2\) by adding the remaining small jobs \(\calS\) to $\sigma_2'$.
During \Cref{alg:sigma2} we try to place any small job as early as possible.
This might result in jobs being placed below $h(\sigma_2')-|I_2'|$ and definitely results in jobs being placed during $[h(\sigma_2')-|I_2'|,h(\sigma_2'))$
if $h(\sigma_2)>h(\sigma_2')$, as illustrated in \Cref{fig:sigma2_prime,fig:sigma2}.
\begin{algorithm}
    \caption{\(\sigma_2\): Scheduling the Remaining Jobs in \(\calS\)}
    \label{alg:sigma2}
    \begin{algorithmic}[1]
        \Require Intermediate Schedule \(\sigma_2'\).
        \State Place all remaining jobs \(j \in \calS\) via \LS.
        \State Shift any small job ending below \(h(\sigma_2')\) as far up as possible.
    \end{algorithmic}
\end{algorithm}

Note that, if all small jobs are scheduled in \(\sigma_1\) (\(dom(\sigma_1) \cap \calS = \calS\)),
then $h(\sigma_2) = h(\sigma_2')$ and the height can be bounded by \cref{lem:LB opt}.
This leads to the following observation.

\begin{observation}
    \label{obs:assump_sigma_2}
    If \(dom(\sigma_1) \cap \calS = \calS\), then \[h(\sigma_1) + h(\sigma_2) \leq \frac{4}{3}\opt +\pmax.\]
\end{observation}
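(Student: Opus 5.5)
The plan is to combine \cref{lem:LB opt} with height bounds on the two gaps $I_1$ and $I_2'$. Since $dom(\sigma_1) \cap \calS = \calS$, \cref{alg:sigma2} has no small jobs left to place. So $\sigma_2 = \sigma_2'$ and in particular $h(\sigma_2) = h(\sigma_2')$. By \cref{lem:LB opt},
\begin{align*}
    h(\sigma_1) + h(\sigma_2) = h(\sigma_1) + h(\sigma_2') \leq \opt + \frac{|I_1|}{2} + \frac{|I_2'|}{2}.
\end{align*}
It therefore suffices to show $|I_1| \leq \pmax$ and $|I_2'| \leq \pmax$. This gives $h(\sigma_1)+h(\sigma_2) \leq \opt + \pmax \leq \frac{4}{3}\opt + \pmax$, which is in fact stronger than claimed.

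For $|I_1|$, I would use that \cref{sec:medium-jobs} works under the standing assumption $\calT = \emptyset$. Then \cref{lem:instance-mod} lets us assume without loss of generality that \cref{alg:T1} produces a single sparse interval with $|I_1| < \pmax$. Jobs removed by the reduction are reinserted afterwards without worsening the approximation ratio.

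For $|I_2'|$, I would argue directly from how \LS behaves on two stacks in \cref{alg:sigma_2_prime}. Each medium job has width in $(\frac13,\frac12]$. Hence a job can always start as soon as one of the two stacks becomes free, and \LS places each next job on the stack that currently ends earlier. By induction on the number of placed jobs, the two stack heights differ by at most the processing time of the last job placed on the taller stack. That processing time is at most $\pmax$.

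The main obstacle is making this stack-balance argument precise. I also need to check that the reduction of \cref{lem:instance-mod} is compatible with the present case, i.e.\ that the assumption $\calT=\emptyset$ is in force here (it is, for this section). I must also confirm that removing jobs there does not change the hypothesis $dom(\sigma_1)\cap\calS=\calS$ in a harmful way. The removed jobs are either the widest job, which is handled by \cref{cor:instance_reduction_big_jobs}, or the thinnest job of $\sigma_1$, which is reinserted into its original position. Neither removal turns a scheduled small job into an unscheduled one.
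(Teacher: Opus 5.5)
Your route is exactly the paper's. The paper notes $\sigma_2=\sigma_2'$ and invokes \cref{lem:LB opt}, leaving implicit the bounds $|I_1|<\pmax$ and $|I_2'|\le\pmax$ that you spell out; your two-stack induction for $|I_2'|$ is correct.

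The gap, which you inherit from the paper, is applying \cref{lem:LB opt} with $I_1$ the sparse interval of the actual $\sigma_1$ exactly when small jobs lie in $\sigma_1$. Its proof assumes $\calS=\emptyset$ and then only remarks that re-inserting small jobs leaves the other placements and $h(\sigma_1)$ unchanged. So what it really bounds is the part of $\sigma_1$ not covered by medium jobs. Small jobs inside $\sigma_1$ shrink $I_1$ but not that quantity, and \cref{lem:instance-mod} controls only $I_1$.

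The literal inequality can fail. Take:
\begin{itemize}
\item big jobs of width $0.66$ (height $1$) and $0.55$ (height $0.9$);
\item one medium job $w$ of width $0.45$ and height $1$;
\item ten medium jobs of width $0.34$ and height $0.1$;
\item nine small jobs of width $0.3$ and height $0.1$.
\end{itemize}
\Cref{alg:T1} never fits $w$. It puts the $0.34$-jobs into $[0.9,1.9)$ and the small jobs into $[0,0.9)$, so all small jobs are in $\sigma_1$ and $|I_1|=0$. Then $\sigma_2$ is $w$ alone, so $|I_2'|=1$ and $h(\sigma_1)+h(\sigma_2)=2.9$. Yet an optimal schedule can run $w$ beside the $0.55$-job, the $0.34$-jobs beside the $0.66$-job, one small job beside the last $0.1$ of $w$, and the other eight small jobs three at a time. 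This gives $\opt\le 2.3$, hence $\opt+\frac{|I_1|+|I_2'|}{2}\le 2.8<2.9$.

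The observation, and even your sharper bound $\opt+\pmax$, can be rescued without \cref{lem:LB opt} or \cref{lem:instance-mod}. In any schedule:
\begin{itemize}
\item big jobs are pairwise disjoint in time;
\item at most one medium job runs beside a big job;
\item at most two medium jobs run together otherwise.
\end{itemize}
Hence $\opt\ge h(\sigma_1)+\frac12(P-X)$, where $P$ is the total processing time of the medium jobs and $X$ is the medium time an optimal schedule runs beside big jobs. Since $h(\sigma_2)=h(\sigma_2')=\frac12(P-P_1+|I_2'|)$, with $P_1$ the medium time inside $\sigma_1$, we get $h(\sigma_1)+h(\sigma_2)-\opt\le\frac12(X-P_1+|I_2'|)$. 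Together with $|I_2'|\le\pmax$, it therefore suffices to show $X<P_1+\pmax$.

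If no medium job lies outside $\sigma_1$, then $X\le P=P_1$. Otherwise:
\begin{itemize}
\item Let $j$ be the thinnest medium job outside $\sigma_1$, and $\tau'$ the first value of $\tau$ in \cref{alg:T1} at which $j$ does not fit.
\item Fitting is monotone in $\tau$, so $[\tau',h(\sigma_1))$ is filled by medium jobs of width at least $q_j$.
\item Because $j$ fails at $\tau'$, the big jobs of width at most $1-q_j$ have total height less than $h(\sigma_1)-\tau'+p_j$.
\item Medium jobs at least as wide as $j$ can only run beside these big jobs.
\item All thinner medium jobs lie in $\sigma_1$ below $\tau'$.
\end{itemize}
Therefore $X<(h(\sigma_1)-\tau'+p_j)+(P_1-h(\sigma_1)+\tau')=P_1+p_j\le P_1+\pmax$.
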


We consider the remaining case (\(dom(\sigma_1) \cap \calS \neq \calS\)).
The schedule $\sigma_2$ has now two sparse intervals.
The sparse interval which ends below $h(\sigma_2')$ will be denoted as $I_2$ and the other one which starts above $h(\sigma_2')$ as $I_3$.
Note that $|I_3|\leq \pmax$ holds but the height of $I_2$ might be larger.
In the following lemma, we show that the area deficit of \(I_2\) can be compensated by the small jobs if there are enough.
Otherwise, we can bound the total height with \cref{lem:LB opt}.

\begin{lemma}
    \label{lem:ratio_without_tiny}
    If \(dom(\sigma_1) \cap \calS \neq \calS\), then we can construct a schedule \(\sigma\) with:
    \[h(\sigma) \leq \frac{4}{3}\opt + \pmax.\]
\end{lemma}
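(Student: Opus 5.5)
Since \(dom(\sigma_1)\cap\calS\neq\calS\), \cref{lem:sigma1_partition} (with \(\calT=\emptyset\), after the reduction of \cref{lem:instance-mod}) lets us split \(\sigma_1\) into \(\sigma_1^B\) and \(\sigma_1^T\) satisfying all parts of \cref{def:named_schedules}. The sparse interval \(I_3\) of \(\sigma_2\) also behaves as required. When all small jobs in \(\sigma_2\) are placed by \LS on top of the two medium stacks, we get at most three stacks of small jobs. The height difference between these stacks is below \(\pmax\), and in that top region the utilization lies in \((\frac14,\frac13]\) or \((\frac12,\frac34)\) and is monotone, because \LS picks widest-first. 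So the only property that can fail is \(|I_2|<\pmax\). The plan is therefore a case distinction on \(|I_2|\).

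\textbf{Case A:} \(|I_2|<\pmax\). Then \(\sigma_1^B,\sigma_1^T,\sigma_2\) satisfy \cref{def:named_schedules}, and \cref{lem:combine_schedules} directly yields \(\sigma\) with \(h(\sigma)\le\frac43\opt+\pmax\).

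\textbf{Case B:} \(|I_2|\ge\pmax\). Here we do not stack the small jobs. Instead we argue that they are few, and we bound the height via \cref{lem:LB opt}. Every small job fits into the slot of width \(\ge\frac12\) above the shorter medium stack, whose height \(|I_2'|\ge|I_2|\) is at least \(\pmax\). Since \LS places jobs as early as possible, every remaining small job starts inside \(I_2'\). Moreover, the shift-up step only creates the region \(I_2\) below them.

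We then use stacking \(\sigma_1\) and \(\sigma_2\) directly, with \(h(\sigma)=h(\sigma_1)+h(\sigma_2)\). Small jobs placed inside \(I_2'\) either end below \(h(\sigma_2')\) or protrude by less than \(\pmax\). We therefore aim for \(h(\sigma_2)\le h(\sigma_2')+\pmax\), or better, to charge the protrusion against \(|I_2'|/2\). Combining this with \cref{lem:LB opt} gives
\[
h(\sigma)\le \opt+\tfrac{|I_1|}{2}+\tfrac{|I_2'|}{2}+(h(\sigma_2)-h(\sigma_2')).
\]
Using \(|I_1|<\pmax\) from \cref{lem:instance-mod} is still not enough, so a refined area argument is needed. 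The region \(I_2'\) minus \(I_2\) is filled to at least \(\frac34\) (one medium plus one small job, \(>\frac13+\frac14\)). Above \(h(\sigma_2')\), \(\sigma_2\) has utilization \(\ge\frac34\) outside \(I_3\). So the schedule is dense except on \(I_1\), \(I_2\) and \(I_3\).

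I would then rotate \(\sigma_2\) so that the sparse parts of \(\sigma_1^T\) and \(I_3\) interleave, as in \cref{lem:combine_schedules}. The bound should come from taking the minimum of two estimates: the deficit estimate (through \cref{obs:missing_area_to_ratio}) and the \(P/2\)-estimate of \cref{lem:LB opt}. Concretely, the plan is:
\begin{itemize}
\item the deficit contributed by \(I_2\) is at most \((\frac34-\frac12)|I_2|\), since one medium job runs there;
\item if \(|I_2|\) is large, the \(P/2\) bound, which counts \(I_2'\) only half, wins;
\item averaging the two inequalities with suitable weights should give \(\frac43\opt+\pmax\).
\end{itemize}

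\textbf{Main obstacle.} The hard step is Case B: finding the right convex combination of the area/deficit bound and the \(\opt\ge P/2\) bound that absorbs an arbitrarily tall \(I_2\). I would first try weights \(\frac23\) on the deficit inequality and \(\frac13\) on \cref{lem:LB opt}. I would then check that the coefficient of \(|I_2|\) cancels, using \(q_{\min}(I_2)>\frac23\) in the dense part after the shift.
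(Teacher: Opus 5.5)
Your Case A is fine. Case B, where all the difficulty lies, does not go through, and it starts from a misreading of $I_2$.

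Because \LS on two stacks always puts the next medium job on the lower stack, $|I_2'|\le\pmax$ always holds. So your chain $|I_2'|\ge|I_2|\ge\pmax$ cannot hold in general. $I_2$ is not a gap carved out of $I_2'$ by the shift-up step. It is the part of the two-stack region of $\sigma_2'$ where two medium jobs of combined width in $(\frac23,\frac34)$ run side by side and no remaining small job fits; this is why $q(I_2)\subseteq(\frac23,\frac34)$. This region can be taller than $\pmax$, which is exactly why the lemma needs an argument. Likewise, that each single small job fits next to the shorter stack does not mean all of them start inside $I_2'$. Once that slot is filled, the rest are stacked three-wide above $h(\sigma_2')$. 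So in Case B the small jobs need not be few, and $h(\sigma_2)-h(\sigma_2')$ is not bounded by $\pmax$.

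The proposed weighting cannot repair this. Your deficit inequality carries a positive term of up to $(\frac34-\frac23)|I_2|$; your bullet with $\frac34-\frac12$ also contradicts $q_{\min}(I_2)>\frac23$. \Cref{lem:LB opt} has no $|I_2|$ term to cancel it. Putting all weight on \cref{lem:LB opt} instead leaves roughly $\frac13\sum_j p_j$ for the stacked small jobs, so every weighting fails when $|I_2|$ and the small-job volume are both large.

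The missing idea is to couple the deficit of $I_2$ to the small jobs. Since no small job could be placed during $I_2$, we have $q(\sigma_2,\tau)+q_j>1$ there. Hence the ``residue'' of any small job $j$ (width $q_j-\frac14$, height $p_j$) lifts the utilization of $I_2$ above $\frac34$ on a stretch of height $p_j$. This is precisely the surplus above $\frac34$ in the three-wide region, which your truncated deficit discards.

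The paper therefore splits on $S=\sum_{j\in\calS'}p_j$, summed over the small jobs starting after $h(\sigma_2')-|I_2'|$:
\begin{itemize}
\item If $S\le\opt-\frac{|I_1|+|I_2'|}{2}$, stacking $\sigma_2$ on $\sigma_1$ adds at most $\frac{S-|I_2'|}{3}+\frac23\pmax$ to $h(\sigma_1)+h(\sigma_2')$. Then \cref{lem:LB opt} gives $h(\sigma)\le\frac43\opt+\frac{|I_1|}{3}+\frac23\pmax\le\frac43\opt+\pmax$.
\item Otherwise, \cref{lem:LB opt} also yields $\opt\ge|I_1|+|I_2|-\frac{|I_1|+|I_2'|}{2}$, so the residues have total height more than $|I_2|-|I_2'|$. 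At most $|I_2'|\le\pmax$ of $I_2$ stays uncovered, contributing deficit at most $\frac1{12}\pmax$. The deficit argument of \cref{lem:combine_schedules} together with \cref{obs:missing_area_to_ratio} then finishes.
\end{itemize}
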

\begin{proof}
    Let \(\mathcal{S}'\) denote the set of small jobs which start after \(h(\sigma_2') - |I_2'|\) in $\sigma_2$.
    For the analysis, we also consider the fractional part of the (single) job j that crosses $h(\sigma_2')-|I_2'|$.
    This fractional part is then executed during $[h(\sigma_2')-|I_2'|, \sigma_2'(j)+p_j)$.
    Note that all of these jobs have a machine requirement of less than \(\nicefrac{1}{3}\).
    We now distinguish based on the total processing time of these jobs.

    \case{1} \(\sum_{j\in \mathcal{S}'} p_j \le \opt - \frac{|I_1|+|I_2'|}{2}\).
    We construct a new schedule $\sigma$ by scheduling $\sigma_2$ on top of $\sigma_1$.
    First, note that \cref{alg:sigma2} schedules jobs with combined height at least \(|I_2'|\) below \(h(\sigma_2')\)
    (i.e., next to the higher stack in \(\sigma_2'\)).
    Since we schedule at any time (except for $I_2'$ and $I_3$) three small jobs next to each other we increase the height of $h(\sigma_1)+h(\sigma_2')$ by at most
    \((\sum_{j\in \mathcal{S}'} p_j - |I_2'|)/3 + \frac{2}{3}\pmax\) (which is the case if one stack has height $\pmax$ more than the other two).
    \begin{align*}
        h(\sigma) & \leq h(\sigma_1) + h(\sigma_2') + \frac{\sum_{j\in \mathcal{S}'} p_j - |I_2'|}{3} + \frac{2}{3}\pmax  \\
                  & \leq h(\sigma_1) + h(\sigma_2') + \frac{\opt - \frac{|I_1|+|I_2'|}{2} - |I_2'|}{3} + \frac{2}{3}\pmax
    \end{align*}
    With \cref{lem:LB opt}, this implies:
    \begin{align*}
        h(\sigma) & \leq h(\sigma_1) + h(\sigma_2') + \frac{\opt - \frac{|I_1|+|I_2'|}{2} - |I_2'|}{3} +  \frac{2}{3}\pmax                                    \\
                  & \leq \opt + \frac{|I_1|}{2} + \frac{|I_2'|}{2} + \frac{\opt}{3} - \frac{|I_1|}{6} - \frac{|I_2'|}{6} - \frac{|I_2'|}{3}+ \frac{2}{3}\pmax \\
                  & \le \frac{4}{3}\opt+\frac{|I_1|}{3} + \frac{2}{3}\pmax                                                                                    \\
                  & \le \frac{4}{3}\opt+\pmax.
    \end{align*}

    \case{2} \(\sum_{j\in \mathcal{J}'} > \opt -\frac{|I_1|+|I_2'|}{2}\).
    Consider the following construction:
    Split each small job \(j \in \mathcal{S}'\) in two parts.
    One part of width exactly \(\frac{1}{4}\), and a \emph{residue} part of width \(q_j - \frac{1}{4}\) and height $p_j$.
    Note that each small job has a residue part since by definition \(q_j > \frac{1}{4}\) for each \(\mathcal{S}' \subseteq \calS\).
    Denote the set of residue parts by \(\calR\).
    We want to show that these residue parts are big enough to compensate for the area deficit in \(I_2\).
    By the case assumption, we know that:
    \begin{align}
        \label{eq:residue-height}
        \sum_{j\in \calR} p_j & = \sum_{j \in \calJ'}p_j                                                             \\
                              & > \opt - \frac{|I_1|+|I_2'|}{2}  \nonumber                                           \\
                              & \geq |I_2| + |I_1| - \frac{|I_1| + |I_2'|}{2} - \frac{|I_1| + |I_2'|}{2}   \nonumber \\
                              & = |I_2| - |I_2'| \nonumber
    \end{align}

    Where the last inequality holds since \(|I_1| \leq h(\sigma_1)\), \(|I_2| \leq h(\sigma_2')\), and \cref{lem:LB opt} imply $\opt \geq |I_1| + |I_2| - \frac{|I_1| + |I_2'|}{2}$.
    Let \(\tau\) be any time during \(I_2\).
    Since no job in \(\calS'\) has been placed at this time, we know that:
    \(q(\sigma_2, \tau) + \min_{j \in \calS'}q_j > 1\).
    Consequently, we also have: \(q(\sigma_2, \tau) + \min_{j \in \calR}q_j > \frac{3}{4}\) for any time \(\tau\) during \(I_2\).
    So any job $j\in \calS'$ has enough unaccounted residue s.t. we can assume the height of $I_2$ to be reduced by $p_j$.
    With \Cref{eq:residue-height}, this leaves at most height \(|I_2'| \leq \pmax\) that is not covered by the residue parts.
    This leaves an area deficit of at most \((\frac34-\frac23)|I_2'|\leq \frac{1}{12}\pmax\) during $I_2$ since no small job was placed during $I_2$.
    Therefore, we can assume $\sigma_2$ to be as in \cref{def:named_schedules} and apply \cref{obs:missing_area_to_ratio}.
\end{proof}

\subsection{Adding the Tiny Jobs}
\label{sec:tiny-jobs}
In this section, we augment the schedules $\sigma_1^B,\sigma_1^T,\sigma_2$ from the previous section.
We assume that we always have those schedules, but the analysis works just as well if some of these schedules do not schedule a single job.
An operation which operates on such an empty schedule then does nothing instead.
If all schedules are empty, we can execute \LS on all tiny jobs and also have an area deficit of at most $\frac{3}{4}\pmax$.
This holds since at any time below the start time of the last started job we have below $\frac{1}{4}$ unused machines.
Otherwise, we would have started the last job earlier since it has width at most $\frac{1}{4}$.
This means that the schedule has area greater than $\frac{3}{4}h(\sigma)-p_{\max}$.
This gives us our desired ratio, similar to \cref{obs:missing_area_to_ratio}.

\subsubsection{Few Tiny Jobs}
This subsection deals with the situation where we have too few tiny jobs to bound the height by $h(\sigma_1)+h(\sigma_2)\leq \frac43\opt +\pmax$.
Recall that we have previously bound the height of the schedule by $h(\sigma_1)+h(\sigma_2)\leq \frac43\opt+\pmax$ if $dom(\sigma_1)\cap \calS=\calS$ by \cref{obs:assump_sigma_2}.
In this case, we proceed to the next section and skip \cref{alg:tiny_2}.
We assume $dom(\sigma_1)\cap \calS\neq \calS$ in the following.
This assumption enables us to partition $\sigma_1$ into $\sigma_1^B$ and $\sigma_1^T$ by \cref{lem:sigma1_partition}.

Note that before the next algorithm $|I_1^B|$ might be greater than $p_{\max}$ by \cref{lem:sigma1_partition}.
So, we start our next algorithm by greedily scheduling the tiny jobs there.
Afterward, we will reschedule the small jobs in $\sigma_2$ on two stacks, then place as many tiny jobs between the two stacks as possible without changing the makespan.
Lastly, in the following lemma, we collapse the small jobs again such that they are placed as soon as possible.
This procedure is shown in \cref{alg:tiny_2} and illustrated in \cref{fig:alg_tiny_2}.
Notice that if \cref{alg:tiny_2} does not change the makespan, \cref{lem:ratio_without_tiny} gives us our desired result.
So we assume the opposite in this section.
Note that if \cref{alg:tiny_2} changes the height, then $|I_1^B|$ will be smaller than $p_{\max}$.

\begin{algorithm}[ht]
    \caption{Scheduling \(\calT\): Few Tiny Jobs}
    \label{alg:tiny_2}
    \begin{algorithmic}[1]
        \State Perform \LS in $[0,h(\sigma_1^B))$ on $\sigma_1^B$.
        \State Reschedule all jobs in $\sigma_2$ via \LS on two stacks.
        \State Perform \LS in $[0,h(\sigma_2))$ on $\sigma_2$.
    \end{algorithmic}
\end{algorithm}

%\tikzset{font=\fontsize{36}{10}\selectfont\bfseries\boldmath}
%\tikzset{font=\Huge}
\begin{figure}[ht]
    \centering
    \begin{subfigure}{0.20\textwidth}
        \centering
        \begin{tikzpicture}
    %outline of the schedule
    \def\w{10}
    \def\h{5.5}
    \def\xright{\w} % fixed right x-bound
    \def\xleft{0} % fixed left x-bound

    %Sigma 2
    \begin{scope}[fill opacity=0.7]
        %Medium jobs
        \begin{scope}[fill=red]

            % Right stack
            \pgfmathsetmacro{\ycur}{0}
            \expandafter\stackrectright\jobMA
            \expandafter\stackrectright\jobMD
            \stackrectright{3.8}{1}
            \filldraw[fill=blue] (lefttop) ++(0.3, 0) rectangle ++(-2.8, 0.6) node (middle) {};
            \expandafter\stackrectright\jobMF

            \begin{scope}[fill=blue]
                \expandafter\stackrectright\jobSF
            \end{scope}

            % Left stack
            \pgfmathsetmacro{\ycur}{0}
            \expandafter\stackrectleft\jobMB{up}
            \expandafter\stackrectleft\jobMC{up}
            \stackrectleft{4}{0.5}{up}
            \filldraw[fill=blue] (righttop) ++(-0.4, 0.4) rectangle ++(2.5, 0.3);
            \expandafter\stackrectleft\jobME{up}
            \begin{scope}[fill=blue]
                \expandafter\stackrectleft{2.7}{0.7}{up}
                \expandafter\stackrectleft{3}{0.6}{up}

                \draw[dashed, ultra thick] (\w, \ycur) -- (-0.5, \ycur) node[anchor=east,yshift=0.5em] {\(h(\sigma_2)\)};
                % \begin{scope}[color=gapcolor]
                %     %\draw[decorate, decoration={brace, amplitude=10pt}] (\w, \ycur) -- +(0, -0.5) node[midway, xshift=3em] {\(G_3^T\)};
                %     \draw[decorate, decoration={brace, amplitude=10pt}] (0, \ycur-1) -- +(0, 1) node[midway, xshift=-2em, yshift=-0.5em] {\(G_3\)};
                %     % \draw[dashed, thick] (\w, \ycur-0.5) -- (0, \ycur-0.5);
                %     %\draw[decorate, decoration={brace, amplitude=10pt}] (\w, \ycur-0.5) -- +(0, -0.5) node[midway, xshift=3em] {\(G_3^B\)};
                %     \draw[dashed, thick] (\w, \ycur-1) -- (0, \ycur-1);

                %     \draw[dashed, thick] (0, 2) -- +(\w, 0);
                %     \draw[decorate, decoration={brace, amplitude=10pt}] (0, 1) -- +(0, 1) node[midway, xshift=-2em, yshift=-0.5em] {\(G_2\)};
                %     \draw[dashed, thick] (0, 1) -- +(\w, 0);

                % \end{scope}
            \end{scope}

            %middle stack
            \filldraw[fill=blue] (middle) ++(-0.25, 0) rectangle ++(3.1, 0.9);

            \begin{scope}[fill=gray]
                \filldraw (5, 0) rectangle ++(0.5, 0.7)
                ++(0, 0) rectangle ++(-1.3, 0.3)
                ++(-0.1, 0) rectangle ++(0.5, 0.8)
                ++(0, -0.8) rectangle ++(1.3, 1);

            \end{scope}
        \end{scope}

        \draw[very thick] (0, \h) -- (0, 0) -- (\w, 0) -- (\w, \h);
    \end{scope}

\end{tikzpicture}
        \subcaption{\(\sigma_2\) before \ref{alg:tiny_2}}
        \label{fig:alg7_sigma2_before}
    \end{subfigure}
    \begin{subfigure}{0.20\textwidth}
        \centering
        \begin{tikzpicture}
    %outline of the schedule
    \def\w{10}
    \def\h{5.5}
    \def\xright{\w} % fixed right x-bound
    \def\xleft{0} % fixed left x-bound

    %Sigma 2
    \begin{scope}[fill opacity=0.7]
        %Medium jobs
        \begin{scope}[fill=red]

            % Right stack
            \pgfmathsetmacro{\ycur}{0}
            \expandafter\stackrectright\jobMA
            \expandafter\stackrectright\jobMD
            \stackrectright{3.8}{1}
            \expandafter\stackrectright\jobMF

            \begin{scope}[fill=blue]
                \stackrectright{3}{0.6}
                \stackrectright{2.8}{1}
                \expandafter\stackrectright\jobSF
                \draw[dashed, ultra thick] (\w, \ycur) -- (-0.5, \ycur) node[anchor=east,yshift=0.5em] {\(h(\sigma_2)\)};
            \end{scope}

            % Left stack
            \pgfmathsetmacro{\ycur}{0}
            \expandafter\stackrectleft\jobMB{up}
            \expandafter\stackrectleft\jobMC{up}
            \stackrectleft{4}{0.5}{up}
            \expandafter\stackrectleft\jobME{up}
            \begin{scope}[fill=blue]
                \expandafter\stackrectleft{3.1}{0.9}{up}
                \expandafter\stackrectleft{2.7}{0.7}{up}
                \expandafter\stackrectleft{2.5}{0.3}{up}

                % \begin{scope}[color=gapcolor]
                %     %\draw[decorate, decoration={brace, amplitude=10pt}] (\w, \ycur) -- +(0, -0.5) node[midway, xshift=3em] {\(G_3^T\)};
                %     \draw[decorate, decoration={brace, amplitude=10pt}] (0, \ycur-1) -- +(0, 1) node[midway, xshift=-2em, yshift=-0.5em] {\(G_3\)};
                %     % \draw[dashed, thick] (\w, \ycur-0.5) -- (0, \ycur-0.5);
                %     %\draw[decorate, decoration={brace, amplitude=10pt}] (\w, \ycur-0.5) -- +(0, -0.5) node[midway, xshift=3em] {\(G_3^B\)};
                %     \draw[dashed, thick] (\w, \ycur-1) -- (0, \ycur-1);

                %     \draw[dashed, thick] (0, 2) -- +(\w, 0);
                %     \draw[decorate, decoration={brace, amplitude=10pt}] (0, 1) -- +(0, 1) node[midway, xshift=-2em, yshift=-0.5em] {\(G_2\)};
                %     \draw[dashed, thick] (0, 1) -- +(\w, 0);

                % \end{scope}
            \end{scope}

            \begin{scope}[fill=gray]
                \filldraw (5, 0) rectangle ++(0.5, 0.7)
                ++(0, 0) rectangle ++(-1.3, 0.3)
                ++(-0.1, 0) rectangle ++(0.5, 0.8)
                ++(0, -0.8) rectangle ++(1.3, 1)
                ++(0.2, 0) rectangle ++(-2, 0.8)
                ++(0, -1.2) rectangle ++(-0.5, 1)
                 ++(-0.4, 0.2)
                 ++(2, 0.1)   
                 ++(0.9, -0.1) rectangle ++(0.6, 1.5) node (tinytop) {};
        
                \draw[dashed, thick] ($(\w,0)!(tinytop)!(\w, \h)$) -- +(-\w-0.2, 0) node[anchor=east, yshift=-0.5em] {\(C_j\)};
            \end{scope}

        \end{scope}

        \draw[very thick] (0, \h) -- (0, 0) -- (\w, 0) -- (\w, \h);
    \end{scope}

\end{tikzpicture}
        \subcaption{\(\sigma_2\) after \ref{alg:tiny_2}}
        \label{fig:alg7_sigma2_after}
    \end{subfigure}
    \begin{subfigure}{0.29\textwidth}
        \centering
        \begin{tikzpicture}
    %outline of the schedule
    \def\w{10}
    \def\h{5.5}
    \def\xright{\w} % fixed right x-bound
    \def\xleft{0} % fixed left x-bound

    %Sigma 2
    \begin{scope}[fill opacity=0.7]
        %Medium jobs
        \begin{scope}[fill=red]

            % Right stack
            \pgfmathsetmacro{\ycur}{0}
            \expandafter\stackrectright\jobMA
            \expandafter\stackrectright\jobMD
            \stackrectright{3.8}{1}
            \expandafter\stackrectright\jobMF

            \begin{scope}[fill=blue]
                \stackrectright{3}{0.6}
                \stackrectright{2.8}{1}
            \end{scope}

            % Left stack
            \pgfmathsetmacro{\ycur}{0}
            \expandafter\stackrectleft\jobMB{up}
            \expandafter\stackrectleft\jobMC{up}
            \stackrectleft{4}{0.5}{up}
            \expandafter\stackrectleft\jobME{up}
            \begin{scope}[fill=blue]
                \expandafter\stackrectleft{3.1}{0.9}{up}

                \expandafter\stackrectleft{2.7}{0.7}{up}
                \draw[dashed, thick] (0.8, \ycur) -- (\w+0.5, \ycur) node[anchor=west] {\(h(\sigma_2)-|I_3^T|\)};

                \draw[dashed, ultra thick] (-0.5, \ycur) node[anchor=east,yshift=0.5em, color=white] {\(h(\sigma_2)\)};
                % \begin{scope}[color=gapcolor]
                %     %\draw[decorate, decoration={brace, amplitude=10pt}] (\w, \ycur) -- +(0, -0.5) node[midway, xshift=3em] {\(G_3^T\)};
                %     \draw[decorate, decoration={brace, amplitude=10pt}] (0, \ycur-1) -- +(0, 1) node[midway, xshift=-2em, yshift=-0.5em] {\(G_3\)};
                %     % \draw[dashed, thick] (\w, \ycur-0.5) -- (0, \ycur-0.5);
                %     %\draw[decorate, decoration={brace, amplitude=10pt}] (\w, \ycur-0.5) -- +(0, -0.5) node[midway, xshift=3em] {\(G_3^B\)};
                %     \draw[dashed, thick] (\w, \ycur-1) -- (0, \ycur-1);

                %     \draw[dashed, thick] (0, 2) -- +(\w, 0);
                %     \draw[decorate, decoration={brace, amplitude=10pt}] (0, 1) -- +(0, 1) node[midway, xshift=-2em, yshift=-0.5em] {\(G_2\)};
                %     \draw[dashed, thick] (0, 1) -- +(\w, 0);

                % \end{scope}
            \end{scope}

            \begin{scope}[fill=gray]
                \filldraw (5, 0) rectangle ++(0.5, 0.7)
                ++(0, 0) rectangle ++(-1.3, 0.3)
                ++(-0.1, 0) rectangle ++(0.5, 0.8)
                ++(0, -0.8) rectangle ++(1.3, 1)
                ++(0.2, 0) rectangle ++(-2, 0.8)
                ++(0, -1.2) rectangle ++(-0.5, 1)
                ++(2.5, 0.2) rectangle ++(0.6, 1.5) node (tinytop) {};

                \filldraw[fill=blue] (tinytop) ++(-0.8, -1.5) rectangle ++(-2.5, 0.5)
                ++(0, 0) rectangle ++(2.5, 0.3) node (smalltop) {};

                \draw[dashed, thick] ($(\w,0)!(tinytop)!(\w, \h)$) -- +(-\w-0.4, 0) node[anchor=east, yshift=-0.5em] {\(C_j\)};
                %\draw[dashed, thick] ($(0,0)!(smalltop)!(0, \h)$) -- +(\w+0.2, 0) node[anchor=west, yshift=-0.5em] {\(h(\sigma_2)-|I_3^T|\)};
            \end{scope}

        \end{scope}

        \draw[very thick] (0, \h) -- (0, 0) -- (\w, 0) -- (\w, \h);
    \end{scope}

\end{tikzpicture}
        \subcaption{\(\sigma_2\) as in \ref{lem:tiny_2}}
        \label{fig:alg7_proof}
    \end{subfigure}
    \caption{Illustrations around \cref{alg:tiny_2}}
    \Description{
        In this figure we see three schedules which depend on how many tiny jobs we have.
        In the first schedule we have barely enough tiny jobs to fill the gaps, which means that the makespan does not change, which gives us our desired ratio by the previous section.
        In the second and third schedule we have enough tiny jobs to fill the gaps and therefore we need to change the schedule.
        The second schedule is an intermediate schedule which is created by scheduling all medium and small jobs on two stacks and scheduling the tiny jobs greedily between.
        In the third schedule, we reschedule then the small jobs on three stacks so that the height difference is at most $p_{max}$.
        In this case the schedule has a tiny job which ends between $h(\sigma_2)-|I_3^T|$ and $h(\sigma_2)$.
    }
    \label{fig:alg_tiny_2}
\end{figure}

\begin{lemma}
    \label{lem:tiny_2}
    Given the schedules \(\sigma_1^B\), \(\sigma_1^T\), and \(\sigma_2\), as constructed by \cref{alg:tiny_2},
    we can construct a schedule $\sigma$ with \(h(\sigma)\leq \frac{4}{3}\opt + \pmax\).
\end{lemma}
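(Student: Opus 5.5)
The plan is to reduce to \cref{lem:combine_schedules}: show that after \cref{alg:tiny_2}, followed by re-collapsing the small jobs of \(\sigma_2\) (as in \cref{fig:alg7_proof}), the three schedules \(\sigma_1^B,\sigma_1^T,\sigma_2\) satisfy \cref{def:named_schedules}, up to an area deficit that is still at most \(\frac34\pmax\). We may assume \(dom(\sigma_1)\cap\calS\neq\calS\). Otherwise \cref{obs:assump_sigma_2} applies directly, and the tiny jobs are handled in the next subsection. Under this assumption \cref{lem:sigma1_partition} gives the split of \(\sigma_1\), and all properties hold except possibly \(|I_1^B|<\pmax\). We may also assume that \cref{alg:tiny_2} changes the makespan, since otherwise \cref{lem:ratio_without_tiny} already gives the result.

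\textbf{The schedule \(\sigma_1^B\).} \LS in \([0,h(\sigma_1^B))\) places tiny jobs greedily inside \(I_1^B\). I argue as for pure \LS on tiny jobs. Let \(j\) be a tiny job that \cref{alg:tiny_2} places outside \(\sigma_1^B\); such a job exists because the makespan grew. Then every time in \(\sigma_1^B\) lying more than \(p_j\le\pmax\) below \(h(\sigma_1^B)\) has utilization at least \(\frac34\), since otherwise \(j\) (width \(\le\frac14\)) would have fit there. Hence the remaining sparse part of \(\sigma_1^B\) has height below \(\pmax\) and utilization in \((\frac23,\frac34)\). It lies at the top, but since \(I_1^B\) was monotone and tiny jobs only raise utilization, it is still accessible. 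Rotation is harmless, as noted in the proof of \cref{lem:combine_schedules}. If instead it is not monotone, I bound its deficit directly by \(\frac1{12}\pmax\), which is all the computation in \cref{lem:combine_schedules} uses.

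\textbf{The schedule \(\sigma_2\).} After restacking medium and small jobs on two stacks and filling with tiny jobs via \LS, the same argument gives utilization at least \(\frac34\) everywhere below \(C_j-\pmax\), where \(C_j\) is the completion time of the last tiny job. Then I collapse the small jobs above \(C_j\) onto three stacks as early as possible (\cref{fig:alg7_proof}). The region above the tiny jobs then consists of at most three small-job stacks, or medium next to small. Stack-height differences are at most \(\pmax\), so this yields the top accessible monotone interval \(I_3\) with \(|I_3|<\pmax\) and \(q(I_3)\subseteq(\frac14,\frac13]\sqcup(\frac12,\frac34)\). The band between \(C_j-\pmax\) and where dense packing resumes plays the role of \(I_2\), with height below \(\pmax\). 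Its utilization exceeds \(\frac23\) because there a medium or small job sits next to stacks containing a tiny job that could not be extended. I will need to check that \(q>\frac23\) really holds there; if not, I fall back to bounding the deficit of this band by \(\frac1{12}\pmax\) directly, using that medium plus small exceeds \(\frac13+\frac14\) and tiny jobs fill the rest.

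\textbf{Main obstacle and conclusion.} The hardest step is this last one: showing that the collapse keeps the tiny job \(j\) ending in \([h(\sigma_2)-|I_3^T|,h(\sigma_2))\) compatible with the interleaving in \cref{lem:combine_schedules}. That is, the top part \(I_3^T\) must still have utilization at most \(\frac13\) so that \(I_1^T\) can overlap it. I would try to argue that any time in \(I_3^T\) carrying a tiny job also carries no small job, so the utilization there stays at most \(\frac13+\frac14\). I would then re-run the two-case deficit computation of \cref{lem:combine_schedules} with this slightly weaker bound. Once the properties (or their deficit bounds) are established, that combination gives \(\deficit(\sigma)\le\frac34\pmax\), and \cref{obs:missing_area_to_ratio} yields \(h(\sigma)\le\frac43\opt+\pmax\).
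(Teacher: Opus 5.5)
\textbf{Verdict: genuine gap in the hard case.} Your handling of the easy case matches the paper's Case~1. When the last tiny job $j$ completes below $h(\sigma_2)-|I_3^T|$, the re-collapsed $\sigma_2$ again has the intervals of \cref{def:named_schedules}. Also $|I_1^B|<\pmax$ because some tiny job did not fit into $\sigma_1^B$, so \cref{lem:combine_schedules} applies.

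The gap is in the case you flag as the main obstacle, $C_j\ge h(\sigma_2)-|I_3^T|$. There the deficit arithmetic is not what fails; the interleaving itself becomes infeasible. \cref{lem:combine_schedules} can overlap $I_1^T$ with the top of $\sigma_2$ only because $q_{\max}(I_3^T)+q_{\max}(I_1^T)\le\frac13+\frac23=1$. Your weakened bound is $\frac13+\frac14=\frac7{12}$, and $q_{\max}(I_1^T)$ can be as large as $\frac23$, so the sum can reach $\frac54$. ``Re-running the two-case computation'' would therefore produce a schedule that may violate the machine constraint. Your idea for getting back below $\frac13$ does not work either. In your own set-up $I_3^T$ is the part of $I_3$ where one small job runs, so every time in it carries a small job.

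\textbf{Missing idea.} Shrink the overlap window instead of weakening the bound, and give up \cref{def:named_schedules} for $\sigma_2$ in this case. This is what the paper does.
\begin{itemize}
\item Since $j$ is the last tiny job to finish, $[C_j,h(\sigma_2))$ contains no tiny job. In this case it lies inside $I_3^T$, so its utilization is at most $\frac13$. This tiny-free top part plays the role of the paper's $I_3'\subseteq I_3$.
\item Place $\sigma_1^T$ at $h(\sigma_2)-\min(|I_1^T|,|I_3'|)$ and $\sigma_1^B$ above it; the overlap is now feasible.
\item The band containing $j$ becomes a separate interval $I_2'\subseteq[\sigma_2(j),C_j)$, so $|I_2'|\le p_j\le\pmax$.
\item The same greedy argument you already use gives utilization at least $\frac34$ before $j$ starts. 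So the old $I_2$ disappears and its share of the budget is freed.
\end{itemize}
The deficit is then at most $\frac1{12}|I_2'|+\frac12|I_3'|+\frac1{12}|I_1^B|$ or $\frac1{12}|I_2'|+\frac14|I_1^T|+\frac1{12}|I_1^B|$. Both are at most $\frac34\pmax$, and \cref{obs:missing_area_to_ratio} finishes the proof.

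\textbf{A smaller point.} Your fallback for the $I_2$ band is also off. Utilization above $\frac13+\frac14$ gives deficit rate $\frac16$, not $\frac1{12}$. Moreover, a medium--small pair need not be present in that band.
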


\begin{proof}
    To build $\sigma$, we first alter $\sigma_2$ by removing all small jobs which complete after the last tiny job has completed and remove them.
    Then, we reschedule those jobs by \LS on top.
    Let $j$ be the tiny job with the highest completion time.
    In addition, we partition $I_3$ into two subintervals $I_3^T$ and $I_3^B$, such that the upper subinterval $I_3^T$ has a machine utilization of at most $\frac{1}{3}$.
    We consider two cases, depending on $C_j$ the completion time of $j$.

    \case{1} $C_j< h(\sigma_2)- |I_3^T|$.
    This means that $\sigma_2$ has again two sparse intervals as specified in \cref{def:named_schedules}.
    Furthermore, $\sigma_1^T$ does not change and, as mentioned earlier, in $\sigma_1^B$ the height of $I_1^B$ is again below $p_{\max}$
    since we assumed that the makespan would change.
    So, we can use \cref{lem:combine_schedules} to prove the makespan.

    \case{2} $C_j \geq h(\sigma_2)- |I_3^T|$.
    Let $I_3'$ be the sparse interval in $\sigma_2$ with width at most $\frac{2}{3}$ and $I_2'$ the sparse interval of $\sigma_2$ with width greater than $\frac23$.
    Note that $|I_2'|,|I_3'|\leq \pmax$ since $I_3'$ is a subinterval of $I_3$ and $I_2'$ is a subinterval of $[\sigma_2(j),\sigma_2(j)+p_j)$ see \Cref{fig:alg7_proof}.
    By using the same technique from \cref{lem:combine_schedules}, we create a new schedule by
    placing $\sigma_2$ at $0$, $\sigma_1^T$ at $h(\sigma_2)-\min(|I_1^T|, |I_3'|)$ and $\sigma_1^B$ on top of that.
    The area deficit is now at most $ (\frac34-\frac23)|I_2'|+(\frac34-\frac14)|I_3'|+(\frac34-\frac23)|I_1^B|$ or at most $(\frac34-\frac23)|I_2'|+(\frac34-\frac12)|I_1^T|+(\frac34-\frac23)|I_1^B|$.
    Since the area deficit of $\sigma$ is in both cases at most $\frac{3}{4}\pmax$, the lemma follows from \cref{obs:missing_area_to_ratio}.
    Note that even if $j$ completes after every non-tiny job, the area deficit during $[\sigma_2(j),\sigma_2(j)+p_j)$ is at most $(\frac34-\frac14)p_j$ since in \cref{alg:tiny_2} we scheduled the tiny jobs to have a completion time which is at most the maximal completion time of the non-tiny jobs.
    Therefore, during $I_2$ we have enough area.
\end{proof}

In the following we show how to extend \(\sigma_1\) and \(\sigma_2\) if \cref{alg:tiny_2} terminates without placing all tiny jobs.

\begin{observation}
    \label{obs:alt_schedule_tiny2}
    If \cref{alg:tiny_2} terminates without placing all tiny jobs and outputs $\sigma_1$ and $\sigma_2$.
    Then $h(\sigma_1)+h(\sigma_2)\leq\frac{4}{3}\opt + \pmax$.
\end{observation}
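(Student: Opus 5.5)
The plan is an area argument on the stacked schedule ``$\sigma_2$ on top of $\sigma_1$'', where the unplaced tiny jobs certify density. Since \cref{alg:tiny_2} stopped with tiny jobs left over, every \LS phase in it failed to place some tiny job $t$ with $q_t \le \frac14$ and $p_t \le \pmax$. I will use this to show that $\sigma_1^B$ and $\sigma_2$ are $\frac34$-dense except close to their tops, and then bound the height by \cref{obs:missing_area_to_ratio}.

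\textbf{Step 1: $\sigma_1^B$.} \LS in $[0,h(\sigma_1^B))$ left a tiny job unplaced. Consider any time $\tau$ with $\tau + \pmax \le h(\sigma_1^B)$ and $q(\sigma_1^B,\tau) < \frac34$. Utilization in $\sigma_1^B$ (big stack plus greedily added tiny jobs) is non-increasing after the greedy fill, so the leftover job would fit in $[\tau,\tau+p_t)$. This contradicts the stopping rule. Hence only the top $\pmax$ of $\sigma_1^B$ can be sparse, and there the utilization exceeds $\frac23$. So $\sigma_1^B$ contributes deficit at most $(\frac34-\frac23)\pmax$.

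\textbf{Step 2: $\sigma_2$.} In the two-stack rescheduled $\sigma_2$, every job has width in $(\frac14,\frac12]$. Two stacks run in parallel up to the height of the shorter stack, and the greedily added tiny jobs fill the space between them. By the same contradiction, any time with utilization below $\frac34$ lies within $\pmax$ of $h(\sigma_2)$. Such times fall in the top portion where only one stack is still running, which has height at most $\pmax$ by the \LS-on-two-stacks property.

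\textbf{Step 3: $\sigma_1^T$.} By \cref{lem:sigma1_partition}, $\sigma_1^T$ has a single sparse interval $I_1^T$ with $|I_1^T|<\pmax$ and utilization in $(\frac12,\frac23]$.

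\textbf{Step 4: combining the deficits.} The main obstacle is keeping the total deficit at most $\frac34\pmax$, because naively the top of $\sigma_2$ costs $(\frac34-\frac14)\pmax$ and $I_1^T$ costs another $(\frac34-\frac12)\pmax$. I would handle this exactly as in \cref{lem:combine_schedules}, by interleaving the top region of $\sigma_2$ with $I_1^T$. The top region of $\sigma_2$ is monotone decreasing and carries a single stack of width at most $\frac12$. The part of it with width at most $\frac13$ can overlap $I_1^T$, since their combined width is at most $1$. That overlap is then dense, because each of its parts has width above $\frac14$ and above $\frac12$ respectively. If this interleaving is impossible because a residual stack of width in $(\frac13,\frac12]$ remains, I would instead use the lower bound on $\opt$ from \cref{lem:LB opt}. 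Any extra height caused by the tiny jobs comes with tiny-job area at utilization at least $\frac34$, so it fits within the $\frac43\opt$ budget.

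The resulting case split mirrors the earlier cases: \[\deficit \le \left(\tfrac34-\tfrac23\right)\pmax + \left(\tfrac34-\tfrac12\right)\pmax + \left(\tfrac34-\tfrac23\right)\pmax \le \tfrac34\pmax .\] Then \cref{obs:missing_area_to_ratio} gives the bound.
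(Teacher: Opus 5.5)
Your Steps 1--3 are essentially right, but Step 4, which you correctly identify as the crux, has a genuine gap.

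\textbf{Wrong quantity.} The observation bounds $h(\sigma_1)+h(\sigma_2)$, the height of $\sigma_2$ simply stacked on $\sigma_1$. It is this sum that \cref{lem:many_tiny} relies on later. Interleaving the top of $\sigma_2$ with $I_1^T$ as in \cref{lem:combine_schedules} builds a different, shorter schedule. A deficit bound for that schedule controls $h(\sigma_1)+h(\sigma_2)$ minus the overlap, not the sum itself.

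\textbf{The interleaving is unavailable.} After \cref{alg:tiny_2} reschedules $\sigma_2$ on two stacks, the part of its top interval where only one non-tiny job runs may be occupied by a single medium job of width in $(\frac13,\frac12]$. So the $(\frac14,\frac13]\sqcup(\frac12,\frac34)$ structure of \cref{def:named_schedules} is gone. Your fallback (``use \cref{lem:LB opt}; extra height comes with tiny-job area'') is not an argument.

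\textbf{The final display omits the obstacle.} It has no term for that one-job part of $\sigma_2$'s top. That part costs up to $(\frac34-\frac14)\pmax$ and is what pushes the naive total to $\frac56\pmax$. The third term $(\frac34-\frac23)\pmax$ is unjustified.

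What is missing is a way to recover the surplus $\frac1{12}\pmax$ in the plain stacking. Let $I_2$ be the single sparse interval in the top $\pmax$ of $\sigma_2$, and split it as $I_2=I_2^B\sqcup I_2^T$, with one non-tiny job running during $I_2^T$. The paper splits on $|I_1^B|+|I_1^T|$.
\begin{itemize}
\item If $|I_1^B|+|I_1^T|<\pmax$, the crude bound $\frac14(|I_1^B|+|I_1^T|)+\frac12|I_2|<\frac34\pmax$ already suffices.
\item Otherwise, let $j$ be the last-finishing non-tiny job of $\sigma_2$. Let $\bar I_1$ be the part of $I_1$ lying more than $\pmax$ below its upper end; then $\bar I_1\subseteq I_1^B$ because $|I_1^T|<\pmax$. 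Above any $t\in\bar I_1$ there was height at least $\pmax\ge p_j$ of non-increasing big-job profile, yet \cref{alg:T1} rejected $j$. Hence $q_j+q(\sigma_1,t)>1$ on $\bar I_1$, which is $\frac1{12}$ more than the naive $\frac23+\frac14$. Crediting this surplus to the shorter of $\bar I_1$ and $I_2^T$ brings the deficit down to $\frac34\pmax$.
\end{itemize}
This exchange argument, which uses \cref{alg:T1}'s rejection of $j$, is the idea your proposal lacks.

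A minor point on Step 1: the profile after the greedy fill need not be non-increasing. A tiny job of width $\frac14$ that does not fit next to a big job of width $0.76$ starts where a big job of width $0.7$ begins, which creates a bump. Your conclusion that sparse times lie within $\pmax$ of the top still holds, by considering the first tiny job that starts after a sparse time.
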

\begin{proof}
    In the following, we consider two cases and show in both that the area deficit of $\sigma_2$ on top of $\sigma_1$ can be bound by $\frac{3}{4}\pmax$.
    Therefore, with \cref{obs:missing_area_to_ratio}, we have shown the claim.
    Note that after \cref{alg:tiny_2} in $\sigma_2$ during $(h(\sigma_2)-p_{\max}, h(\sigma_2)]$ we only have a single sparse interval.
    We define it as $I_2$ (merging the previous $I_2$ and $I_3$).

    \case{1} $|I_1^B|+|I_1^T| < \pmax$.
    Note that the area deficit is bounded by $\deficit(\sigma)\leq \frac{1}{4}|I_1^B|+\frac{1}{12}|I_1^T|+\frac{1}{4}|I_2^B|+\frac{1}{2}|I_2^T|$.
    With $|I_2^B|+|I_2^T|< \pmax$ the observation follows immediately.

    \case{2} $|I_1^B|+|I_1^T|\geq \pmax$.
    We can write $I_2=I_2^B\sqcup I_2^T$ where during $I_2^T$ one non-tiny job is being scheduled and during $I_2^B$ two non-tiny jobs are being scheduled.
    Let $j$ be a non-tiny job with the largest completion time in $\sigma_2$.
    Note that during $I_2^T$, at least $q_j$ machines are used and $|I_2^T|\leq p_j$.
    Consider $I_1=I_1^B\sqcup I_2^T$ and $\bar{I_1}\coloneq [{i_1}_{start},{i_1}_{end}-\pmax]$ where $I_1=[{i_1}_{start},{i_1}_{end})$.
    Since $j$ was not scheduled during $\bar{I_1}$ even though there was enough available height, we know that the combined machine usage ($q_j$ and $q(\sigma,t),t\in \bar{I_1}$) is greater than $1$.
    Also, we only assumed the width during $I_2^T$ to be greater than $\frac14$ and the width during $\bar{I_1}\subseteq I_1^B$ to be greater than $\frac23$.
    Therefore, we can assume $\frac{1}{12}=(1-\frac{2}{3}-\frac14)$ additional machine usage either during $\bar{I_1}$ or during $I_2^T$ (depending on which is smaller).
    Specifically, if $|\bar{I_1}|>|I_2^T|$ then $\bar{I_1}$ can be assumed to be wider by $\frac{1}{12}$, so during $\bar{I_1}$ there is no area deficit.
    The area deficit is then at most $(\frac34-\frac{1}{2})(|I_1|-|\bar{I_1}|)+(\frac34-\frac{1}{4})|I_2|\leq \frac34\pmax$.
    Otherwise, $I_2^T$ can be assumed to be $\frac{1}{12}$ wider; therefore the area deficit is at most $\frac{1}{12}|I_1^B|+\frac{1}{4}|I_1^T|+(\frac34+\frac{1}{12}-\frac{1}{4})|I_2^T|+\frac14|I_2^B|\leq \frac34 \pmax$.
\end{proof}

\subsubsection{Many Tiny Jobs}
\label{sec:many_tiny_jobs}
Here we assume \cref{alg:tiny_2} terminates without placing all jobs (or we skipped the algorithm).
In order to schedule these jobs we have to further transform the schedule as illustrated in \cref{fig:many}.
For this we first remove all tiny jobs, $\sigma_2$ consists now of two stacks.
Then remove all non-big jobs from $\sigma_1$ and all jobs from the higher stack in $\sigma_2$.
Afterward we replace this stack by executing \LS on a single stack on the removed non-tiny jobs.
The tiny jobs will be added in the next algorithm.
The advantage of this construction is that $\sigma_1$ and $\sigma_2$ now have a simpler structure, while we can still interleave them so that the height remains unchanged (see \cref{fig:many}).

\begin{algorithm}[ht]
    \caption{Transformation for Many Tiny Jobs}
    \label{alg:tiny_3}
    \begin{algorithmic}[1]
        \Require Schedules \(\sigma_1\), \(\sigma_2\) from \cref{alg:tiny_2} or \cref{lem:LB opt}
        \State Remove the tiny jobs.
        \State Remove all non-big jobs $\hat\calJ \coloneqq dom(\sigma_1) \setminus \calB$ in $\sigma_1$.
        \State Sort the jobs in $\hat\calJ$ into the higher stack in $\sigma_2$.
    \end{algorithmic}
\end{algorithm}

Note that the resulting schedules \(\sigma_1\) and \(\sigma_2\), which were constructed by \cref{alg:tiny_3}, both have a single sparse interval.
We call them now \(I_1\) and \(I_2\), respectively.
Both intervals may be larger than \(\pmax\).

In one last algorithmic component we then place the tiny jobs via \LS, where we make sure to place each job in the larger interval at each step (\cref{alg:tiny_4}).

\begin{algorithm}[ht]
    \caption{Scheduling \(\calT\): Many Tiny Jobs}
    \label{alg:tiny_4}
    \begin{algorithmic}[1]
        \Require Schedules \(\sigma_1\), \(\sigma_2\) from \cref{alg:tiny_3}
        \For{\(j \in \calT\)}
        \State Place \(j\) with \LS in the higher interval.
        \EndFor
    \end{algorithmic}
\end{algorithm}

\begin{figure}[ht]
    \centering
    \begin{subfigure}{0.20\textwidth}
        \centering
        \begin{tikzpicture}

    %outline of the schedule

    \def\w{10}

    \def\h{4.5}

    \def\xright{\w} % fixed right x-bound

    \def\xleft{0} % fixed left x-bound

    %Sigma 2

    \begin{scope}[fill opacity=0.7, shift={(0, \h+1.3)}, yscale=-1]

        %Medium jobs

        \begin{scope}[fill=red]

            \pgfmathsetmacro{\ycur}{0} % Initial y value

            % Draw stacked rectangles

            \expandafter\stackrectright\jobMA

            \expandafter\stackrectright\jobMD

            \pgfmathsetmacro{\middlestart}{\ycur} % Adjust y value for next job

            \expandafter\stackrectright\jobME

            \pgfmathsetmacro{\rightstack}{\ycur}

            %Small jobs

            \begin{scope}[fill=blue]

                \expandafter\stackrectright\jobSC

                \expandafter\stackrectright\jobSE

            \end{scope}

            \pgfmathsetmacro{\ycur}{0}

            \expandafter\stackrectleft\jobMB{up}

            \expandafter\stackrectleft\jobMC{up}

            \expandafter\stackrectleft\jobMF{up}

            \begin{scope}[fill=blue]

                \expandafter\stackrectleft\jobSA{up}

                \expandafter\stackrectleft\jobSB{up}

                \expandafter\stackrectleft\jobSD{up}

                \expandafter\stackrectleft\jobSF{up}

                \draw[dashed, ultra thick] (-0.5, \ycur) node[anchor=east, yshift=0.9em] {\(h(\sigma_2)\)} -- (\w, \ycur);

                %Tiny jobs

                \begin{scope}[fill=gray]

                    \filldraw (4.7, 0) rectangle ++(0.8, 0.5)

                    rectangle ++(-1.2, 0.3)

                    ++(-0.2, 0) rectangle ++(1, 0.7)

                    ++(-0, -0.3) rectangle ++(0.7, 0.5)

                    ++(0.6, 0) rectangle ++(-2.2, 0.5)

                    ++(-0.4, -1) rectangle ++(0.3, 1);

                \end{scope}

                % \begin{scope}[color=gapcolor]

                %     \draw[dashed, thick] (-0.5, 1) -- +(\w+0.5, 0);

                %     \draw[decorate, decoration={brace, amplitude=10pt}] (0, \ycur) -- (0, 1) node[midway, xshift=-2.5em, yshift=-0.5em] {\(G_2\)};

                % \end{scope}

            \end{scope}

            % %Tiny jobs

            % \begin{scope}[fill=gray!50]

            %     \fill (4.7, 0) rectangle ++(0.8, 0.5)

            %     rectangle ++(-1.2, 0.3);

            % \end{scope}

        \end{scope}

        \draw[very thick] (0, \h) -- (0, 0) -- (\w, 0) -- (\w, \h);

    \end{scope}

    %Sigma 1

    %\begin{scope}[shift={(0, -\h -1)}, fill opacity=0.7]

    \begin{scope}[shift={(0, -\h -0.2)}, fill opacity=0.7]

        %Tiny jobs

        \begin{scope}[fill=gray]

            \filldraw (0, 0) rectangle ++(1.5, 0.5)

            -- ++(0.2, 0) rectangle ++(-1.6, 0.6)

            -- ++(0.3, 0) rectangle ++(2.4, 0.3);

        \end{scope}

        %Big jobs

        \begin{scope}[fill=yellow]

            \pgfmathsetmacro{\ycur}{0} % Initial y value

            % Draw stacked rectangles

            \expandafter\stackrectright\jobBA

            \expandafter\stackrectright\jobBB

            \pgfmathsetmacro{\cutheight}{\ycur}

            \expandafter\stackrectright\jobBC

            \expandafter\stackrectright\jobBD

            \expandafter\stackrectright\jobBE

            %Medium jobs

            \begin{scope}[fill=red]

                %\expandafter\stackrectleft\jobMA{down}

                \expandafter\stackrectleft\jobMG{down}

                %Small jobs

                \begin{scope}[fill=blue]

                    \expandafter\stackrectleft\jobSG{down}

                \end{scope}

            \end{scope}

            \draw[|-|, ultra thick] (-0.5, \ycur) -- +(0, -1.5) node[midway, xshift=-2em] {\(h\)};

            \draw[dashed, ultra thick] (-0.5, \ycur) node[anchor=east, yshift=0.9em] {\(h(\sigma_1)\)} -- (\w, \ycur);

            % \begin{scope}[color=gapcolor]

            %     \draw[dashed, thick] ($(leftbottom)+(-0.5, 0)$) -- +(\w+0.5, 0);

            %     \draw[decorate, decoration={brace, amplitude=10pt}] (0, 0) -- (leftbottom) node[midway, xshift=-2.5em, yshift=-0.5em] {\(G_1\)};

            % \end{scope}

        \end{scope}

        \draw[very thick] (0, \h) -- (0, 0) -- (\w, 0) -- (\w, \h);

    \end{scope}

\end{tikzpicture}
        \subcaption{before \ref{alg:tiny_3}}
        \label{fig:many_before_transformation}
    \end{subfigure}
    \hfill
    \begin{subfigure}{0.20\textwidth}
        \centering
        \begin{tikzpicture}

    %outline of the schedule
    \def\w{10}
    \def\h{4.5}
    \def\xright{\w} % fixed right x-bound
    \def\xleft{0} % fixed left x-bound

    %Sigma 2
    \begin{scope}[shift={(0, \h+3)}, yscale=-1, fill opacity=0.7]
        %Medium jobs
        \begin{scope}[fill=red]

            \pgfmathsetmacro{\ycur}{0} % Initial y value
            % Draw stacked rectangles
            \expandafter\stackrectright\jobMA
            \expandafter\stackrectright\jobMD
            \pgfmathsetmacro{\middlestart}{\ycur} % Adjust y value for next job
            \expandafter\stackrectright\jobME
            \pgfmathsetmacro{\rightstack}{\ycur}
            %Small jobs
            \begin{scope}[fill=blue]
                \expandafter\stackrectright\jobSC
                \expandafter\stackrectright\jobSE
            \end{scope}
            \pgfmathsetmacro{\ycur}{0}
            \expandafter\stackrectleft\jobMB{up}
            \expandafter\stackrectleft\jobMC{up}
            \expandafter\stackrectleft\jobMG{up}
            \expandafter\stackrectleft\jobMF{up}
            \begin{scope}[fill=blue]
                \expandafter\stackrectleft\jobSA{up}
                \expandafter\stackrectleft\jobSB{up}
                \expandafter\stackrectleft\jobSD{up}
                \expandafter\stackrectleft\jobSG{up}
                \expandafter\stackrectleft\jobSF{up}

                \draw[|-|, ultra thick] (-0.5, \ycur) -- +(0, -1.5) node[midway, yshift=0.5em, xshift=-2em] {\(h\)};

                \draw[dashed, ultra thick] (-0.5, \ycur) node[anchor=east, yshift=-0.5em] {\(h(\sigma_2)\)} -- (\w, \ycur);

                % \begin{scope}[color=gapcolor]
                %     \draw[dashed, thick] (-0.5, 1) -- +(\w+0.5, 0);
                %     \draw[decorate, decoration={brace, amplitude=10pt}] (0, \ycur) -- (0, 1) node[midway, xshift=-2.5em, yshift=-0.5em] {\(G_2\)};

                % \end{scope}

            \end{scope}

            % %Tiny jobs
            % \begin{scope}[fill=gray!50]
            %     \fill (4.7, 0) rectangle ++(0.8, 0.5)
            %     rectangle ++(-1.2, 0.3);

            % \end{scope}

        \end{scope}
        \draw[very thick] (0, \h) -- (0, 0) -- (\w, 0) -- (\w, \h);
    \end{scope}

    %Sigma 1
    %\begin{scope}[shift={(0, -\h - 1)}, fill opacity=0.7]
    \begin{scope}[shift={(0, -\h +1.5)}, fill opacity=0.7]

        % %Tiny jobs
        % \begin{scope}[fill=gray!50]
        %     \fill (0, 0) rectangle ++(1.5, 0.5)
        %     -- ++(0.2, 0) rectangle ++(-1.2, 0.5)
        %     -- ++(-0.3, 0) rectangle ++(2.4, 0.1);
        % \end{scope}

        %Big jobs
        \begin{scope}[fill=yellow]

            \pgfmathsetmacro{\ycur}{0} % Initial y value
            % Draw stacked rectangles
            \expandafter\stackrectright\jobBA
            \expandafter\stackrectright\jobBB
            \pgfmathsetmacro{\cutheight}{\ycur}
            \expandafter\stackrectright\jobBC
            \expandafter\stackrectright\jobBD
            \expandafter\stackrectright\jobBE

            \draw[dashed, ultra thick] (-0.5, \ycur) node[anchor=east, yshift=-0.5em] {\(h(\sigma_1)\)} -- (\w, \ycur);
            % \begin{scope}[color=gapcolor]
            %     \draw[decorate, decoration={brace, amplitude=10pt}] (0, 0) -- (0, \ycur) node[midway, xshift=-2.5em, yshift=-0.5em] {\(G_1\)};

            % \end{scope}
        \end{scope}

        \draw[very thick] (0, \h) -- (0, 0) -- (\w, 0) -- (\w, \h);
    \end{scope}

\end{tikzpicture}
        \subcaption{after \ref{alg:tiny_3}}
        \label{fig:many_after_transformation}
    \end{subfigure}
    \hfill
    \begin{subfigure}{0.20\textwidth}
        \centering
        \begin{tikzpicture}

    %outline of the schedule
    \def\w{10}
    \def\h{4.5}
    \def\xright{\w} % fixed right x-bound
    \def\xleft{0} % fixed left x-bound

    %Sigma 2
    \begin{scope}[yscale=-1, shift={(0, -\h)}, fill opacity=0.7]

        %Medium jobs
        \begin{scope}[fill=red]

            \pgfmathsetmacro{\ycur}{0} % Initial y value
            % Draw stacked rectangles
            \expandafter\stackrectright\jobMA
            \expandafter\stackrectright\jobMD
            \pgfmathsetmacro{\middlestart}{\ycur} % Adjust y value for next job
            \expandafter\stackrectright\jobME
            \pgfmathsetmacro{\rightstack}{\ycur}
            %Small jobs
            \begin{scope}[fill=blue]
                \expandafter\stackrectright\jobSC
                \expandafter\stackrectright\jobSE
            \end{scope}
            \pgfmathsetmacro{\ycur}{0}
            \expandafter\stackrectleft\jobMB{up}
            \expandafter\stackrectleft\jobMC{up}
            \expandafter\stackrectleft\jobMG{up}
            \expandafter\stackrectleft\jobMF{up}
            \begin{scope}[fill=blue]
                \expandafter\stackrectleft\jobSA{up}
                \expandafter\stackrectleft\jobSB{up}
                \expandafter\stackrectleft\jobSD{up}
                \expandafter\stackrectleft\jobSG{up}
                \expandafter\stackrectleft\jobSF{up}

                %\draw[|-|] (-0.5, \ycur) -- +(0, -1.5) node[midway, xshift=-1.5em, thick] {\(h\)};

                \draw[|-|, ultra thick] (-0.5, \ycur) -- +(0, -1.75) node[midway, xshift=-2em, thick] {\(h'\)};

            \end{scope}

            %Tiny jobs
            \begin{scope}[fill=gray]
                \filldraw (4.7, 0) rectangle ++(0.8, 0.5)
                rectangle ++(-1.2, 0.3)
                ++(-0.2, 0) rectangle ++(1, 0.7)
                ++(-0, -0.3) rectangle ++(0.7, 0.5)
                ++(0.6, 0) rectangle ++(-2.2, 0.5)
                ++(2, 0) rectangle ++(0.3, 1);
            \end{scope}

            \begin{scope}[color=gapcolor]
                \draw[dashed, thick] (-0.5, 2.2) -- +(\w+0.5, 0);
                \draw[|-|, ultra thick, color=black] (-0.5, 3.35) -- (-0.5, 2.2) node[midway, xshift=-2em, yshift=-0.0em] {\(I_2\)};
                \draw[dashed, thick] (-0.5, 3.35) -- +(\w+0.5, 0);
            %\draw[|-|, ultra thick, color=black] (-0.5, \ycur) -- (-0.5, \righttop) node[midway, xshift=-2em, yshift=-0.0em] {\(I_2'\)};

            \end{scope}

        \end{scope}
        \draw[very thick] (0, \h) -- (0, 0) -- (\w, 0) -- (\w, \h);
        \draw[dashed, very thick] (-0.5, 0) node[anchor=east, yshift=0.0em] {\(h(\sigma)\)}-- +(\w+0.5, 0);
    \end{scope}

    %Sigma 1
    %\begin{scope}[shift={(0, -\h - 1)}, fill opacity=0.7]
    \begin{scope}[shift={(0, -\h +1.5)}, fill opacity=0.7]

        %Tiny jobs
        \begin{scope}[fill=gray]
            \filldraw (0, 0) rectangle ++(1.5, 0.5)
            -- ++(0.2, 0) rectangle ++(-1.6, 0.6)
            -- ++(0.3, 0) rectangle ++(2.4, 0.3);
        \end{scope}

        %Big jobs
        \begin{scope}[fill=yellow]

            \pgfmathsetmacro{\ycur}{0} % Initial y value
            % Draw stacked rectangles
            \expandafter\stackrectright\jobBA
            \expandafter\stackrectright\jobBB
            \pgfmathsetmacro{\cutheight}{\ycur}
            \expandafter\stackrectright\jobBC
            \expandafter\stackrectright\jobBD
            \expandafter\stackrectright\jobBE

            \begin{scope}[color=gapcolor]
                \draw[dashed, thick] (-0.5, 2.4) -- +(\w+0.5, 0);
                \draw[|-|, ultra thick, color=black] (-0.5, 1.4) -- (-0.5, 2.4) node[midway, xshift=-2em, yshift=-0.5em] {\(I_1\)};
                \draw[dashed, thick] (-0.5, 1.4) -- +(\w+0.5, 0);

            \end{scope}
        \end{scope}

        \draw[very thick] (0, \h) -- (0, 0) -- (\w, 0) -- (\w, \h);
    \end{scope}

\end{tikzpicture}
        \subcaption{$\sigma$ in \ref{lem:many_tiny}}
        \label{fig:many_sigma}
    \end{subfigure}
    \caption{$\sigma_1$ and $\sigma_2$ (mirrored)}
    \label{fig:many}
    \Description{
        Depicted are three figures.
        In all cases we see the two schedules $\sigma_1$ and $\sigma_2$.
        For a better visualization we see $\sigma_2$ upside down above $\sigma_1$, so that the higher stack of $\sigma_1$ is directly above the medium and small jobs of $\sigma_2$.
        In the second figure me move the medium and small jobs from $\sigma_2$ up and sort them into the higher stack of $\sigma_1$.
        Since this might also affect the placement of the tiny jobs, we removed them in this picture.
        In the third figure, we move $\sigma_2$ down as much as possible and therefore by at least $h$ plus the height of the small and medium jobs which were sorted into the higher stack.
        Also the tiny jobs are again in the visualization.
        It might be the case that we have used all tiny jobs after this transformation, but since we did not change the makespan it is valid.
    }
\end{figure}

\begin{lemma}
    \label{lem:many_tiny}
    Given $\sigma_1, \sigma_2$ as constructed by \cref{alg:tiny_4}, we can construct a schedule $\sigma$ with $h(\sigma)\leq \frac{4}{3}\opt +\pmax$.
\end{lemma}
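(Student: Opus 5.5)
We combine $\sigma_1$ and $\sigma_2$ as in \cref{fig:many_sigma}: keep $\sigma_1$ at the bottom and place $\sigma_2$ rotated on top, so that the single sparse interval $I_1$ (the part of the big-job stack of width in $(\frac12,1]$ next to the tiny jobs) faces the sparse interval $I_2$ of $\sigma_2$ (the overhang of the higher stack, width in $(\frac14,\frac12]$). We then shift $\sigma_2$ down as far as feasible. The argument reduces to two facts. First, before the tiny jobs are added, this interleaved placement is no taller than the schedule we had before \cref{alg:tiny_3}, which is bounded by \cref{obs:alt_schedule_tiny2} or \cref{lem:LB opt}. Second, after \cref{alg:tiny_4} the area deficit of the combined schedule is at most $\frac34\pmax$, and we conclude by \cref{obs:missing_area_to_ratio}.

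For the first fact, compare before and after \cref{alg:tiny_3}. The non-big jobs $\hat\calJ$ removed from the top of $\sigma_1$ have width at most $\frac12$. When they are appended to the higher stack of $\sigma_2$ and this stack is mirrored over $\sigma_1$, they sit in the region where only big jobs of $\sigma_1$ run. They are therefore processed next to the same big jobs they were next to before, or next to wider ones, since the stacking is by non-increasing width. So the downward shift of $\sigma_2$ is at least $h$ plus the height of $\hat\calJ$, and the makespan does not grow.

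For the second fact, note that \cref{alg:tiny_4} always places a tiny job into the currently higher interval via \LS. Consider any time below the start of the last tiny job placed in a given interval. The free capacity there is less than the width of that job, which is at most $\frac14$. Hence the utilization is at least $\frac34$ there, and there is no deficit. The only remaining deficit lies in a top strip of height at most $\pmax$ per interval, together with the residual region where the two intervals overlap after the shift. In the overlap, the big job contributes more than $\frac12$ and the medium or small job more than $\frac14$, so the utilization exceeds $\frac34$.

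The main obstacle is the case distinction on whether the tiny jobs exhaust both intervals.

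\emph{Case: the tiny jobs do not fill the shorter interval.} Because \cref{alg:tiny_4} always fills the higher interval, the two uncovered heights differ by at most $\pmax$. We then shift $\sigma_2$ so that the uncovered parts overlap. The leftover has height at most $\pmax$ with utilization above $\frac14$ on one side, which gives deficit at most $\frac12\pmax\le\frac34\pmax$.

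\emph{Case: the tiny jobs exceed both intervals and are stacked on the whole schedule.} The argument from the empty-schedule remark at the start of \cref{sec:tiny-jobs} applies: every time below the last tiny job's start has utilization above $\frac34$, so the deficit is at most $\frac34\pmax$.

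I would be careful that shifting $\sigma_2$ down never makes the tiny jobs infeasible. I would handle this by performing the shift before inserting the tiny jobs, and measuring the intervals $I_1,I_2$ in the already-interleaved schedule.
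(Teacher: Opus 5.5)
Your proposal has a genuine gap in the deficit argument, which is where the paper's proof does its real work. In your first case you ``shift $\sigma_2$ so that the uncovered parts overlap'', but this need not be feasible. An uncovered time of $I_1$ carries a big job of width in $(\frac12,\frac34)$, and an uncovered time of $I_2$ can carry a single job of width in $(\frac14,\frac12]$; these need not fit side by side (e.g.\ $0.7+0.45>1$). So even after shifting as far as possible, non-overlapped parts of \emph{both} intervals may remain, and your ``leftover of height at most $\pmax$'' is unjustified. The same problem breaks your overflow case. The combined schedule does not have a single \LS top but two facing ones (that of $\sigma_1$ and the mirrored one of $\sigma_2$). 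Each has height up to $\pmax$ and may be covered only by tiny jobs, so the empty-schedule argument does not apply, and bounding the two tops separately gives only $\frac32\pmax$. The idea you are missing is the maximality of the shift $h'$. If $\sigma_2$ cannot be pushed further down, then $q_{\min}(I_1)+q_{\min}(I_2)>1$. Pairing the two intervals (w.l.o.g.\ $|I_1|\ge|I_2|$) then gives area at least $q_{\min}(I_1)|I_1|+q_{\min}(I_2)|I_2|\ge (q_{\min}(I_1)+q_{\min}(I_2))|I_2|>|I_2|$ on them. Hence the deficit is at most $\frac34|I_1|\le\frac34\pmax$ whenever $|I_1|\le\pmax$.

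You also mishandle the regime in which no deficit bound is available. You claim that the remaining deficit lies in a top strip of height at most $\pmax$ per interval, and that the two uncovered heights differ by at most $\pmax$. Both fail when there are few tiny jobs: \cref{alg:tiny_4} then puts all of them into the larger interval, which can stay arbitrarily taller than $\pmax$, and the uncovered parts need not be overlappable. Your first fact is the right ingredient here, but you never connect it to the schedules output by \cref{alg:tiny_4}. The paper splits on whether the larger interval satisfies $|I_1|>\pmax$. If so, all tiny jobs of $\sigma_1$ finish below $h(\sigma_1)-h$, where $h$ is the total height of $\hat{\calJ}$, because \LS with tiny jobs leaves a sparse top of height at most $\pmax$. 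So a shift $h'\ge h$ is still available, and the height is at most the one already bounded in \cref{obs:alt_schedule_tiny2} (or \cref{obs:assump_sigma_2}). Otherwise the pairing argument above applies.

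Two smaller points. First, your justification of the first fact points the wrong way: sitting next to \emph{wider} big jobs would hurt feasibility. What you need is that the thinnest height-$h$ part of the merged stack is pointwise no wider than $\hat{\calJ}$ was in $\sigma_1$. Second, your feasibility worry at the end is unnecessary, since $h'$ is taken as the largest feasible shift of the completed schedules, tiny jobs included. Reordering the steps would change the construction and require a new analysis.
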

\begin{proof}
    The proof is split in two parts.
    Either we do not change the height of the resulting schedule, then we have the same height as in \cref{obs:alt_schedule_tiny2} (or \cref{obs:assump_sigma_2}).
    Or the height does change, then we can bound the area deficit to \(\frac{3}{4}\pmax\).

    Construct the new schedule $\sigma$ by placing $\sigma_1$ at $0$ and (rotated) $\sigma_2$ at $h(\sigma_1)-h'$ such that \(h'\in \mathbb{Q}\) is maximal.
    The resulting schedule is illustrated in \Cref{fig:many_sigma}.
    Note that in $\sigma$ the height of both sparse intervals \(I_1\) and \(I_2\) is reduced by exactly \(h'\).
    Denote now the height of all non-big jobs that were removed from \(\sigma_1\) during \cref{alg:tiny_3} as
    \(h \coloneqq \sum_{j \in \hat{\calJ}}{p_j}\) where \(\hat{\calJ}\) is defined as in \cref{alg:tiny_3}.
    Assume w.l.o.g. that \(I_1\) is the larger interval, i.e., \(|I_1|\geq |I_2|\). We can now distinguish two cases.

    \case{1} \(|I_1| > \pmax\).
    In this case in $\sigma_1$ all tiny jobs are completed below $h(\sigma_1)-h$.
    Otherwise, the interval would be smaller, since \LS with tiny jobs creates sparse intervals with height at most $p_{\max}$.
    Therefore, we can choose $h'\geq h$.
    Which implies that, the height of $\sigma$ is at most $h(\sigma_1)=h(\sigma_2)-h$ which is just the height of \cref{obs:alt_schedule_tiny2}.

    \case{2} \(|I_1| \leq \pmax\).
    Note that the machine usage in \(I_1\) and \(I_2\) is non-increasing by \cref{def:named_schedules} since we use \LS.
    Consider now the width of the two sparse intervals $I_1$ and $I_2$ after combining the schedules.
    We know that $q_{min}(I_1)+q_{min}(I_2)>1$ otherwise we could choose a larger $h'$ which is a contradiction to the maximality of $h'$.
    Therefore, the area in the interval is at least
    \begin{align*}
        q_{min}(I_1)|I_1|+q_{min}(I_2)|I_2|\geq q_{min}(I_1)|I_2|+q_{min}(I_2)|I_2|>1\cdot |I_2|.
    \end{align*}
    So the area deficit is at most $\frac{3}{4}|I_1|\leq \frac{3}{4}\pmax$.
    With \cref{obs:missing_area_to_ratio} we have proven the claim.
\end{proof}
Since \cref{alg:tiny_4} only terminates when all tiny jobs are placed, we have proven \cref{thm:pts}.

\section{Extension to \MCS}
In this section, we show how to extend our algorithm to \MCS (MCS), where we have $N > 1$ disjoint clusters of identical width.
We aim to minimize the global makespan, i.e., the maximum completion time of any job across all clusters.
Note that there exists a unique $a\in\{0,1,2\}$ such that $N=3i+1+a$ for some $i\in \mathbb{Z}_{\geq 0}$.
Given an instance, let $\opt$ be the optimal makespan of the MCS problem and let $\opt_{\text{PTS}}$ be the optimal makespan of the corresponding PTS-setting,
i.e., when scheduling all jobs on a single cluster.

To prove our first MCS result, we use the partitioning technique that was originally proposed by Jansen and Rau~\cite{JansenR19} (\cref{alg:mcs_asymp}).
The improved approximation ratio of our PTS algorithm (\cref{sec:pts}) directly improves their result, as stated in their work.
We state the algorithm here tailored to our PTS algorithm for completeness.
\begin{algorithm}
    \caption{MCS1}
    \label{alg:mcs_asymp}
    \begin{algorithmic}[1]
        \State Run the PTS algorithm (\cref{sec:pts}) with approximation ratio \(\frac{4}{3}\opt_{\text{PTS}} +\pmax\) on a single cluster.
        \State Cut the schedule into \(2i+1\) pieces of equal height with \(2i\) cuts.
        \State Remove the jobs processed during the cuts.
        \State Assign every piece its own cluster.
        \State Schedule the removed jobs from two cuts on a new cluster.
    \end{algorithmic}
\end{algorithm}

\mcsAsymp*
\begin{proof}
    Let $h$ be the height of the schedule returned by our PTS algorithm.
    Then the makespan $T$ over all clusters with a single piece can be bounded by:
    \begin{align*}
        \frac{h}{2i+1}
         & \le \left(\frac{4}{3}\opt_{\text{PTS}} + \pmax\right) / \left(2i + 1\right)                 \\
         & \leq \left(\frac{4}{3}\opt\cdot N + \pmax\right) / \left(2i + 1\right)                      \\
         & = \frac{12i + 4+4a}{6i + 3}\opt + \frac{3}{6i + 3}\pmax                                     \\
         & \leq \frac{12i+7+4a}{6i+3}\opt                                              & \pmax\leq\opt
    \end{align*}
    The first inequality holds due to $\cref{thm:pts}$ and the second inequality holds because $\opt \geq \frac{\opt_{\text{PTS}}}{N}$.
    The final equation holds because we can write $N$ uniquely as $N=3i+1+a$ for $i\in\mathbb{Z}_{\geq 0}$ and $a\in \{0,1,2\}$.
    If now $N \rightarrow \infty$, we have $i\rightarrow\infty$ and therefore, $T \overset{N\rightarrow\infty}{\le} 2\opt$.
    Finally, every cluster where two cuts have been placed has makespan of at most \(2\pmax\leq 2\opt\).
    Therefore, we only have to consider the clusters to which the pieces have been assigned.
\end{proof}

\newcolumntype{C}{>{\centering\arraybackslash}X}
\begin{table}[htbp]
    \centering
    \renewcommand{\arraystretch}{1.5}
    \begin{tabularx}{\columnwidth}{| >{\bfseries}l | C | C | C | C |}
        \hline
        N     & 4                                     & 7                                       & 10                           & 13                           \\ \hline
        Ratio & $\frac{19}{9} \approx 2.\overline{1}$ & $\frac{31}{15} \approx 2.0\overline{6}$ & $\frac{43}{21} \approx 2.05$ & $\frac{55}{27} \approx 2.04$ \\ \hline
    \end{tabularx}
    \caption{Convergence of \cref{alg:mcs_asymp} if $N = 3i+1$}
    \vspace{-7mm}
\end{table}

Note that the approximation ratio converges to \(2\) but has a worst-case ratio of \(3\) for $i\geq 1$.
Motivated by this, our next step is to modify the algorithm to have a better worst-case bound for a small number of clusters.

\begin{algorithm}
    \caption{MCS2}
    \label{alg:mcs_absolute}
    \begin{algorithmic}[1]
        \State \textbf{If \(N = 3i\) then} Use the algorithm by Jansen and Rau~\cite{JansenR19}.
        \State \textbf{else if \(N = 3i+1\) then} Use MCS1.
        \State \textbf{else if \(N=3i+2\) then}
        \State Consider the PTS schedule $\sigma$.
        \State Place all jobs which were completed after $h(\sigma)-\pmax$ on the additional cluster.
        \State Use MCS1 with \(N \coloneqq 3i+1\).
        \State \textbf{end if}
    \end{algorithmic}
\end{algorithm}

\begin{figure*}[t]
    \centering
    \includegraphics[width=0.8\textwidth]{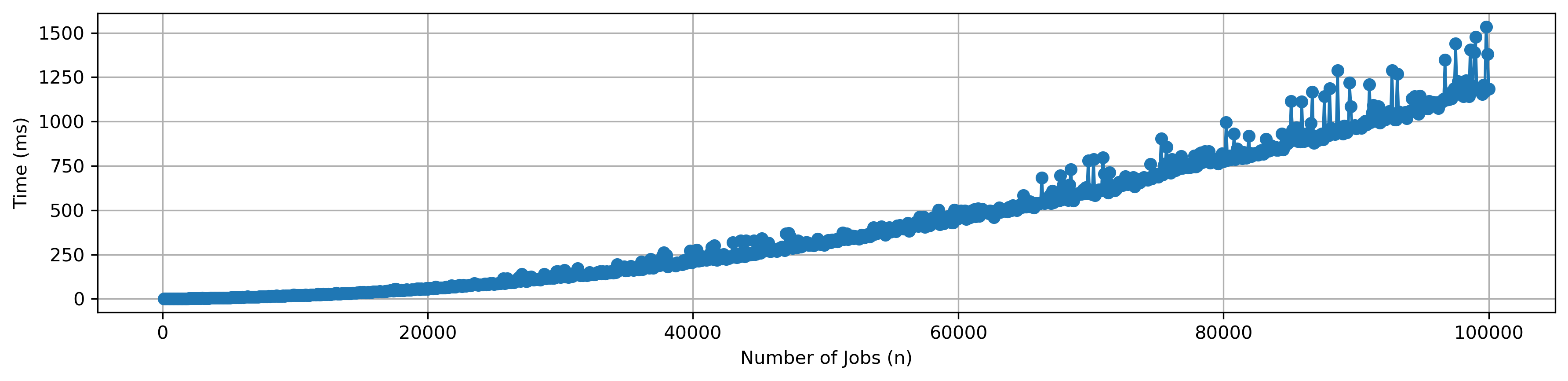}
    \caption{Testing the Parallel Task Scheduling algorithm.}
    \label{fig:runtime_benchmark}
    \Description{
        In this figure we see a plot with 'number of jobs' on the x-axis and 'time in ms' which the algorithm needs to run the instance on the y-axis.
        Mostly the plot shows that the algorithm runs in practical time but more explicit, we see that the algorithm needs about 250 ms to execute an instance with 40000 jobs, about 500 ms to execute an instance with 60000 jobs and about 750 ms to execute an instance with 80000 jobs.
    }
\end{figure*}

\begin{lemma}
    \label{lem:mcs94-1}
    If $N\geq 3$ then \cref{alg:mcs_absolute} has approximation ratio \(\frac{9}{4}\opt\).
\end{lemma}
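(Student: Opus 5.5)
We argue by the residue of \(N\) modulo 3, writing \(N=3i+1+a\) as in the paper, so \(N=3i\), \(N=3i+1\) and \(N=3i+2\) correspond to the three branches of \cref{alg:mcs_absolute}. Throughout, we use the lower bounds \(\opt \ge \opt_{\text{PTS}}/N\) and \(\opt\ge \pmax\). For a piece cluster, the bound \(h/(2i+1)\) from \cref{thm:mcs} is all we need. Every cut cluster holds the jobs crossing two cuts, stacked on top of each other, so its height is at most \(2\pmax\le 2\opt\) in all branches.

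\case{\(N=3i\)} Here the algorithm calls Jansen and Rau~\cite{JansenR19}. We invoke their stated absolute ratio of \(9/4\) for \(N\equiv 0 \bmod 3\) and there is nothing further to prove.

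\case{\(N=3i+1\), \(i\ge 1\)} We take the computation from the proof of \cref{thm:mcs} with \(a=0\). It gives piece height at most \(\frac{4}{3}\cdot\frac{N}{2i+1}\opt+\frac{\pmax}{2i+1}=\frac{12i+4}{6i+3}\opt+\frac{1}{2i+1}\pmax\). Using \(\pmax\le\opt\), this is at most \(\frac{12i+7}{6i+3}\opt\). The expression is decreasing in \(i\), and at \(i=1\) it equals \(19/9<9/4\). For \(N=1\) the statement does not apply, since \(N\ge 3\).

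\case{\(N=3i+2\), \(i\ge1\)} This is the main obstacle. Naively running MCS1 on the \(3i+1\) remaining clusters with \(\opt_{\text{PTS}}\le N\opt=(3i+2)\opt\) gives about \(\frac{4(3i+2)}{3(2i+1)}+\frac1{2i+1}\), which is \(23/9>9/4\) at \(i=1\). We therefore have to exploit the removal of the top strip.

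The jobs completing after \(h(\sigma)-\pmax\) in the PTS schedule \(\sigma\) overlap a common window, and each has height at most \(\pmax\). We plan to argue that they fit on one cluster within height \(2\pmax\le2\opt\) by keeping their relative positions, shifted down to start at time \(0\). After removing them, the remaining schedule has height at most \(h(\sigma)-\pmax\le\frac43\opt_{\text{PTS}}\).

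MCS1 with \(2i+1\) pieces then gives piece height at most \(\frac{4}{3}\cdot\frac{3i+2}{2i+1}\opt\). At \(i=1\) this equals \(20/9<9/4\), and it decreases toward \(2\) as \(i\) grows. Each cut cluster again costs at most \(2\pmax\).

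The delicate points are twofold. First, we must check that removing the top strip does not invalidate the cut argument, i.e., that each job crosses at most one cut. Second, we must check that the shifted top jobs really form a feasible schedule of height at most \(2\pmax\). Feasibility holds because they come from a feasible schedule and all lie within \([h(\sigma)-2\pmax,h(\sigma))\). Finally, we take the maximum over all cluster types, which is at most \(\max(9/4,19/9,20/9,2)=9/4\).
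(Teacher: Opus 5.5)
Your proposal is correct and follows the paper's proof essentially step for step: the same case split on \(N \bmod 3\), deferring \(N=3i\) to Jansen and Rau, the MCS1 bound \(\frac{12i+7}{6i+3}\opt\le\frac{19}{9}\opt\) for \(N=3i+1\), and the top-strip removal giving \(\frac{12i+8}{6i+3}\opt\le\frac{20}{9}\opt\) for \(N=3i+2\). Your first ``delicate point'' (each job crossing at most one cut) is not needed, since a job crossing several cuts can go to any one of those cuts' clusters and the \(2\pmax\) bound only uses that the jobs assigned to one cut all run at that cut; your shift-down argument also spells out the \(2\pmax\) bound for the extra cluster, which the paper only asserts.
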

\begin{proof}
    We consider the three cases.

    \case{1} \(N=3i\).
    In this case, we refer to the proof in~\cite{JansenR19}.

    \case{2} \(N=3i+1\).
    Here, the ratio of MCS1 is:
    \begin{align*}
             & \frac{12i+4}{6i+3}\opt+\frac{3}{6i+3}\pmax                           \\
        \leq & \frac{12+4}{6+3}\opt+\frac{3}{6+3}\opt     & i\geq 1, \pmax\leq \opt \\
        =    & \frac{19}{9}\opt<\frac{9}{4}\opt
    \end{align*}

    \case{3} \(N=3i+2\).
    Here, we reduce the makespan of the PTS schedule by at least \(\pmax\) by scheduling jobs which were completed after $h(\sigma)-\pmax$ on a separate cluster.
    Note that this cluster has a height of at most $2\pmax\leq 2\opt$.
    The remaining PTS schedule has then a makespan of at most \(\frac{4}{3}\opt_{\text{PTS}}\).
    Similar to the analysis in the proof of \cref{thm:mcs} we get:
    \begin{align*}
             & \frac{12i+8}{6i+3}\opt                                \\
        \leq & \frac{12+8}{6+3}\opt             & i\geq 1            \\
        =    & \frac{20}{9}\opt<\frac{9}{4}\opt &         & \qedhere
    \end{align*}
\end{proof}

The now remaining cases are \(N=1\) and \(N=2\).
Since \(N=1\) is PTS, we can use a \(2\)-approximation \cite{DBLP:journals/siamcomp/GareyG75,DBLP:conf/soda/LudwigT94,DBLP:conf/spaa/TurekWY92} to solve it.
Thus, we now focus on \(N=2\).

\begin{lemma}
    \label{lem:mcs94-2}
    There is an algorithm for MCS with approximation ratio \(\frac{9}{4}\opt\) for \(N=2\).
\end{lemma}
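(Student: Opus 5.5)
We need a 9/4-approximation for two clusters. The plan is to run the PTS algorithm of \cref{thm:pts} on a single cluster to obtain a schedule $\sigma$ of height $h \le \frac43\opt_{\text{PTS}}+\pmax$, and then split it across the two clusters. Since $\opt \ge \opt_{\text{PTS}}/2$, we have $h \le \frac83\opt+\pmax$, which is too large to put on one cluster. We therefore cut $\sigma$ at a suitable time $t$.

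\textbf{Construction.} Cluster 1 receives every job that starts before $t$; these complete by $t+\pmax$. Cluster 2 receives every job that starts at or after $t$, shifted down by $t$, so its height is at most $h-t$. Balancing $t+\pmax = h-t$ gives a makespan of about $\frac h2+\frac{\pmax}{2} \le \frac43\opt+\pmax$. Using $\pmax\le\opt$, this is at most $\frac73\opt$, which is slightly above $\frac94$.

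\textbf{Refinement by case split on $\pmax$.}
\begin{itemize}
\item If $\pmax \le \frac{11}{12}\opt$, the bound $\frac43\opt+\pmax \le \frac94\opt$ already holds and we are done.
\item Otherwise $\pmax > \frac{11}{12}\opt > \frac{\opt}{2}$, and we treat the \emph{huge} jobs $H=\{j: p_j>\opt/2\}$ separately.
\end{itemize}

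\textbf{Structure of the huge jobs.} In an optimal two-cluster schedule no two huge jobs can run one after another on the same cluster. Hence, per cluster, the huge jobs are pairwise concurrent, and their total width is at most $1$ on each cluster. So $H$ can be packed by a single bin-packing step into two groups, each of total width at most $1$; each group is then scheduled side by side starting at time $0$ on its own cluster.

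\textbf{Handling the remaining jobs.} All other jobs have $p_j\le \opt/2$. We run the PTS algorithm on them and apply the cutting argument above. The additive term becomes $\pmax' \le \opt/2$, which gives $\frac43\cdot(\text{load})+\frac{\opt}{2}$. To account for the huge jobs, we would instead use \LS with width-sorted filling on top of the huge blocks. The area and deficit arguments of \cref{obs:missing_area_to_ratio} applied to each cluster should then yield $\frac94\opt$.

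\textbf{Main obstacle.} The hardest step is this last combination. The partition of $H$ is determined by an unknown optimum, and two-way width partition is itself NP-hard. What I would try first is to guess $\opt$ by binary search and sort $H$ by width. I would then argue that the greedy first-fit into two bins fails only when wide jobs alone force a height of $\pmax$ plus an area bound. If that does not close the gap, the fallback is to use the $\frac12$ slack: place the huge jobs of one cluster in the top region $[\frac94\opt-\pmax,\frac94\opt)$, and fill the lower region of height $\ge \frac43\opt$ with the PTS schedule restricted to the rest, splitting via area over both clusters.
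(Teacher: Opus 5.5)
\textbf{There is a genuine gap: the case $\pmax>\frac{11}{12}\opt$ is not proved, and the tools you propose cannot prove it.}

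Your first branch is sound. Stacking the two clusters of an optimal solution gives $\opt_{\text{PTS}}\le 2\opt$. Cutting the single-cluster schedule at $t=(h-\pmax)/2$ then yields two feasible clusters of height at most $(h+\pmax)/2\le\frac43\opt+\pmax$, which is at most $\frac94\opt$ when $\pmax\le\frac{11}{12}\opt$.

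The second branch fails in two places.

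\emph{Splitting $H$.} Dividing $H$ into two groups of total width at most $1$ is a \textsc{Partition} instance. $H$ need not have constant size, since there can be arbitrarily many thin jobs of height above $\opt/2$. So you can neither enumerate the splits nor solve the problem directly. Moreover, $H$ is defined through the unknown $\opt$, so a binary search would need a test that certifies $T<\opt$ whenever it fails, and you do not have one.

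\emph{Combining with the remaining jobs.} Even given the split, the combination is only asserted (``should then yield''). Your fallback does not fit under the guarantees you have. If the huge jobs are thin, the remaining jobs can have $\opt_{\text{PTS}}\approx 2\opt$. Their PTS schedule is then only guaranteed height about $\frac83\opt+\frac12\opt$, and cutting it gives pieces of height up to about $\frac{11}{6}\opt$. The lower regions, however, have height $\frac94\opt-\pmax<\frac43\opt$ (not $\ge\frac43\opt$ as you write).

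The underlying obstruction is that any route through the $\frac43\opt_{\text{PTS}}+\pmax$ guarantee pays an additive $\pmax$. That term is exactly what is fatal when $\pmax\approx\opt$.

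The paper does not use the PTS algorithm for $N=2$ at all. Let $A$ be the total area, so $A/2\le\opt$.
\begin{itemize}
\item At most $7$ jobs have area above $A/8$. The paper tries all $64$ assignments of these jobs to the two clusters; one of them agrees with an optimal schedule, and the best outcome is kept.
\item Every remaining job goes to the cluster with currently smaller area.
\item \LS is run on each cluster, whose height is then at most $\max(2\pmax,2\cdot\text{area})$.
\end{itemize}
If the final area difference is at most $A/8$, each cluster has area at most $\frac{9}{16}A$ and height at most $\max(2\opt,\frac98A)\le\frac94\opt$. Otherwise the heavier cluster never received a light job. For the correct guess, its jobs are then a subset of the jobs of one optimal cluster, and \LS gives at most $2\opt$. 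The lighter cluster has area below $A/2$, so it is also within $2\opt$.

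The idea you are missing is to confine the guessing to a constant number of jobs, selected by area rather than by height. Each cluster is then bounded by area and $\pmax$ with factor $2$, instead of via the $\frac43$-plus-$\pmax$ schedule.
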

\begin{proof}
    Denote by \(A \coloneqq\sum_{j\in \calJ}p_jq_j\) the total area.
    Note that there are at most \(7\) jobs which have more than \(\frac{A}{8}\) area
    and there are at most \(2^{7-1}=64\) possibilities to assign those jobs to one of the two clusters.
    We perform the following steps for all these \(64=O(1)\) possible partitions \(P_1\sqcup P_2\)
    and can therefore assume that we found an optimal partition \(P_1^*\sqcup P_2^*\) of these jobs.
    Afterward we iteratively place the next job on the cluster with the least assigned area and execute \LS on these two partitions.
    W.l.o.g. assume that \(P_2^ *\) has more area assigned than \(P_1^*\).

    \case{1} \(\sum_{j\in P_2^*}p_jq_j-\sum_{j\in P_1^*}p_jq_j\leq \frac{1}{8}A\).
    Note that the partition with more area can have at most \(\frac{A}{8}\) area more than the other
    and therefore can have at most \(\frac{A}{2}+\frac{}{16}=\frac{9}{16}A\) area.
    The List-Scheduler schedules the jobs with height \(\leq 2\pmax\leq 2\opt\)
    or with height \(\leq2\cdot \frac{9}{16}A=\frac{9}{8}A\) \cite{DBLP:journals/siamcomp/GareyG75}.
    Using \(\frac{A}{2}\leq \opt\), which holds because \(\frac{A}{2}\) is the height of a perfect schedule
    where both clusters have the same height and at no time is a machine idle,
    we get \(\frac{9}{8}A\leq \frac{9}{4}\opt\).
    Furthermore, the height of the other cluster is also bounded by either \(2\pmax\) or \(A\leq 2\opt\).

    \case{2} \(\sum_{j\in P_2^*}p_jq_j-\sum_{j\in P_1^*}p_jq_j> \frac{1}{8}A\).
    In this case, the partition which had more area before the assignments of jobs
    with area \(\leq \frac{A}{8}\) has now still more area and no job with area \(\leq \frac{A}{8}\) assigned.
    Since we assumed that the partitions were optimal for the jobs with area \(>\frac{A}{8}\),
    this partition contains only a subset of jobs of an optimal partition \(P_2^{\opt}\).
    Therefore, the jobs are scheduled with height
    \(\leq 2\opt_{\text{PTS}}(P_2^*)\overset{P_2^*\subseteq P_2^{\opt}}{\leq} 2\opt_{\text{PTS}}(P_2^{\opt})= 2\opt\).
    Note that the other partition has now less than \(\frac{A}{2}\) area and is scheduled with height \(\leq 2\opt\) with the same argument as above.
\end{proof}
\cref{lem:mcs94-1} and \cref{lem:mcs94-2} directly imply our last result.
\mcs*

\section{Empirical Results}
In this section, we evaluate the computational performance of the PTS algorithm introduced in \cref{sec:pts}.
The experiments were conducted on a system equipped with an AMD Ryzen 9 3900X processor and 32 GB of RAM.
For our test instances, both the processing times and the machine requirements of the jobs were drawn from a uniform random distribution.
The results of this evaluation are presented in \cref{fig:runtime_benchmark}, illustrating the scalability of our approach.
The source code for our implementation is publicly available at \url{https://github.com/bennet-edler/PTS-Paper/}.

\section{Open Questions}
Since there does not exist an algorithm for \PTS with ratio smaller than $\frac{3}{2}$ unless $P=NP$,
eliminating the additive term is highly unlikely.
We note that there exists a pseudo-polynomial algorithm for the contiguous variant of this problem (strip packing)
with ratio $\frac{5}{4}+ \varepsilon$~\cite{JansenR19Pseudo}.
This naturally raises the question of whether our result can be further refined to obtain a linear-time algorithm
with an approximation ratio of $\frac{5}{4}\opt+\pmax$.

Additionally, since we heavily exploit the non-contiguousness of the problem,
a natural question to ask is whether our algorithm can be adapted to solve the contiguous version of the \PTS problem (strip packing).

\begin{acks}
    We thank the anonymous reviewers for their insightful feedback and helpful comments.

    Funded by the \grantsponsor{dfg}{Deutsche Forschungsgemeinschaft (DFG, German Research Foundation)}{https://gepris.dfg.de/gepris/projekt/453769249} – Project Number \grantnum[https://gepris.dfg.de/gepris/projekt/453769249]{dfg}{453769249}
\end{acks}

\bibliographystyle{ACM-Reference-Format}
\bibliography{src}

\end{document}